\def\be{\begin{equation}}
\def\ee{\end{equation}}
\def\bra#1{\mathinner{\langle{#1}|}}
\def\ket#1{\mathinner{|{#1}\rangle}}
\def\expect#1{\langle#1\rangle}
\def\ul#1{{\underline{#1}}}
\newcommand{\un}[1]{{{#1}'}}
\def\ol#1{\overline{#1}}
\def\vmbb#1{\mathbb{#1}}
\def\aa{{\rm a}}
\def\bbb{{\rm b}}
\newtheorem{lem}{Lemma}
\newtheorem{thm}{Theorem}
\newcommand{\bbra}[1]{\langle\!\langle{#1}|}
\newcommand{\kket}[1]{|{#1}\rangle\!\rangle}
\newcommand{\braket}[2]{\langle #1 \vert #2 \rangle}
\newcommand{\ave}[1]{{\langle #1\rangle}}
\newcommand{\rvac}{\ket{\rm vac}}
\newcommand{\rrvac}{\kket{\rm vac}}
\newcommand{\lvac}{\bra{\rm vac}}
\newcommand{\llvac}{\bbra{\rm vac}}
\newcommand{\ua}{\uparrow}
\newcommand{\da}{\downarrow}
\newcommand{\ii}{ {\rm i} }
\newcommand{\dd}{ {\rm d} }
\newcommand{\ZZ}{\mathbb{Z}}
\newcommand{\RaR}{\mathbb{R}}
\newcommand{\CC}{\mathbb{C}}
\newcommand{\y}{{\rm y}}
\newcommand{\x}{{\rm x}}
\newcommand{\z}{{\rm z}}
\newcommand{\PP}{{\hat{\cal P}}}
\newcommand{\LL}{{\hat{\cal L}}}
\newcommand{\UU}{{\hat{\cal U}}}
\newcommand{\DD}{{\hat{\cal D}}}
\newcommand{\half}{{\textstyle\frac{1}{2}}}
\newcommand{\ihalf}{{\textstyle\frac{\ii}{2}}}
\newcommand{\mm}[1]{{\mathbf{#1}}}
\newcommand{\bb}[1]{{\mathbf{#1}}}
\newcommand{\quart}{{\textstyle\frac{1}{4}}}
\def\tr{{{\rm tr}}}
\def\ad{{\,{\rm ad}\,}}
\def\End{{\,{\rm End}\,}}
\def\one{\mathbbm{1}}
\def\Re{{\,{\rm Re}\,}}
\def\Re{{\,{\rm Re}\,}}
\begin{document}

\topical{Matrix product solutions of boundary driven quantum chains}
\author{Toma\v{z} Prosen}

\address{Department of physics, Faculty of Mathematics and Physics, University of Ljubljana,
Jadranska 19, SI-1000 Ljubljana, Slovenia}

\date{\today}

\begin{abstract}
We review recent progress on constructing non-equilibrium steady state density operators of boundary driven locally interacting quantum chains, where driving is implemented via Markovian dissipation channels attached to the chain's ends. We discuss explicit solutions in three  different classes of quantum chains, specifically, the paradigmatic (anisotropic)  Heisenberg spin-$1/2$ chain, the Fermi-Hubbard chain, and the Lai-Sutherland spin-1 chain, and discuss universal concepts which characterize these solutions, such as matrix product ansatz and a more structured walking graph state ansatz. The central theme is the connection between the matrix product form of nonequilibrium states and
the integrability structures of the bulk Hamiltonian, such as the Lax operators and the Yang-Baxter equation. 

However, there is a remarkable distinction with respect to the conventional quantum inverse scattering method, namely addressing nonequilibrium steady state density operators
requires non-unitary irreducible representations of Yang-Baxter algebra which are typically of infinite dimensionality. 
Such constructions result in non-Hermitian, and often also non-diagonalisable families of commuting transfer operators which in turn result in novel conservation laws of the integrable bulk Hamiltonians.
For example, in the case of anisotropic Heisenberg model, quasi-local conserved operators which are odd under spin reversal (or spin flip) can be constructed, 
whereas the conserved operators stemming from orthodox Hermitian
transfer operators (via logarithmic differentiation) are all even under spin reversal.
\end{abstract}

\pacs{02.30.Ik, 02.50.Ga, 05.60.Gg, 75.10.Pq}

\maketitle

\pagestyle{empty}
\tableofcontents
\pagestyle{headings}

\section{Introduction}

\subsection{The key concepts of quantum integrability}

Exact solutions of nontrivial yet simple physical models are of paramount importance in statistical and quantum physics. On one hand, an exactly solved model often characterises universal behaviour of a more general class of possibly unsolvable models
and thus represents the best possible {\em exact} understanding of physical reality. On the other hand, the tricks developed in the course of deriving such solutions have often lead to general development in mathematical methodology or even to novel mathematical concepts. Famous examples being for instance, Hans Bethe's solution of the Heisenberg model of magnetism in 1931 and Lars Onsager's solution of planar (2D) Ising model in 1944. Remarkably, these two threads have merged in the works of C.~N.~Yang \cite{Y67} and Rodney Baxter \cite{B82}, giving birth to the celebrated Yang-Baxter equation (YBE), also known as the star-triangle equation, representing the most general characterisation of integrability known to date. Moreover, abstract algebraic characterisation of YBE lead Vladimir Drinfeld in 1980's to introduce the concept of quantum groups and develop their representation theory, together with Jimbo, Reshetikhin, Sklyanin and many others.

An important ingredient of the theory of quantum integrability is the concept of {\em auxiliary} Hilbert space which interacts with the {\em physical} quantum Hilbert space via the so-called Lax (or scattering) operator, solving the YBE together with the so-called 
$R$-matrix which represents the `internal' scattering between a pair of auxiliary spaces. Considering explicit matrix representations pertaining to {\em finite} dimensional auxiliary space --- typically being a {\em fundamental} representation of the corresponding quantum- or Lie-group symmetry ---
resulted in the very successful quantum inverse scattering method \cite{Skly,KBI93,F94} for diagonalising integrable many-body Hamiltonians and computing their {\em equilibrium} correlation functions, also known as the algebraic Bethe ansatz (ABA).
Remarkably, literally the same technique has beed adapted to solve certain nonequilibrium classical driven diffusive many-body systems in one-dimension, namely the so-called simple exclusion processes (SEP), 
for a simple reason that their Markov chain generator can be identified with the Heisenberg-like Hamiltonian.  However, ABA calculations often result in an implicit solution written in terms of a coupled set of algebraic (or transcendental) equations [the Bethe equations (BE)], or in the thermodynamic limit (TL), in terms of coupled integral or functional equations (so-called Bethe-Takahashi equations).

In a seminal paper \cite{DEVP}, Derrida and coworkers have been able to circumvent this problem by writing an explicit solution for the steady state of symmetric and asymmetric simple exclusion processes (ASEP/SSEP) in terms of a matrix product ansatz (MPA).
MPA in turn also allowed for explicit calculation of all physical observables and correlation functions in the nonequilibrium steady state (NESS), see e.g. \cite{derrida,S01,BE07} for review. It is notable however, that a particular MPA appeared earlier as
a ground state of a valence bond solid in one-dimension (AKLT model \cite{AKLT}), or as a convenient family of the so-called finitely correlated states of quantum spin chains \cite{FNW}, and nowadays represents a cornerstone of the 
density-matrix-renormalization-group (DMRG) method, a state-of-the art technique for simulation of strongly correlated systems in one-dimension \cite{Scholl}. It has later also been recognised that ABA method for equilibrium integrable 
Heisenberg spin chains can be re-phrased in terms of MPA for all eigenstates, from which BE can be equivalently derived \cite{AL04,KM10,MKV}.

However, the problem of integrability in the combined paradigm of driven diffusive systems and quantum many-body interactions resisted until 2011 when NESS density operator of the boundary driven 
Lindblad master equation for the anisotropic Heisenberg spin 1/2 chain ($XXZ$ model) has been solved in terms of MPA \cite{P11a,P11b} by the author of the present topical review.\footnote{The term `driven diffusive systems' can in fact be misleading in our context since, as we shall see later, the competition between dissipation and coherent quantum many-body interactions can lead to a plethora of transport
behaviours, ranging from ballistic, anomalous, diffusive, to insulating.}
The solution appeared disconnected from conventional theory of quantum integrability at the first sight, but has later \cite{PIP,IZ14,I14} been related to infinite-dimensional solutions of YBE pertaining to non-unitary (or in mathematical terminology, {\em non-integrable}) irreducible representations of quantum group symmetry.

\subsection{The purpose and summary of the review}

We have by now managed to derive a number of exact MPA solutions of NESS of boundary driven quantum chains for different types of integrable  locally interacting  bulk Hamiltonians and different boundary dissipators (diffusive driving). 
The unifying picture of such {\em non-equilbrium quantum integrability} has been emerging slowly from studying various quite specific situations scattered over several rather technical papers, so it is perhaps now a good moment to wrap these results within a single topical review article.
The purpose of the present paper is thus to present a coherent and up-to-date review of the progress on solving the challenging problem of integrable boundary driven diffusive (or better to say, disipatively driven) quantum systems. The problem may be of quite general interest for mathematical physics and
statistical physics community as it represents in some sense a minimal description of dissipative or incoherent driving of a quantum many-body system without affecting coherent macroscopic character of its bulk. One can view this approach as an {\em incoherent} or {\em dissipative} analogue to a standard trick in nonequilibrium transport of {\em coherently} driven systems, where  one introduces a bias in electro-chemical potential replacing a real electric field in the bulk.
In contrast to most of literature on integrable systems, this review takes what might be considered a {\em bottom-up} approach. Namely, we first demonstrate how to explicitly construct various exact physical solutions and only later think or elaborate on their abstract mathematical properties and meaning.

In the following subsection \ref{bdqme} we physically motivate the paradigm of boundary driven quantum master equation of the Lindblad form.
In section \ref{heis} we then describe the MPA solution of NESS in case of the simplest generic integrable model, specifically the $XXX$ or more generally, the $XXZ$ spin $1/2$ chain.
Having the explicit MPA form, we also discuss in detail explicit analytic computation of observables and correlation functions in NESS, and its relation to the quantum group $U_q(\mathfrak{sl}_2)$ and the corresponding solutions of YBE. In sections \ref{hubb} and \ref{lai} we then discuss MPA solutions in two other families of boundary driven systems, specifically in the one-dimensional fermionic Hubbard model and in Lai-Sutherland spin-$1$ chain, respectively.
The former is intriguing as the corresponding Lax operator generating the MPA seems not to be connected to obvious symmetries of the model neither to the celebrated Shastry's $R$-matrix, while the latter is fundamentally interesting since the NESS is macroscopically degenerate and can be parametrised by a thermodynamic variable yielding a nonequilibrium analog of the grand-canonical ensemble.
In section \ref{qlcl} we discuss the relevance of our exact  far-from-equilibrium results for linear response physics of quantum transport,  such as the existence of novel quasilocal conserved operators and 
rigorous lower bounds on zero frequency Green-Kubo transport coefficients. In section \ref{disc} we outline a subset of most exciting and urgent open problems and conclude.

We stress that the present review contains also several original results w.r.t. the existing literature. In particular, a mixture of asymmetric coherent (boundary fields) and incoherent (boundary dissipation) has been worked out completely for $XXZ$ chain in subsection \ref{subsect:asym}, as well as a fully analytic calculation of the nonequilibrium partition functions in the isotropic $XXX$ chain for a variety of asymmetrized boundary drivings in subsection~\ref{subsect:neZ}.

\subsection{The paradigm of boundary driven quantum master equation}
\label{bdqme}

Let us consider a finite $d-$dimensional local {\em physical} Hilbert space ${\cal H}_{\rm p} \simeq \CC^d$, e.g. of a quantum spin $s=(d-1)/2$, 
and consider a quantum chain of length $n$ defined on a tensor product space ${\cal H}_{\rm p}^{\otimes n}=\bigotimes_{x=1}^{n}{\cal H}_{\rm p}$ with a Hamiltonian which can be written in terms of a sum of local interactions
$h_{x,x+1}$ acting nontrivially only over a pair of neighbouring sites $(x,x+1)$
\begin{equation}
H = \sum_{x=1}^{n-1} h_{x,x+1}.
\label{eq:ham}
\end{equation}

Within the theory of open quantum systems \cite{BP02} (see Refs.~\cite{AL,davies} for more mathematical accounts) incoherent markovian quantum dissipation can be completely described by a set of quantum jump operators 
$\{ L_\mu \in {\rm End}({\cal H}_{\rm p}^{\otimes n}); \mu=1,2\ldots \}$, also called Lindblad operators, so that the evolution of the
system's many-body density operator $\rho(t)$ satisfies the Lindblad-Gorini-Kossakowski-Sudarshan master equation \cite{L76,GKS76}
\begin{equation}
\frac{\dd}{\dd t}\rho(t) = \LL\rho(t):= -\ii [H,\rho(t)] + \sum_\mu \left(2L_\mu \rho(t) L_\mu^\dagger - \{L_\mu^\dagger L_\mu,\rho(t)\}\right).
\label{eq:lindeq}
\end{equation}
Eq.~(\ref{eq:lindeq}) defines a general family of markovian dynamical semigroups $\UU(t) = \exp(t\LL)$, $\UU(t)\UU(t') = \UU(t+t')$, $t,t'\in\RaR^+$, 
which preserve hermiticity, positivity and trace of the density matrix $\rho(t)$ at all times. In fact, the semigroup (\ref{eq:lindeq}) can be derived from the
microscopic unitary evolution of the {\em universe} $=$ {\em system} $\otimes$ {\em environment}, provided the following conditions are met: (i) the initial state of the universe is factorized $\rho(0)\otimes \rho_{\rm env}$,
(ii) the coupling between the system and the environment is weak so the second-order Born-Dyson series can be used, and (iii) the dynamical correlation functions of environment observables that are coupled to the system decay
on time scales for which the system's dynamics can be considered as frozen ({\em secular approximation}). 

\begin{figure}[!t]
\begin{center}
\vspace{-2mm}
\includegraphics[scale=0.8]{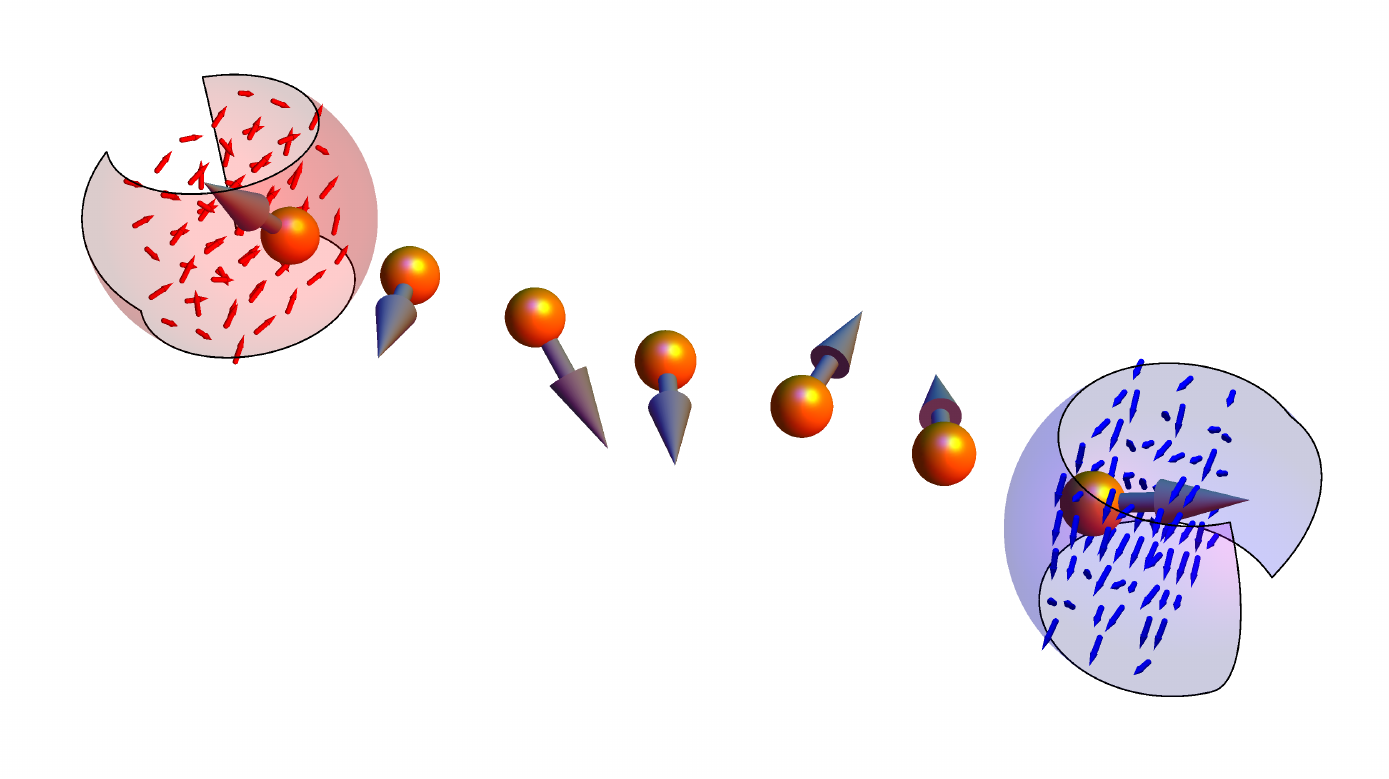}
\vspace{-6mm}
\caption{
Illustration of a dissipatively boundary driven quantum spin chain where the first and the last particle of the chain are immersed into a pair of distinct markovian quantum baths corresponding to different values of
thermodynamic potentials.
\label{fig:scheme}}
\end{center}
\end{figure}

We shall furthermore assume that the incoherent quantum processes are {\em ultra-local}, meaning that all jump operators are of the form 
\begin{equation}
L_\mu = \ell_\mu \otimes \one_{d^{n-1}}\quad {\rm or}\quad L_\mu = \one_{d^{n-1}} \otimes \ell_\mu, \qquad \ell_\mu \in {\rm End}({\cal H}_{\rm p}),
\label{eq:boundL}
\end{equation} 
i.e., they act nontrivially either on the left or the right boundary of the chain (Fig~.\ref{fig:scheme}).
Within the microscopic derivation \cite{BP02}, this additional assumption is generically justified only if the on-site part of $h_{x,x+1}$ is much larger (in operator norm) than the genuine interaction part \cite{W07}, 
meaning that a local disturbance on the boundary site
does not spread to the interior before it gets dissipated so the dissipator can be assumed to act locally.
However, there is an alternative phenomenological description of the boundary driven quantum master equation (\ref{eq:lindeq},\ref{eq:boundL}) in terms of the so-called {\em repeated interactions} protocol \cite{KP09,clark} which is free from any assumptions about the microscopic dynamics. In this protocol it is assumed that end spins/particles of the chain are put in contact (interaction) for a short amount of time $\delta t$ with a pair of auxiliary spins/particles which are assumed to be prepared in two different thermal, canonical, grandcanonical, or any other equilibrium states. In the next time interval $\delta t$, the auxiliary spins, the states of which may have already slightly changed, are replaced by a fresh, independent pair of thermal auxiliary spins, and so on. One may imagine two running belts carrying thermally prepared auxiliary spins in separable states and moving along each side of the chain at some speed while interacting with the ends of the central chain.
Writing dynamical evolution of the central chain density matrix in the high-speed limit $\delta t\to 0$ one precisely recovers (\ref{eq:lindeq}). It should be noted that closely related protocols could nowadays be implemented using contemporary cold atom experiments.

We stress, however, that our main motivation for studying the problem  (\ref{eq:lindeq},\ref{eq:boundL}) is in its conceptual simplicity and mathematical elegance of formulation as the minimal model which can describe nonequilibrium driving of a coherent many-body system in one dimension. 
Such a setup allows for an efficient numerical (DMRG) simulation of quantum transport for generic non-integrable bulk interactions \cite{PZ09,clark2}.
Even though this review focuses on the case where the bulk Hamiltonian is strongly interacting, we start by mentioning that in the non-interacting case, where the Hamiltonian is usually mapped to $XY$ spin $1/2$ chain and the Lindblad jump operators are {\em linear} in the corresponding Wigner-Jordan fermi operators, one is able to calculate analytically the full NESS density operator, all its observables, as well as the relaxation dynamics  of the master equation (\ref{eq:lindeq}), in terms of nonequilibrium dissipative quasi-particle excitations \cite{P08,P10,DK11}.
 Furthermore, one is able to write exact solutions in the non-interacting case even in some situations where the jump operators are {\em quadratic} but Hermitian, such as for the so-called {\em dephasing} noise, which
 for exact solvability has to be homogeneously distributed in the bulk of the chain \cite{Z10}. This solution can in fact be written in terms of a simple MPA \cite{Z11} with $4\times 4$ auxiliary matrices and allows for some further solvable generalisations \cite{eisler}. 
Such models can be further generalized as hybrid quantum-classical markov chains \cite{temme}, where the incoherent part of dynamics exactly coincides with the classical SEP.

Focusing on the concept boundary of driven many-body systems in this review, where the baths are described only effectively, we have to refrain from discussing a bulk of related and also highly topical literature on nonequilibrium steady states with infinite (microscopically formulated) baths.

\section{Anisotropic Heisenberg spin-$1/2$ chain}
\label{heis}

We shall start by considering arguably the simplest integrable model with strong interactions --- the $XXZ$ model, namely a homogeneous spin $1/2$ chain ($d=2$) with nearest neighbour anisotropic Heisenberg interaction with anisotropy parameter $\Delta$,
\begin{equation}
h = 2\sigma^+ \otimes \sigma^- + 2\sigma^-\otimes \sigma^+ + \Delta \sigma^\z \otimes \sigma^\z,
\end{equation}
where $\sigma^\pm = \half(\sigma^\x \pm \ii \sigma^\y)$ and $\sigma^\x,\sigma^\y,\sigma^\z$ are the usual $2\times 2$ Pauli matrices, so that the Hamiltonian (\ref{eq:ham}) can be written as
\begin{equation}
H = \sum_{x=1}^{n-1} \one_{2^{x-1}} \otimes h \otimes \one_{2^{n-x-1}}.
\label{eq:HXXZ}
\end{equation}
Embedding the Pauli operators into ${\rm End}({\cal H}_{\rm p}^{\otimes n})$, $\sigma^\alpha_x := \one_{2^{x-1}}\otimes \sigma^\alpha \otimes \one_{2^{n-x}}$, translationally shifted interactions read $h_{x,x+1} = 2 \sigma^+_x \sigma^-_{x+1}+2 \sigma^-_x \sigma^+_{x+1} + \Delta\sigma^\z_x \sigma^\z_{x+1}$.

We start by considering the simplest nontrivial dissipative driving with only two jump processes coupled to bulk unitary dynamics at a single dissipation rate $\varepsilon\in\RaR^+$,
\begin{equation}
L_1 = \sqrt{\varepsilon}\sigma^+_1,\quad L_2 = \sqrt{\varepsilon}\sigma^-_n.
\label{eq:lindpure}
\end{equation}
As the total z-component of magnetization $M=\sum_{x=1}^n \sigma^\z_x$ and hence the number of up-spins are conserved in the bulk
\begin{equation}
[H,M]=0,
\end{equation}
the incoherent processes (\ref{eq:lindpure}) can be interpreted as a pure {\em source} of up-spins on the left end and a pure {\em sink} of up-spins on the right end.
NESS density operator $\rho_\infty = \lim_{t\to\infty} \exp(t \LL) \rho(0)$ can be considered as a fixed point of the propagator, or null eigenvector of the Liouvillian
\begin{equation}
\LL \rho_\infty = 0,\qquad{\rm where}\quad \LL = -\ii\,{\rm ad}\,H + \varepsilon \DD_{\sigma^+_1} +  \varepsilon \DD_{\sigma^-_n},
\label{eq:LXXZ}
\end{equation}
where the Lie derivative map $({\rm ad}\,H)\rho:=[H,\rho]$ and an elementary dissipator map 
\begin{equation}
\DD_L(\rho) = 2 L \rho L^\dagger - \{L^\dagger L,\rho\}
\label{eq:candis}
\end{equation} 
have been introduced.

\subsection{Uniqueness of NESS}
\label{unique}

Let us first show \cite{P12a} that under quite general conditions, the Liouvillian (\ref{eq:LXXZ}) possesses a {\em unique} NESS, i.e. the fixed point $\rho_\infty$ is independent of the initial state $\rho(0)$. 

We start by noting a theorem due Evans \cite{evans} and Frigeiro \cite{frigeiro} (with the subject nicely reviewed by Spohn~\cite{spohn}) which essentially states that NESS is unique iff the set of operators $\{H,L_1,L_1^\dagger,L_2,L_2^\dagger \ldots\}$ generates, under multiplication and addition, the entire algebra 
${\rm End}({\cal H}^{\otimes n}_{\rm p})$ of operators over a quantum chain on $n$ sites. Indeed, this is easy to demonstrate explicitely even considering only a triple of operators $\{H, \sigma^+_1, \sigma^-_1\}$ 
while uniqueness then trivially extends to all cases including, and generalizing, (\ref{eq:lindpure}) where the set $\{ L_\mu,L^\dagger_\mu;\mu=1,2\ldots \}$ contains, up to scalar prefactors, either a pair  $\sigma^+_1,\sigma^-_1$, or a pair $\sigma^+_n,\sigma^-_n$ due left-right inversion symmetry.

Namely, one observes the following recursive operator identities:
\begin{eqnarray}
\sigma^+_2  &=& \frac{1}{4}\sigma^{\rm z}_1 [\sigma^+_1,[H,\sigma^{\rm z}_1]], \label{eq:opid1}\\
\sigma^+_{x} &=& -\sigma^+_{x-2}  - \frac{1}{2}\sigma^{\rm z}_{x-1} [\sigma^-_{x-1},\sigma^+_{x-1} H \sigma^+_{x-1}], \quad x=3,4\ldots, n,\label{eq:opid2}
\end{eqnarray}
which generate the entire set $\{\sigma^+_x; x=1,\ldots,n\}$ starting from just $H$ and $\sigma^+_1$. Similarly, 
$\{\sigma^-_x; x=1,\ldots,n\}$ are generated by Hermitian adjoints of Eqs.~(\ref{eq:opid1},\ref{eq:opid2}) starting from $H$ and $\sigma^-_1$.
The set $\{\sigma^+_x,\sigma^-_x; x=1,\ldots,n\}$ then generates all elements of ${\rm End}({\cal H}^{\otimes n}_{\rm p})$ by multiplication and addition.

For related recent general results on (non)uniqueness of fixed points and characterization of the space of steady states of Lindbladian dynamics, see, e.g., Refs.~\cite{baumgartner,kraus}.

\subsection{Matrix product solution -- isolating defect operator method}

We shall now present an {\em ad hoc} method which generates the MPA of NESS fixed point $\rho_\infty$ for the $XXZ$ model following Ref.~\cite{P11b} (also \cite{P11a}), which we term an
{\em isolated defect operator} (IDO) approach. Later we shall in subsect.~\ref{subsect:XXX2} rederive this solution in a more elegant way using a non-Hermitian infinite-dimensional Lax matrix of the $XXZ$ model, but we believe it may be of
interest to investigate both approaches. For instance, if for some model the appropriate Lax matrix for the problem is not known, IDO strategy may sometimes be used to devise a systematic method for determining MPA by an automated procedure (see e.g. Ref.~\cite{P14a}).

Let us first show that NESS density operator allows a particular factorization
in terms of a non-Hermitian {\em amplitude operator} $\Omega_n(\varepsilon)$.
\begin{lem}
Let  $\Omega_n \in {\rm End}({\cal H}_{\rm p}^{\otimes n})$ satisfy the following conditions (defining relations):\\ 
(i) a recursion identity for the bulk, setting $\Omega_1 := \sigma^0 := \one_2$,
\begin{eqnarray}
[H,\Omega_n] = -\ii \varepsilon( \sigma^\z \otimes \Omega_{n-1} - \Omega_{n-1}\otimes\sigma^\z),
\label{eq:def}
\end{eqnarray}
and (ii) the recursion identity for the boundaries
\begin{equation}
\Omega_n = \sigma^0 \otimes \Omega_{n-1} + \sigma^+ \otimes \Omega^+_{n-1} = \Omega_{n-1} \otimes \sigma^0 + \Omega^-_{n-1} \otimes \sigma^-
\label{eq:bcOm}
\end{equation}
for some unspecified operators $\Omega^\pm_{n-1} \in {\rm End}({\cal H}_{\rm p}^{\otimes (n-1)})$.
Then, the density operator
\begin{equation}
\rho_\infty = \frac{R_\infty}{\tr R_\infty},\quad R_\infty= \Omega_n \Omega^\dagger_n,
\label{eq:rhoinf}
\end{equation}
satisfies the fixed point condition (\ref{eq:LXXZ}).
\end{lem}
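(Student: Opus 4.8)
\emph{Proof proposal.} The plan is to verify $\LL R_\infty=0$ directly by substituting $R_\infty=\Omega_n\Omega_n^\dagger$ and showing that the coherent part $-\ii\ad H$ is cancelled, one boundary at a time, by the two dissipators $\varepsilon\DD_{\sigma^+_1}$ and $\varepsilon\DD_{\sigma^-_n}$; as $\LL$ is linear and $\tr R_\infty$ a positive constant, this at once gives $\LL\rho_\infty=0$. First I would combine the bulk relation (i) with Hermiticity of $H$: taking the adjoint of (i) and using $H=H^\dagger$ yields $[H,\Omega_n^\dagger]=-\ii\varepsilon D^\dagger$ with $D:=\sigma^\z\otimes\Omega_{n-1}-\Omega_{n-1}\otimes\sigma^\z$, so that
\[
 -\ii[H,R_\infty]=-\varepsilon\big(D\Omega_n^\dagger+\Omega_n D^\dagger\big).
\]
The structural point behind the construction is that (i) forces this commutator to be supported entirely on the two end sites: the first term of $D$ carries $\sigma^\z$ on site $1$, the second carries it on site $n$, with no interior contributions. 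I would therefore split $D=D_{\rm L}-D_{\rm R}$, $D_{\rm L}=\sigma^\z\otimes\Omega_{n-1}$, $D_{\rm R}=\Omega_{n-1}\otimes\sigma^\z$, and treat the two boundaries independently, pairing $D_{\rm L}$ with $\DD_{\sigma^+_1}$ and $D_{\rm R}$ with $\DD_{\sigma^-_n}$.

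For the left boundary I would pass to $2\times2$ block form in the first site through the first identity of (ii), $\Omega_n=\sigma^0\otimes\Omega_{n-1}+\sigma^+\otimes\Omega^+_{n-1}$, which makes $\Omega_n$ upper triangular in site $1$ with equal diagonal blocks $\Omega_{n-1}$ and upper block $\Omega^+_{n-1}$. Computing $R_\infty=\Omega_n\Omega_n^\dagger$ in this representation and inserting the block forms of $\sigma^+_1$, $\sigma^-_1$ and of the projector $\sigma^-_1\sigma^+_1=\half(\one-\sigma^\z)_1$ into $\DD_{\sigma^+_1}(R_\infty)=2\sigma^+_1R_\infty\sigma^-_1-\{\sigma^-_1\sigma^+_1,R_\infty\}$, a short computation should give the block identity $\DD_{\sigma^+_1}(R_\infty)=D_{\rm L}\Omega_n^\dagger+\Omega_n D_{\rm L}^\dagger$. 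The left part of $-\ii[H,R_\infty]$ is then $-\varepsilon\big(D_{\rm L}\Omega_n^\dagger+\Omega_n D_{\rm L}^\dagger\big)=-\varepsilon\DD_{\sigma^+_1}(R_\infty)$, which is exactly cancelled by the term $+\varepsilon\DD_{\sigma^+_1}(R_\infty)$ of $\LL$.

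The right boundary is settled by the mirror computation: block in the last site via the second identity of (ii), $\Omega_n=\Omega_{n-1}\otimes\sigma^0+\Omega^-_{n-1}\otimes\sigma^-$ (now lower triangular in site $n$), use the projector $\sigma^+_n\sigma^-_n=\half(\one+\sigma^\z)_n$, and repeat the bookkeeping to find that the $D_{\rm R}$ contribution cancels $\varepsilon\DD_{\sigma^-_n}(R_\infty)$. Collecting both boundaries gives $\LL R_\infty=0$. Finally $R_\infty=\Omega_n\Omega_n^\dagger$ is manifestly Hermitian and positive semidefinite, with $\tr R_\infty>0$ since $\Omega_n\neq0$ (its diagonal blocks equal $\Omega_{n-1}$, reducing inductively to $\Omega_1=\one_2$), so $\rho_\infty=R_\infty/\tr R_\infty$ is a genuine density operator obeying the fixed-point condition.

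I expect the substance of the proof to be just the block bookkeeping at each end, and the one genuinely error-prone step to be the sign in the Hermiticity reduction, where $H=H^\dagger$ gives $[H,\Omega_n^\dagger]=-\ii\varepsilon D^\dagger$ rather than $+\ii\varepsilon D^\dagger$; this sign is precisely what makes the coherent boundary residual match the dissipators. I would also double-check the identifications $\sigma^-_1\sigma^+_1=\half(\one-\sigma^\z)_1$ and $\sigma^+_n\sigma^-_n=\half(\one+\sigma^\z)_n$ as the down and up projectors on the two ends, since interchanging them would spoil the cancellation.
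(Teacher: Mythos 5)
Your proposal is correct and takes essentially the same route as the paper's proof: the Leibniz rule together with the adjoint of the bulk relation (i) reduces $\ii\varepsilon^{-1}[H,\Omega_n\Omega_n^\dagger]$ to terms carrying a single boundary defect, which are then matched, after inserting the boundary decompositions (ii), against the explicit action of the two dissipators. The only difference is bookkeeping: you verify the matching in $2\times 2$ block form at each end site, while the paper expands in the Pauli basis and uses the tabulated local action of $\DD_{\sigma^\pm}$ on $\sigma^0,\sigma^\pm,\sigma^+\sigma^-$ --- both establish the same six-term identity, and your sign conventions (including $\sigma^-_1\sigma^+_1=\half(\one-\sigma^\z)_1$) are consistent with the paper's.
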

\begin{proof}
We need to prove that $\LL (\Omega_n \Omega^\dagger_n) = 0$, i.e. 
\begin{equation}
\ii\varepsilon^{-1} [H,\Omega_n \Omega^\dagger_n]=\DD_{\sigma^+_1}(\Omega_n \Omega^\dagger_n) + \DD_{\sigma^-_n}(\Omega_n \Omega^\dagger_n).
\label{eq:stlind}
\end{equation}
Using the Leibniz rule for Lie derivative, and (i), LHS of (\ref{eq:stlind}) can be transformed to
\begin{eqnarray}
& \Omega_n \left(\ii\varepsilon^{-1}[H,\Omega_n]\right)^\dagger + \ii\varepsilon^{-1}[H,\Omega_n]\Omega_n^\dagger = \nonumber \\ 
&\!\!\!\!\!\!\!\!\!\!\!\!\!\!\!\!\!\!\!\!
\Omega_n (\sigma^{\rm z}\otimes \Omega_{n-1}^\dagger) - \Omega_n (\Omega_{n-1}^\dagger\otimes \sigma^{\rm z})  +  (\sigma^{\rm z}\otimes  \Omega_{n-1}) \Omega_n^\dagger - (\Omega_{n-1}\otimes \sigma^{\rm z}) \Omega_n^\dagger.
\label{eq:b}
\end{eqnarray}
Applying, respectively, the first and the second identity of (ii) on the first two and the last two occurrences of $\Omega_n$ in (\ref{eq:b}), we obtain
\begin{eqnarray}
&2\sigma^{\rm z}\otimes \Omega_{n-1}\Omega_{n-1}^\dagger - \sigma^+ \otimes \Omega^+_{n-1}\Omega_{n-1}^\dagger - \sigma^- \otimes \Omega_{n-1} (\Omega^+_{n-1})^\dagger - \nonumber \\
& 2\Omega_{n-1}\Omega_{n-1}^\dagger \otimes \sigma^{\rm z} - \Omega^-_{n-1}\Omega_{n-1}^\dagger \otimes \sigma^- - \Omega_{n-1}(\Omega^-_{n-1})^\dagger \otimes \sigma^+.
\label{eq:6terms}
\end{eqnarray}
The first term on the RHS of (\ref{eq:stlind}) can again be transformed by applying the first identity of (ii) to 
\begin{eqnarray}
&\DD_{\sigma^+}(\sigma^0) \otimes \Omega_{n-1} \Omega^\dagger_{n-1} + \DD_{\sigma^+}(\sigma^-)\otimes \Omega_{n-1} (\Omega^+_{n-1})^\dagger +  \nonumber \\
&\DD_{\sigma^+}(\sigma^+)\otimes \Omega^+_{n-1}\Omega^\dagger_{n-1} + \DD_{\sigma^+}(\sigma^+\sigma^-)\otimes\Omega^+_{n-1}(\Omega^+_{n-1})^\dagger,
\end{eqnarray}
which, by observing the complete local action of the dissipators
\begin{eqnarray}
&& \DD_{\sigma^+}(\sigma^0) =  2\sigma^\z,\quad \DD_{\sigma^+}(\sigma^\pm) = -\sigma^\pm,\quad  \DD_{\sigma^+}(\sigma^+\sigma^-)=0, \\ 
&& \DD_{\sigma^-}(\sigma^0) =  -2\sigma^\z,\quad \DD_{\sigma^-}(\sigma^\pm) = -\sigma^\pm,\quad  \DD_{\sigma^-}(\sigma^-\sigma^+)=0, \label{eq:endproof}
\end{eqnarray}
result in exactly the first three terms of expression (\ref{eq:6terms}). Analogously, the second term on the RHS of (\ref{eq:stlind}) results in the last three terms of (\ref{eq:6terms}).
\end{proof}

Note that the condition (ii), Eq. (\ref{eq:bcOm}), implies that $\Omega_n$ is {\em unit diagonal} and {\em upper triangular} matrix in the joint eigenbasis  $\{\ket{\nu_1,\ldots,\nu_n}; \nu_x\in\{0,1\}\}$ of $\{\sigma^\z_x\}$, $\sigma^\z_x \ket{\ul{\nu}} = (1-2\nu_x) \ket{\ul{\nu}}$, where the ordering is determined by the binary ${\rm ord}(\ul{\nu}) := \sum_{x=1}^n \nu_x 2^{x-1}$,
\begin{equation}
\bra{\ul{\nu}}\Omega_n\ket{\ul{\nu}}=1,\qquad
\bra{\ul{\nu}}\Omega_n\ket{\ul{\nu}'}=0 \quad {\rm if}\quad {\rm ord}(\ul{\nu}) > {\rm ord}(\ul{\nu}'),
\label{eq:ut}
\end{equation}
which can be shown by a straightforward induction. The factorisation of NESS (\ref{eq:rhoinf}) can thus be considered as a (reverse) {\em many-body Cholesky decomposition}. In the canonical Cholesky decomposition, though, 
the matrix of $\Omega_n$ would have to be lower-triangular, but this can be trivially mended by considering a spin-reversed problem. Namely, noting the spin-reversal symmetry of the Hamiltonian
\begin{equation}
P H P^{-1} = H,\quad P =  P^{-1} =\prod_{x=1}^n \sigma^{\x}_x = \pmatrix{0 & 1\cr 1 & 0}^{\otimes n},
\label{eq:parity}
\end{equation}
the spin-reversed density operator
\begin{equation}
R'_\infty = P R_\infty P^{-1} = \Omega'_n (\Omega'_n)^\dagger,\quad \Omega'_n=  P \Omega_n P^{-1},
\label{eq:Omp}
\end{equation}
where the matrix of $\Omega'_n$ is {\em unit diagonal} and {\em lower triangular},
solves the reverse NESS fixed point condition with the source and sink being swapped:
\begin{equation}
L_1' = P L_1 P^{-1} = \sqrt{\varepsilon} \sigma^-_1,\qquad
L_2' = P L_2 P^{-1} = \sqrt{\varepsilon} \sigma^+_n.
\label{eq:revdriv}
\end{equation}

We shall now define an abstract auxiliary Hilbert space ${\cal H}_{\rm a}$, with a particular state $\ket{0}\in{\cal H}_{\rm a}$, and postulate an MPA for the amplitude operator
\footnote{An impatient reader should here be directed straight to subsect.~\ref{subsect:lax}.}
\begin{equation}
\Omega_n = \sum_{\alpha_1,\ldots,\alpha_n\in\{ +,-,0\}} \bra{0}\mm{A}_{\alpha_1}\mm{A}_{\alpha_2}\cdots \mm{A}_{\alpha_n}\ket{0} \sigma^{\alpha_1}\otimes \sigma^{\alpha_2} \otimes \cdots \sigma^{\alpha_n},
\label{eq:MPA1}
\end{equation}
with a yet-to-be specified tripple of matrices $\mm{A}_\pm,\mm{A}_0 \in{\rm End}({\cal H}_{\rm a})$.
Throughout this paper we write in {\em roman-bold} letters operators which act non-trivially, i.e. not as scalars, over the auxiliary space ${\cal H}_{\rm a}$.
We should note that the terms with $\sigma^\z_x$ have been omitted in the ansatz (\ref{eq:MPA1}) which is the key to the solution of the problem.
Let us call $\sigma^\z$ a {\em defect operator} here.
Looking at the pair of sites without a defect, one finds that commutation with Hamliltonian density produces exactly one defect, either on the left or on the right tensor factor
\begin{equation}
[h,\sigma^{\alpha}\otimes\sigma^{\alpha'}] = \sum_{\beta\in\{+,-,0\}}\left(\gamma^{\alpha,\alpha'}_\beta \sigma^\beta \otimes \sigma^\z + \gamma^{\alpha',\alpha}_\beta \sigma^\z\otimes \sigma^\beta\right),
\label{eq:struct}
\end{equation}
$\alpha,\alpha'\in\{+,-,0\}$, 
with the structure constants $\gamma^{\alpha,\alpha'}_\beta$, writing out only the non-vanishing elements:
\begin{equation}
\gamma^{\pm,0}_\pm = \pm 2\Delta,\quad \gamma^{0,\pm}_\pm = \mp 2, \quad \gamma^{\pm,\mp}_0 = \pm 1.
\end{equation}
The commutator $[H,\Omega_n]$ and the entire defining relation (\ref{eq:def}) should then contain Pauli operators with exactly one defect $\sigma^\z_x$. Considering all the terms 
where the defect operator appears in the bulk $1 < x < n$, a sufficient condition for cancelation involves a projection on a triple of sites $(x-1,x,x+1)$
\begin{equation}
\!\!\!\!\!\!\!\!\!\!\!\!\!\!\!\!\!\!\!\!\!\!\!\!\!\!\!\!\!\!\!\!\!\!\!\!\sum_{\alpha,\alpha',\alpha''\in\{+,-,0\}}\!\!\!\!\!\!\!\!\!\tr\left( \sigma^{-\beta}\otimes \sigma^{\z} \otimes\sigma^{-\beta'} [h\otimes \one_2 + \one_2 \otimes h,\sigma^{\alpha}\otimes \sigma^{\alpha'} \otimes\sigma^{\alpha''}]\right)
\mm{A}_{\alpha}\mm{A}_{\alpha'}\mm{A}_{\alpha''}=0,
\label{eq:project}
\end{equation}
or, equivalently, in terms of the structure constants $\gamma^{\alpha,\alpha'}_\beta$
\begin{equation}
\sum_{\alpha,\alpha'\in\{+,-,0\}} \left(\gamma^{\alpha,\alpha'}_{\beta} \mm{A}_{\alpha}\mm{A}_{\alpha'}\mm{A}_{\beta'} + \gamma^{\alpha',\alpha}_{\beta'} \mm{A}_\beta\mm{A}_\alpha\mm{A}_{\alpha'}\right) = 0,
\end{equation}
which represent $9$ homogeneous cubic identities for the matrices $\mm{A}_\pm,\mm{A}_0$ for $\beta,\beta'\in\{+,-,0\}$.
Writing these explicitly, we find that only $8$ of them are linearly independent:
\begin{eqnarray}
&& [\mm{A}_0,\mm{A}_\pm\mm{A}_\mp] = 0,\quad \{ \mm{A}_0,\mm{A}^2_\pm\} = 2\Delta \mm{A}_\pm\mm{A}_0\mm{A}_\pm,\quad2\Delta [\mm{A}_0^2,\mm{A}_\pm]  = [\mm{A}_\mp,\mm{A}_\pm^2] , \nonumber \\
&& 2\Delta\{\mm{A}^2_0,\mm{A}_\pm\} - 4\mm{A}_0\mm{A}_\pm\mm{A}_0 = \{\mm{A}_\mp,\mm{A}^2_\pm\} - 2\mm{A}_\pm\mm{A}_\mp\mm{A}_\pm.  \label{eq:algebra}
\end{eqnarray}
Considering the remaining terms of the defining relation (\ref{eq:def}) where the defect operator sits at the boundary we obtain
two triples of sufficient conditions projecting (in analogy to (\ref{eq:project})) on $\sigma^\z \otimes \sigma^\alpha$ for $x=1$, or $\sigma^\alpha \otimes \sigma^\z$ for $x=n$, for a pair of sites near the boundary
\begin{eqnarray}
\sum_{\alpha,\alpha'\in\{+,-,0\}} \gamma^{\alpha',\alpha}_\beta \bra{0}\mm{A}_\alpha \mm{A}_{\alpha'} &=&-\ii \varepsilon  \bra{0}\mm{A}_\beta, \qquad \beta\in\{+,-,0\}\nonumber\\
\sum_{\alpha,\alpha'\in\{+,-,0\}} \gamma^{\alpha,\alpha'}_\beta \mm{A}_\alpha \mm{A}_{\alpha'}\ket{0} &=&-\ii \varepsilon \mm{A}_\beta\ket{0}. \label{eq:bc1}
\end{eqnarray}
In order to fulfil the second defining relation (\ref{eq:bcOm}) with the MPA (\ref{eq:MPA1}), simple additional sufficient conditions are
\begin{equation}
\bra{0}\mm{A}_- = 0,\qquad \mm{A}_+\ket{0}=0, \qquad \bra{0}\mm{A}_0 = \bra{0},\qquad \mm{A}_0 \ket{0}=\ket{0}.
\label{eq:bc2}
\end{equation}
We have thus shown that a representation of $\mm{A}_\pm,\mm{A}_0$ fulfilling the cubic bulk relations (\ref{eq:algebra}) together with quadratic-linear boundary
conditions (\ref{eq:bc1},\ref{eq:bc2}) will provide MPA for NESS of $XXZ$ chain. 
Considering $\ket{0}$ as a {\em highest weight state}, we define a tower of states $\ket{k}$ spanning an infinite-dimensional
auxiliary representation space 
\begin{equation}
{\cal H}_{\rm a}={\rm lsp}\{ \ket{k} := (\mm{A}_-)^k\ket{0}, k = 0,1,2\ldots\}.
\end{equation}
Using Dirac's notation with the dual basis $\bra{k}$, satisfying $\braket{k}{k'}=\delta_{k,k'}$, the auxiliary operators then naturally take the following tridiagonal form
\begin{equation}
\mm{A}_0 = \sum_{k=0}^\infty a_k \ket{k}\!\bra{k},\;\;
\mm{A}_+ = \sum_{k=0}^\infty b_k \ket{k}\!\bra{k+1},\;\;
\mm{A}_- = \sum_{k=0}^\infty \ket{k+1}\!\bra{k},
\label{eq:AAA}
\end{equation}
where the $8$ relations (\ref{eq:algebra}) become equivalent to a pair of recurrence relations for $a_k,b_k$
\begin{eqnarray}
&& a_{k+1} - 2\Delta a_k + a_{k-1} = 0,\\
&& b_{k+1} - b_k = 2 a_{k+1}(\Delta a_{k+1} - a_k)
\end{eqnarray}
with boundary conditions (\ref{eq:bc1},\ref{eq:bc2}) yielding the initial conditions for the recurrence
\begin{equation}
a_0 = 1, \quad a_1 = \Delta + \frac{\ii\varepsilon}{2}, \quad b_0 = \ii \varepsilon.
\end{equation}
The solution can be expressed in terms of Chebyshev polynomials in $\Delta$, or more compactly, re-writing the anisotropy parameter
\begin{equation}
\cos\eta := \Delta,
\end{equation}
as
\begin{eqnarray}
a_k &=& \cos(k \eta) + \frac{\ii\varepsilon}{2} \frac{\sin(k \eta)}{\sin\eta},\nonumber\\
b_k &=& \sin((k+1)\eta)\left(\frac{\ii\varepsilon}{\sin\eta}\cos(\eta k) - \left(1 + \frac{\varepsilon^2}{4\sin^2\eta}\right)\sin(\eta k)\right).
\label{eq:result1}
\end{eqnarray}

\begin{figure}
 \centering	
\vspace{-1mm}
\includegraphics[width=0.7\columnwidth]{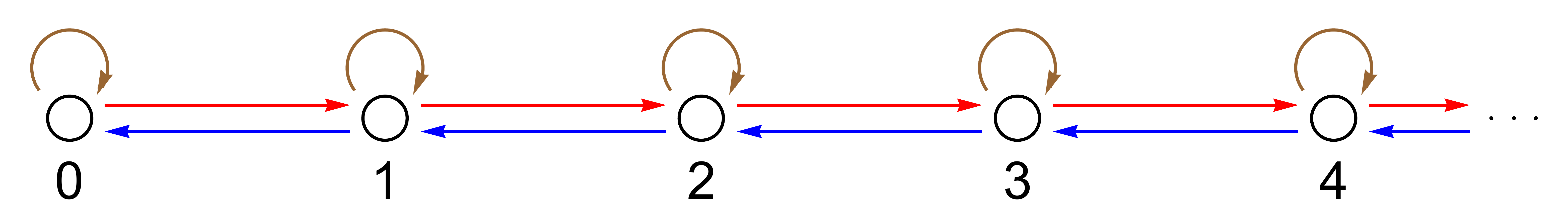}
\vspace{-1mm}
\caption{A semi-infinite graph showing the {\em allowed} transitions 
for building up the matrix product operator $\Omega_n$ (\ref{eq:WGSXXZ}) for the $XXZ$ chain. Brown, red, blue edges $e$ denote the index function values $\omega(e)=\sigma^0, \sigma^+, \sigma^-$, respectively.}
\label{fig:XXZD}
\end{figure}

MPA (\ref{eq:MPA1}) can be given another appealing interpretation. Associating the auxiliary states $\ket{k}$ with vertices of a directed graph, ${\cal V} =\{0,1,2,\ldots\}$, and defining a set of edges encoding all possible transitions
${\cal E}=\{(k,k),(k,k+1),(k+1,k);k=0,1,\ldots\}$ the amplitudes $\bra{0}\mm{A}_{\alpha_1}\mm{A}_{\alpha_2}\cdots\mm{A}_{\alpha_n}\ket{0}$ can be written in terms of a collection of products of transition amplitudes along all possible $n$-step {\em walks}, i.e. sequences of connecting edges starting at vertex $0$ and ending back at $0$ in exactly $n$ steps, ${\cal W}_n(0,0)=\{e_1,e_2\ldots, e_n\in {\cal E}; p(e_1) = 0, q(e_j)=p(e_{j+1}), q(e_n)=0 \}$, where $e\equiv (p(e),q(e))$, namely
\begin{equation}
\Omega_n = \sum_{(e_1,\ldots,e_n)\in{\cal W}_n(0,0)} a_{e_1}a_{e_2}\cdots a_{e_n} \omega(e_1)\otimes \omega(e_2)\otimes \cdots \omega(e_n).
\label{eq:WGSXXZ}
\end{equation}
The amplitudes are encoded as $a_{(k,k)} = a_k$, $a_{(k,k+1)}=b_k$, $a_{(k+1,k)}=1$, and $\omega : {\cal E}\to {\rm End}({\cal H}_{\rm p})$ is what we shall call an {\em index function} which associates a local physical operator with each edge of the graph. In our case $\omega(k,k) = \sigma^0$, $\omega(k,k+1) = \sigma^+$, $\omega(k+1,k) = \sigma^-$, where the {\em defect} $\sigma^\z$ is not in the image
of $\omega$ (see Fig.~\ref{fig:XXZD}). Expression (\ref{eq:WGSXXZ}) shall be named a {\em walking graph state} (WGS) representation, and is a useful concept encapsulating the locality of infinite-dimensional MPA. 

\subsection{Lax operator for a complex $q-$deformed spin and some notation}
\label{subsect:lax}
So far we followed a pedestrian approach and have not used the powerful quantum group  $U_q(\mathfrak{sl}_2)$ symmetry \cite{K95} of the $XXZ$ spin chain \cite{PS} which is deeply rooted behind the integrability of all equilibrium problems for the model. A fundamental characterisation of this symmetry can be written (see e.g. Refs.~\cite{F94,KBI93}) in terms of the so-called $RLL$ relation, a version of the YBE over a tensor product triple ${\cal H}_{\rm a}\otimes {\cal H}_{\rm p}\otimes{\cal H}_{\rm p} $
\begin{equation}
R_{1,2}(\varphi_1-\varphi_2)\mm{L}_{1}(\varphi_1)\mm{L}_{2}(\varphi_2) = 
\mm{L}_{1}(\varphi_2)\mm{L}_{2}(\varphi_1)R_{1,2}(\varphi_1-\varphi_2).
\label{eq:RLL}
\end{equation}
Remember that bold-roman letters denote symbols acting (nontrivially) over auxiliary space ${\cal H}_\aa$ while indices denote the label of the physical space.
Here $R(\varphi)$ is, up to a permutation, the standard trigonometric $4\times 4$ six-vertex $R$-matrix which we choose to write in terms of spin operators over ${\cal H}_{\rm p}\otimes{\cal H}_{\rm p}$
\begin{equation}
R(\varphi) = \frac{\sin\varphi}{2}(h + \one\cos\eta) - \frac{1+\cos\varphi}{2} \one\sin\eta + \frac{1-\cos\varphi}{2}\sigma^\z \otimes \sigma^\z\sin\eta,\quad
\label{eq:Rff}
\end{equation}
while the Lax operator (or $L$-operator) $\mm{L}\in{\rm End}({\cal H}_{\rm a}\otimes {\cal H}_{\rm p})$ has a universal $U_{q}(\mathfrak{sl}_2)$ symmetric form:
\begin{equation}
\mm{L}(\varphi,s) = \pmatrix{
\sin(\varphi+\eta \mm{s}^\z_s) & (\sin\eta) \mm{s}^-_s \cr
(\sin\eta) \mm{s}^+_s & \sin(\varphi-\eta \mm{s}^\z_s) \cr
} = \sum_{\alpha \in{\cal J}} \mm{L}^\alpha(\varphi,s) \otimes \sigma^\alpha,
\label{eq:Lax}
\end{equation}
where ${\cal J}=\{+,-,0,\z\}$, and $\mm{L}^\alpha \in {\rm End}({\cal H}_{\rm a})$ are its physical space components
\begin{equation}
\!\!\!\!\!\!\!\!\!\!\!\!\!\!\!\!\!\!
\mm{L}^0(\varphi,s) = \sin\varphi \cos(\eta \mm{s}^\z_s),\;
\mm{L}^\z(\varphi,s)  = \cos\varphi \sin(\eta \mm{s}^\z_s),\;
\mm{L}^\pm(\varphi,s) = (\sin \eta) \mm{s}_s^\mp.\quad
\end{equation}
For mathematical applications of quantum group symmetry to abstract construction of $L$-operators and other quantum integrability concepts, see e.g. Refs.~\cite{klumper1,klumper2}.
The $RLL$ relation (\ref{eq:RLL}) is in fact equivalent to a complete set of commutation relations for the generators $\mm{s}^\pm_s,\mm{s}^\z_s$ of a $q-$deformed angular momentum algebra with deformation parameter $q=e^{\ii \eta}$
\begin{equation}
[\mm{s}_s^+,\mm{s}_s^-] = [2\mm{s}^\z_s]_q=\frac{\sin(2\eta \mm{s}^\z_s)}{\sin\eta},\quad
[\mm{s}_s^\z,\mm{s}_s^\pm] = \pm \mm{s}_s^\pm,
\label{eq:algebraq}
\end{equation}
where $[x]_q:=(q^x-q^{-x})/(q-q^{-1})$. We shall here consider the highest weight representation with $\mm{s}^+\ket{0}=0, \mm{s}^\z_s\ket{0}=s \ket{0}$ carried by ${\cal H}_{\rm a} \equiv {\cal V}_{s}$, the so-called Verma module,
corresponding to a {\em complex} spin representation parameter $s\in\CC$:
\begin{eqnarray}
\mm{s}^\z_s &=& \sum_{k=0}^\infty (s-k) \ket{k}\bra{k}, \nonumber\\
\mm{s}^+_s &=& \sum_{k=0}^\infty \frac{\sin(k+1)\eta}{\sin\eta} \ket{k}\bra{k+1}, \label{verma} \\
\mm{s}^-_s &=& \sum_{k=0}^\infty \frac{\sin(2s-k)\eta}{\sin\eta} \ket{k+1}\bra{k}. \nonumber
\end{eqnarray}
We stress that the operator $\mm{s}^\z_s$ only exists within a representation, and not as an element of the $U_q(\mathfrak{sl}_2)$ where only
$q^{\pm \mm{s}^\z}$ exist.
Note that {\em only} for half-integer spin $s\in\half\ZZ^+=\{0,\half,1,\frac{3}{2}\ldots\}$, the module ${\cal V}_s$ is truncated, for any $\eta$, to a common finite-($2s+1$)-dimensional irrep, since $\ket{2s}$ then becomes the lowest weight state, $\mm{s}^-_s\ket{2s}=0$.
The module ${\cal V}_s$ is also truncated to a finite-$m$-dimensional one, for any $s$, when $q$ is a {\em generic} $m$-th root of unity, i.e., $\eta = \pi l/m, l,m\in \ZZ^+$.
For generic values of parameters $\eta,s$, the module ${\cal V}_s$ carries an infinite-dimensional irrep and Lax operator (\ref{eq:Lax}) is {\em non-Hermitian}, as in general $(\mm{s}^-_s)^\dagger \neq \mm{s}^+_s$, 
$(\mm{s}^\z_s)^\dagger \neq \mm{s}^\z_s$.

The matrix $R(\varphi)$ satisfies the following useful relations: it is symmetric under transposition,
its derivative at $\varphi=0$ is proportional to hamiltonian interaction, and it has a simple inverse proportional to $R(-\varphi)$:
\begin{eqnarray}
&& R(\varphi)^T = R(\varphi), \\
&& \partial_\varphi R(\varphi)|_{\varphi=0} = \half(h + \cos\eta \one), \\
&& R(\varphi)R(-\varphi) = (\sin^2\!\eta - \sin^2\!\varphi) \one.
\label{eq:invR}
\end{eqnarray}
Thus, expanding the $RLL$ relation (\ref{eq:RLL}) for $\varphi_{1,2} = \varphi \pm \delta$ to first order in $\delta$
one obtains a differential form of YBE, or the so-called {\em Sutherland relation} \cite{S70}
\begin{equation}
\!\!\!\!\!\!\!\!\!\!\!\! 
[h_{1,2},\mm{L}_{1}(\varphi,s)\mm{L}_{2}(\varphi,s)] = -2\sin\eta \left(\widetilde{\mm{L}}_{1}(\varphi,s)\mm{L}_{2}(\varphi,s) - \mm{L}_{1}(\varphi,s)\widetilde{\mm{L}}_{2}(\varphi,s)\right)
\label{eq:LOD}
\end{equation}
where 
\begin{equation}
\widetilde{\mm{L}}(\varphi,s) = \partial_\varphi\mm{L}(\varphi,s)  = \cos\varphi \cos(\eta\mm{s}^\z_s) \otimes \sigma^0 - \sin\varphi \sin(\eta\mm{s}^\z_s)\otimes \sigma^\z.
\end{equation}
This relation (\ref{eq:LOD}) is sometimes also referred to as {\em local operator divergence} condition.
Let us denote by 
\begin{equation}
\mm{L}^T(\varphi,s)=\sum_{\alpha\in{\cal J}} \mm{L}^\alpha(\varphi,s)\otimes (\sigma^\alpha)^T 
\end{equation}
the (partial) transposition with respect to the physical space, noting
$\widetilde{\mm{L}}^T(\varphi,s) = \widetilde{\mm{L}}(\varphi,s)$.
Sutherland relation transforms under partial transposition in physical spaces to
\begin{equation}
\!\!\!\!\!\!\!\!\!\!\!\!
[h_{1,2},\mm{L}^T_{1}(\varphi,s)\mm{L}^T_{2}(\varphi,s)] = 2\sin\eta \left(\widetilde{\mm{L}}_{1}(\varphi,s)\mm{L}^T_{2}(\varphi,s) - \mm{L}_{1}^T(\varphi,s)\widetilde{\mm{L}}_{2}(\varphi,s)\right).
\label{eq:LODt}
\end{equation}
We can think of 
$\mm{L}^T(\pi-\varphi,s)$
also as the Lax matrix corresponding to the transposed, lowest-weight representation ${\cal V}^T_s$ of $U_q(\mathfrak{sl}_2)$ which has exactly the canonical form (\ref{eq:Lax}) with spin operators transformed under the following algebra-(\ref{eq:algebraq})-preserving canonical transformation 
\begin{equation}
\mm{s}^{\z}_s \to -\mm{s}^\z_s,\quad \mm{s}^\pm_s \to \mm{s}^\mp_s,\label{eq:auxflip}
\end{equation}
which is just the {\em spin reversal} in auxiliary space.
The Lax operator (\ref{eq:Lax}) is invariant under the combined spin reversal in the physical and the auxiliary space
\begin{equation}
\sigma^\x \mm{L}(\varphi,s) \sigma^\x = \mm{L}^T(\pi-\varphi,s).
\label{eq:spinrevL}
\end{equation}

It will turn useful to study also the product of complex spin representations ${\cal V}_s^T \otimes {\cal V}_t$, for some $s,t\in\CC$.
Namely, defining a {\em double} Lax matrix as the following operator over the tensor product ${\rm End}({\cal H}_{\rm a}\otimes {\cal H}_{\rm b}\otimes {\cal H}_{\rm p})$
with a pair of auxiliary spaces carrying irreducible representations of $U_q(\mathfrak{sl}_2)$ with spin parameters $s,t\in\CC$, 
${\cal H}_{\rm a}={\cal V}^T_s$, ${\cal H}_{\rm b}={\cal V}_t$, and the corresponding spectral parameters $\varphi,\vartheta\in\CC$,
\begin{eqnarray}
\vmbb{L}_{x}(\varphi,\vartheta,s,t) &=& \mm{L}^T_{\aa,x}(\varphi,s) \mm{L}_{\bbb,x}(\vartheta,t), \label{eq:doubleLax}\\
\widetilde{\vmbb{L}}_{x}(\varphi,\vartheta,s,t) &=& \partial_\delta \left(\mm{L}^T_{\aa,x}(\varphi+\delta,s)\mm{L}_{\bbb,x}(\vartheta-\delta,t)\right)_{\delta=0} \nonumber \\
&=&\partial_\varphi\mm{L}^T_{\aa,x}(\varphi,s)\mm{L}_{\bbb,x}(\vartheta,t)-\mm{L}^T_{\aa,x}(\varphi,s)\partial_\vartheta\mm{L}_{\bbb,x}(\vartheta,t), 
\end{eqnarray} 
where $\mm{L}_{\aa,x} = \sum_{\alpha\in{\cal J}} \mm{L}^\alpha\otimes \one_{\rm b} \otimes \sigma^\alpha_x$, 
$\mm{L}^T_{\aa,x} = \sum_{\alpha\in{\cal J}} \mm{L}^\alpha\otimes \one_{\rm b} \otimes (\sigma^\alpha_x)^T$, 
$\mm{L}_{\bbb,x} = \sum_{\alpha\in{\cal J}} \one_{\rm a} \otimes \mm{L}^\alpha\otimes \sigma^\alpha_x$, 
we find again the corresponding Sutherland relation
\begin{equation}
[h_{1,2},\vmbb{L}_{1} \vmbb{L}_{2}] = 2\sin\eta 
\left( \widetilde{\vmbb{L}}_{1} \vmbb{L}_{2} - \vmbb{L}_{1} \widetilde{\vmbb{L}}_{2} \right).
\label{eq:LOD2}
\end{equation}
This identity can be proven directly using the Leibniz rule and Sutherland relations (\ref{eq:LOD},\ref{eq:LODt}), or it can be again derived by differentiating YBE
for the triple \footnote{Note that the physical spin space carries the fundamental representation of $U_q(\mathfrak{sl}_2)$, ${\cal H}_{\rm p}\equiv {\cal V}_{\frac{1}{2}}$.}
 ${\cal V}_{\frac{1}{2}}\otimes {\cal V}_{\frac{1}{2}} \otimes ({\cal V}^T_s\otimes {\cal V}_t)$, 
\begin{eqnarray}
&& R_{1,2}(\delta_1-\delta_2)\vmbb{L}_{1}(\varphi+\delta_1,\vartheta-\delta_1,s,t)\vmbb{L}_{2}(\varphi+\delta_2,\vartheta-\delta_2,s,t)  \\
&& =\vmbb{L}_{1}(\varphi+\delta_2,\vartheta-\delta_2,s,t)\vmbb{L}_{2}(\varphi+\delta_1,\vartheta-\delta_1,s,t) R_{1,2}(\delta_1-\delta_2).\label{eq:YBE2}
\end{eqnarray}
For a notational convenience, we are using double-strike-roman fonts to designate operators which act non-trivially over a tensor product of a pair of auxiliary spaces ${\cal H}_{\rm a}\otimes {\cal H}_{\rm b}$.

\subsection{Matrix product solution -- Lax operator method}
\label{subsect:XXX2}

Sutherland relations can be straightforwardly facilitated to solve/satisfy defining relations (\ref{eq:def},\ref{eq:bcOm}) for the amplitude operator. This idea has first been implemented in Ref.~\cite{KPS13}, even though
the corresponding Lax structure and Yang-Baxter equation have been identified only later in Ref.~\cite{PIP}. 

Writing (\ref{eq:LODt}) for a pair of physical sites $(x,x+1)$, multiplying with $\mm{L}^T_{1}\cdots\mm{L}^T_{x-1}$ from the left and with
$\mm{L}^T_{x+1}\cdots\mm{L}^T_{n}$ from the right, and summing over $x=1,\ldots,n$, one obtains a telescopic series yielding
\begin{equation}
[H,\mm{L}^T_{1}\mm{L}^T_{2} \cdots \mm{L}^T_{n}] = 2\sin\eta (\widetilde{\mm{L}}_{1}\mm{L}^T_{2}\cdots\mm{L}^T_{n} - 
\mm{L}^T_{1}\cdots \mm{L}^T_{n-1}\widetilde{\mm{L}}_{n}).
\label{HLT}
\end{equation}
Making an ansatz
\begin{equation}
\Omega_n = \frac{1}{\sin^n(\varphi+\eta s)}\bra{0}\mm{L}^T_{1}\mm{L}^T_2\cdots \mm{L}^T_n\ket{0}
\label{eq:ansatz1}
\end{equation}
one sees that since 
\begin{eqnarray}
\bra{0}\widetilde{\mm{L}} &=&  (\sigma^0  \cos\varphi\cos\eta s - \sigma^\z\sin\varphi\sin\eta s )\bra{0},\nonumber \\
\widetilde{\mm{L}}\ket{0} &=&  \ket{0}(\sigma^0\cos\varphi\cos\eta s  - \sigma^\z\sin\varphi\sin\eta s), \label{Ltb}
\end{eqnarray}
$\Omega_n$ satisfies (\ref{eq:def}) if
\begin{equation}
\varphi=\frac{\pi}{2}, \quad \tan\eta s = \frac{\ii \varepsilon}{2\sin\eta}.
\label{eq:ans1}
\end{equation}
The second condition (\ref{eq:bcOm}) of Lemma 1 is satisfied as well due to normalisation of the ansatz (\ref{eq:ansatz1}) and the lowest weight nature of representation.
Equivalently, since $[h,\sigma^\z\otimes \sigma^\z]=0$ one can use another gauge and take a twisted Lax operator $\mm{L}^T\sigma^\z$ again solving the Sutherland equation. Therefore, another ansatz
\begin{equation}
\Omega_n = \frac{1}{\sin^n(\varphi+\eta s)}\bra{0}\mm{L}^T_1\mm{L}^T_2\cdots \mm{L}^T_n\ket{0}(\sigma^\z)^{\otimes n}
\label{eq:ansatz2}
\end{equation}
solves the same defining equations (\ref{eq:def},\ref{eq:bcOm}), and provides the same NESS density operator (\ref{eq:rhoinf}) according to Lemma 1, if
\begin{equation}
\varphi=0,\quad \cot \eta s = -\frac{\ii \varepsilon}{2\sin\eta}.
\label{eq:ans2}
\end{equation}
Of course, the ansatz (\ref{eq:ansatz1}) provides just an alternative formulation of MPA (\ref{eq:MPA1}) with the matrices identified as
\begin{eqnarray}
&&\mm{A}_0=(\sec\eta s)\mm{L}^0(\frac{\pi}{2},s) = \frac{\cos(\eta \mm{s}^\z_s)}{\cos\eta s},\nonumber \\
&&\mm{A}_\pm= (\sec{\eta s})\mm{L}^\mp(\frac{\pi}{2},s) = \frac{\sin\eta}{\cos(\eta s)}\mm{s}^\pm_s,\quad
 \tan\eta s = \frac{\ii \varepsilon}{2\sin\eta}.\quad \label{eq:L2A}
\end{eqnarray}
Or, alternatively, picking representation (\ref{eq:ansatz2},\ref{eq:ans2}):
\begin{eqnarray}
\mm{A}_0=(\csc\eta s)\mm{L}^0(0,s) = \frac{\sin(\eta \mm{s}^\z_s)}{\sin\eta s},\nonumber\\
\mm{A}_\pm= (\csc{\eta s})\mm{L}^\mp(0,s) = \frac{\sin\eta}{\sin(\eta s)}\mm{s}^\pm_s,\quad \cot \eta s = -\frac{\ii \varepsilon}{2\sin\eta}. \label{eq:L2Ab}
\end{eqnarray}
Both cases reproduce exactly the result (\ref{eq:result1}), up to a gauge transformation $\ket{k}\to c_k\ket{k},\bra{k}\to c^{-1}_k\bra{k}$ which does not affect the MPA.
For instance, $c_k$ can even be chosen to make all transition amplitudes of $\mm{A}^\pm,\mm{A}$ {\em linear} in dissipation $\varepsilon$ as in Ref.~\cite{P11b}.

We note that, as a consequence of the spin-reversal symmetry of the Lax operator (\ref{eq:spinrevL}), 
the non-transposed, highest-weight Lax operator at (\ref{eq:ans1}) reproduces the lower-triangular amplitude operator $\Omega'_n$ (\ref{eq:Omp})
for the {\em reverse driving} (\ref{eq:revdriv})
\begin{equation}
\Omega'_n = \frac{1}{\sin^n(\varphi+\eta s)}\bra{0}\mm{L}_1\mm{L}_2\cdots \mm{L}_n\ket{0}.
\end{equation}

\subsection{Matrix product solution -- the case of isotropic ($XXX$) chain}

\label{subsect:XXX}

Let us here briefly list the result (appearing e.g. in Ref.~\cite{PIP}) for $SU(2)$ symmetric $XXX$ chain with $\Delta=1$ which correspond to the limit $\eta\to 0$, properly
normalised when needed, of the results derived in the previous subsections. The spectral parameter we set now as $\lambda = \varphi/\eta$, so the $R$-matrix
and the Lax operator read, respectively,
\begin{eqnarray}
R(\lambda)&\longleftarrow&\lim_{\eta\to 0} \frac{1}{\eta}R(\varphi) = \one_4 + \lambda P_{1,2} \\
\mm{L}(\lambda,s)&\longleftarrow&\lim_{\eta\to 0} \frac{1}{\eta} \mm{L}(\varphi,s) = \pmatrix{ \lambda\one_{\rm a} + \mm{s}^\z_s & \mm{s}^-_s \cr 
\mm{s}^+_s & \lambda\one_{\rm a} - \mm{s}^\z_s} = \lambda \one + \vec{\mm{s}}_s\cdot\vec{\sigma}, \nonumber
\end{eqnarray}
where $\vec{\mm{s}}_s=(\half(\mm{s}^+_s+\mm{s}^-_s),-\ihalf(\mm{s}^+_s - \mm{s}^-_s),\mm{s}^\z_s)$ and $P_{1,2}=\half(\vec{\sigma}_1\cdot\vec{\sigma}_2 + \one)$ is a permutation operator over ${\cal H}_{\rm p}\otimes {\cal H}_{\rm p}$ and $\mm{s}^\z_s,\mm{s}^\pm_s$ have become standard
generators of angular-momentum algebra $\mathfrak{sl}_2$ for a complex spin $s$
\begin{eqnarray}
\mm{s}^\z_s &=& \sum_{k=0}^\infty (s-k) \ket{k}\!\bra{k}, \nonumber\\
\mm{s}^+_s &=& \sum_{k=0}^\infty (k+1) \ket{k}\!\bra{k+1}, \label{verma} \\
\mm{s}^-_s &=& \sum_{k=0}^\infty (2s-k) \ket{k+1}\!\bra{k}, \nonumber
\end{eqnarray}
which is genuinely infinite-dimensional, unless $s \in \half\ZZ^+$.
The MPA (\ref{eq:MPA1}) for NESS amplitude operator has now generating matrices, obtained from the case (\ref{eq:L2Ab}), as
\begin{equation}
\mm{A}_0 = \frac{\mm{s}^\z_s}{s},\quad
\mm{A}_\pm = \frac{\mm{s}^\pm_s}{s},\quad
s = \frac{2\ii}{\varepsilon}.
\end{equation}

\subsection{Generalizations of boundary driving}

The solution of NESS for $XXZ$ model  presented so far refers to an extremely minimalistic boundary condition with a pure source and pure sink of 
equal rates at each end. Using generalisations of our approach we can expand the integrable boundary conditions in three different directions: (i) allowing arbitrary left-right asymmetry in the source-sink rates combined with additional arbitrary magnetic fields at the boundary sites,
(ii) allowing the source and sink jump operators to act with respect to non-parallel axes (like $\z$-axis in our previous example),
so the two target states cannot be chosen mutually diagonal, (iii) allowing four different rates at the two boundaries for two independent in\&out 
processes at each end, but only perturbatively in the system-bath coupling constant.
We shall describe these developments in some detail in the three paragraphs below.

\subsubsection{Left-right asymmetry and combined coherent-incoherent driving.}
\label{subsect:asym}

Here we are considering the quantum master equation with a combination of asymmetric coherent and incoherent driving \cite{PIP15}. The former is provided by adding an arbitrary magnetic field
at the chain ends
\begin{equation}
H_{\rm b} = \sum_{x=1}^{n-1} h_{x,x+1} + b_{\rm L} \sigma^\z_1 + b_{\rm R} \sigma^\z_n = H + b_{\rm L} \sigma^\z_1 + b_{\rm R} \sigma^\z_n,
\label{eq:XXZbf}
\end{equation}
while for the latter we allow arbitrary rates of the source and the sink
\begin{equation}
L_1 = \sqrt{\Gamma_{\rm L}}\sigma^+_1,\quad L_2 = \sqrt{\Gamma_{\rm R}}\sigma^-_n,
\label{eq:asymL}
\end{equation}
so the total Liovillian generator reads $\LL = -\ii\,{\rm ad}\, H_{\rm b} + \Gamma_{\rm L} \DD_{\sigma^+_1} + \Gamma_{\rm R} \DD_{\sigma^-_n}$.

We start by making the following ansatz for NESS
\begin{equation}
\!\!\!\!\!\!\!\!\!\!\!\!\!
R_\infty = \bra{0,0}\prod_{x=1}^n \vmbb{L}_{x}(\varphi,\vartheta,s,t)\ket{0,0} K^{\otimes n} = \bra{0,0}(\vmbb{L}(\varphi,\vartheta,s,t)K)^{\otimes n}\ket{0,0}
\label{eq:Ransatz}
\end{equation}
where $\vmbb{L}_x$ is the double Lax operator (\ref{eq:doubleLax}) over the tensor product of a pair of Verma modules ${\cal H}_{\rm a}\otimes{\cal H}_{\rm b}\equiv {\cal V}^T_s\otimes{\cal V}_t$ and 
\begin{equation}
K = K(\chi) := \pmatrix{ 
\chi^{1/2} & 0 \cr 
0 & \chi^{-1/2} }
\end{equation}
is a magnetization-shift matrix satisfying, for any $\chi \in\RaR^+$,
\begin{equation}
[h,K\otimes K]=0.\label{eq:MM}
\end{equation}
Using Sutherland relation (\ref{eq:LOD2}) and Eq.~(\ref{eq:MM}) one again implements the telescopic series to find for the commutator with the entire Hamiltonian (\ref{eq:XXZbf}):
\begin{eqnarray}
[H_{\rm b}, (\vmbb{L} K)^{\otimes n}] &=& \left(2(\sin\eta)\widetilde{\vmbb{L}}K+b_{\rm L}[\sigma^\z,\vmbb{L}K]\right) \otimes (\vmbb{L}K)^{\otimes n-1} \nonumber \\
&-& (\vmbb{L}K)^{\otimes n-1}\otimes \left(2(\sin\eta)\widetilde{\vmbb{L}}K-b_{\rm R}[\sigma^\z,\vmbb{L}K]\right).
\label{eq:bulk}
\end{eqnarray}
Hence the sufficient condition for the steady state Lindblad equation 
\begin{equation}
\ii [H_{\rm b},R_\infty] = \Gamma_L \DD_{\sigma^+_1}(R_\infty) + \Gamma_R \DD_{\sigma^-_n}(R_\infty)
\end{equation}
to hold is to satisfy a pair of boundary equations on ${\cal H}_{\aa}\otimes {\cal H}_{\bbb}\otimes {\cal H}_{\rm p}$:
\begin{eqnarray}
&& \bra{0,0} \left(-2\ii(\sin\eta) \widetilde{\vmbb{L}}K + \Gamma_{\rm L} \DD_{\sigma^+}(\vmbb{L}K) - \ii b_{\rm L}[\sigma^\z,\vmbb{L}K] \right) = 0, \nonumber \\
&& \left(2\ii(\sin\eta)\widetilde{\vmbb{L}}K + \Gamma_{\rm R} \DD_{\sigma^-} (\vmbb{L}K)  - \ii b_{\rm R}[\sigma^\z,\vmbb{L}K]\right)\ket{0,0} = 0.
\label{eq:be1}
\end{eqnarray}
For convenience of calculations one may write the local left and right dissipators as explicit matrix maps over ${\rm End}({\cal H}_{\rm p})$
\begin{equation}
\DD_{\sigma^+} \pmatrix{a & b \cr c & d} = \pmatrix{2d & -b \cr -c & -2d},\quad
\DD_{\sigma^-} \pmatrix{a & b \cr c & d} = \pmatrix{-2a & -b \cr -c & 2a}.
\end{equation}
Let us write the two representation parameters as $s,t\in\CC$ and the corresponding generators of $U_q(\mathfrak{sl}_2)$ as 
\begin{eqnarray}
\mm{s}^\pm\equiv\mm{s}^\pm_s\otimes \one_{\rm b},\;
\mm{s}\equiv \mm{s}^\z_s\otimes\one_{\rm b},\;
\mm{t}^\pm\equiv\one_{\rm a}\otimes \mm{s}^\pm_t,\;
\mm{t}\equiv \one_{\rm a}\otimes\mm{s}^\z_t,
\end{eqnarray}
hence
\begin{eqnarray}
\!\!\!\!\!\!\!\!\!\!\!\!\!\!\!\!\!\!\!\!\!\!\!\!\!\!\!\!\!\!\!\!\!\!\!\!
\vmbb{L} &=& \pmatrix{ \sin(\varphi\!-\!\eta\mm{s})\sin(\vartheta\!-\!\eta\mm{t}) + (\sin^2\eta)\mm{s}^+\mm{t}^+ &
(\sin(\varphi\!-\!\eta\mm{s})\mm{t}^- +\sin(\vartheta\!+\!\eta\mm{t})\mm{s}^+)\sin\eta\cr
(\sin(\vartheta\!-\!\eta\mm{t})\mm{s}^- + \sin(\varphi\!+\!\eta\mm{s})\mm{t}^+)\sin\eta & 
\sin(\varphi\!+\!\eta\mm{s})\sin(\vartheta\!+\!\eta\mm{t}) + (\sin^2\eta)\mm{s}^-\mm{t}^-}\!, \nonumber \\
\!\!\!\!\!\!\!\!\!\!\!\!\!\!\!\!\!\!\!\!\!\!\!\!\!\!\!\!\!\!\!\!\!\!\!\!
\widetilde{\vmbb{L}} &=&
\pmatrix{ \sin(\varphi-\eta\mm{s}-\vartheta+\eta \mm{t}) &
(\cos(\vartheta\!+\!\eta\mm{t})\mm{s}^+\!-\!\cos(\varphi\!-\!\eta\mm{s})\mm{t}^-)\sin\eta 
\cr
(\cos(\vartheta\!-\!\eta\mm{t})\mm{s}^-\!-\!\cos(\varphi\!+\!\eta\mm{s})\mm{t}^+) \sin\eta & 
\sin(\varphi+\eta\mm{s}-\vartheta-\eta\mm{t})}\!\!.\quad \label{eq:LL2}
\end{eqnarray}
Noting the identities 
\begin{eqnarray}
&& \mm{s}\ket{0,0} = s\ket{0,0},\quad \mm{s}^+\ket{0,0} = 0, \quad \mm{s}^-\ket{0,0}= (\sin(2\eta s)/\sin\eta)\ket{1,0},\nonumber \\
&& \bra{0,0}\mm{s}=s\bra{0,0} \quad \bra{0,0}\mm{s}^+=\bra{1,0},\quad \bra{0,0}\mm{s}^-=0,\nonumber \\
&& \mm{t}\ket{0,0} = t\ket{0,0}, \quad \mm{t}^+\ket{0,0} = 0,\quad \mm{t}^-\ket{0,0}= (\sin(2\eta t)/\sin\eta)\ket{0,1}, \nonumber \\
&& \bra{0,0}\mm{t}=t\bra{0,0}, \quad \bra{0,0}\mm{t}^+=\bra{0,1},\quad \bra{0,0}\mm{t}^-=0, 
\end{eqnarray} 
the boundary equations (\ref{eq:be1}) amount to two sets of $2\times 2$ equations (components in ${\rm End}({\cal H}_{\rm p})$), where only 5 are independent for 5 unknown parameters $s,t,\varphi,\vartheta,\chi$, specifically:
\begin{eqnarray}
&&\tan(\varphi-\eta s) = \frac{2 \ii \sin\eta}{\Gamma_{\rm L}-2\ii b_{\rm L}},\label{eq:asymb}\\
&&\tan(\varphi+\eta s) = -\frac{2 \ii \sin\eta}{\Gamma_{\rm R}+2\ii b_{\rm R}}, \\
&&\tan(\vartheta-\eta t) = -\frac{2 \ii \sin\eta}{\Gamma_{\rm L}+2\ii b_{\rm L}},\\ 
&&\tan(\vartheta+\eta t) = \frac{2 \ii \sin\eta}{\Gamma_{\rm R}-2\ii b_{\rm R}},\label{eq:asymc} \\
&&\chi^2 \frac{\sin(\vartheta+\eta t - \varphi - \eta s)}{\sin(\vartheta-\eta t - \varphi + \eta s)} = -\frac{\Gamma_{\rm L}}{\Gamma_{\rm R}} \frac{\sin(\varphi-\eta s)}{\sin(\varphi+\eta s)} \frac{\sin(\vartheta-\eta t)}{\sin(\vartheta+\eta t)},
\label{eq:asyme}
\end{eqnarray}
while the other 3 equations reduce to identities. One then easily finds an explicit solution
\begin{eqnarray}
&&\varphi = \bar{\vartheta} = \frac{\ii}{2}(z_{\rm L} - z_{\rm R}),\label{eq:varphi}\\
&&\eta s = \overline{\eta t} = \frac{\ii}{2}(z_{\rm L} + z_{\rm R}),\\
&&\chi = \bar{\chi} =  \frac{\Gamma_{\rm L}}{\Gamma_{\rm R}} \sqrt{\frac{(\quart\Gamma_{\rm R}^2-b_{\rm R}^2 -\sin^2 \eta)^2+ \Gamma^2_{\rm R} b^2_{\rm R}}{(\quart\Gamma_{\rm L}^2-b_{\rm L}^2 - \sin^2\eta)^2 + \Gamma^2_{\rm L} b^2_{\rm L}}}, 
\label{eq:parXXZ1}
\end{eqnarray}
where
\begin{equation}
z_{\rm L} := \frac{1}{\ii}\arctan \frac{2\ii \sin\eta}{\Gamma_{\rm L}-2\ii b_{\rm L}},\quad
z_{\rm R} := \frac{1}{\ii}\arctan \frac{2\ii \sin\eta}{\Gamma_{\rm R}+2\ii b_{\rm R}}.
\end{equation}
Note that such NESS is again of Cholesky form, namely defining 
\begin{equation}
\Omega_n(\varphi,s,\chi) := \bra{0}\mm{L}^T_1(\varphi,s)\mm{L}^T_2(\varphi,s)\cdots\mm{L}^T_n(\varphi,s)\ket{0} \pmatrix{\chi^{1/4} & 0 \cr 0 & \chi^{-1/4}}^{\otimes n}\!\!\!\!\!\!,\;
\label{eq:OmegaXXZ}
\end{equation} 
which is a lower-triangular matrix, one can write the non-normalized density operator of NESS, since $[\Omega(\varphi,s,\chi)]^\dagger = [\Omega(\bar{\varphi},\bar{s},\chi)]^T$,
as
\begin{equation}
R_\infty = \Omega_n(\varphi,s,\chi) [\Omega_n(\varphi,s,\chi)]^\dagger.
\label{eq:nessXXZ1}
\end{equation}

In the isotropic case $\Delta=1$ the NESS solution, after writing $\lambda=\varphi/\eta$ and taking the limit $\eta\to 0$, can be written compactly as
\begin{eqnarray}
&&R_\infty = \Omega_n(\lambda,s,\chi)[\Omega_n(\lambda,s,\chi)]^\dagger,\nonumber\\
&& \Omega_n(\lambda,s,\chi) = \bra{0}\mm{L}^T(\lambda,s)^{\otimes n}\ket{0} K(\sqrt{\chi})^{\otimes n},
\label{eq:nessXXX1}
\end{eqnarray}
with the spectral, representation and magnetisation parameters, respectively
\begin{eqnarray}
\lambda &=& \frac{\ii}{\Gamma_{\rm L} - 2\ii b_{\rm L}} - \frac{\ii}{\Gamma_{\rm R} + 2\ii b_{\rm R}}, \nonumber \\
s &=& \frac{\ii}{\Gamma_{\rm L} - 2\ii b_{\rm L}} + \frac{\ii}{\Gamma_{\rm R} + 2\ii b_{\rm R}}, \nonumber\\
\chi &=& \frac{(\Gamma_{\rm R}^2 + 4 b_{\rm R}^2)\Gamma_{\rm L}}{(\Gamma_{\rm L}^2 + 4 b_{\rm L}^2)\Gamma_{\rm R}}.
\label{eq:parXXX1}
\end{eqnarray}

\subsubsection{$SU(2)-$twisted boundary driving.}

\label{subsect:twist}

In the isotropic case $\Delta=1$ we shall now further exploit the $SU(2)$ symmetry of the problem to map the solution (\ref{eq:nessXXX1},\ref{eq:parXXX1}) to the 
NESS for a more general class of dissipators (essentially following Refs.~\cite{KPS13,PKS13}). We start by elaborating on rotation symmetry of the double Lax operator $\vmbb{L}$ entering the boundary conditions (\ref{eq:be1}).
We construct a {\em pseudo-representation} of the rotation group over ${\cal H}_\aa\otimes {\cal H}_\bbb \otimes {\cal H}_{\rm p} \equiv {\cal V}^T_s \otimes {\cal V}_t\otimes {\cal V}_{\frac{1}{2}}$,
namely choosing an angle $\theta$ and a {\em unit vector} $\vec{u}$ (axis of rotation) we define
\begin{eqnarray}
\vmbb{U}(\theta,\vec{u}) &=& \exp\left(\ii \theta \vec{u}\cdot\left(\vec{\mm{s}}\otimes \one_2 +\vec{\mm{t}}\otimes \one_2 + \one_\aa\otimes\one_\bbb\otimes\frac{\vec{\sigma}}{2}\right)\right) \\
                  &=& \mm{U}_s(\theta,\vec{u})\otimes \mm{U}_t(\theta,\vec{u}) \otimes U(\theta,\vec{u}), \label{eq:factorSU2}
\end{eqnarray}
where $\mm{U}_s(\theta,\vec{u}) = \exp(\ii\theta \vec{u}\cdot\vec{\mm{s}}_s)$, $U(\theta,\vec{u}) = \exp(\ii\frac{\theta}{2}\vec{u}\cdot\vec{\sigma})$, {\em formally} implementing
\footnote{One perhaps needs to stress that the operator $\mm{U}_s$ may not exist as an element of ${\rm End}({\cal H}_{\rm a})$ but its action on the highest weight state $\mm{U}_s \ket{0}$, 
$\bra{0}\mm{U}_s $ is well-defined and computable, and that is all what we need here.}
the full $SL(2)$ symmetry of the non-Hermitian double Lax operator
\begin{equation}
\vmbb{U}(\theta,\vec{u}) \vmbb{L} \vmbb{U}(-\theta,\vec{u}) = \vmbb{L},\quad \vmbb{U}(\theta,\vec{u}) \widetilde{\vmbb{L}} \vmbb{U}(-\theta,\vec{u}) = \widetilde{\vmbb{L}}.
\label{eq:fullSU2}
\end{equation}
The boundary equations (\ref{eq:be1}), substituting RHSs of (\ref{eq:fullSU2}) by the corresponding LHSs, 
and using factorisation (\ref{eq:factorSU2}), and requiring $\chi=1$ so that $[U(\theta,\vec{u}),K]\equiv 0$,
map to
\begin{eqnarray}
&&\!\!\!\!\!\!\!\!\!\!\!\!\!\!\!\!
(\bra{0}\mm{U}_s\otimes \bra{0}\mm{U}_t)\left(-2\ii(\sin\eta) \widetilde{\vmbb{L}} + \Gamma_{\rm L} \DD_{U\sigma^+U^{\dagger}}(\vmbb{L}) - \ii b_{\rm L}[U\sigma^\z U^\dagger,\vmbb{L}] \right) = 0, \nonumber \\
&&\!\!\!\!\!\!\!\!\!\!\!\!\!\!\!\!
\left(2\ii(\sin\eta)\widetilde{\vmbb{L}} + \Gamma_{\rm R} \DD_{U \sigma^- U^\dagger} (\vmbb{L})  - \ii b_{\rm R}[U \sigma^\z U^\dagger,\vmbb{L}]\right)
(\mm{U}^\dagger_s \ket{0}\otimes \mm{U}^\dagger_t\ket{0}) = 0. \qquad
\label{eq:betwisted}
\end{eqnarray}
Note that the two formal $SL(2)$ transformations for two, left and right boundary conditions can be independent, say $\vmbb{U}(\theta_{\rm L},\vec{u}_{\rm L})$ and 
$\vmbb{U}(\theta_{\rm R},\vec{u}_{\rm R})$, and without loss of generality we may chose the axes of rotation to lie in the $x-y$ plane, 
$\vec{u}_{\rm L/R}=(\sin\phi_{\rm L/R},-\cos\phi_{\rm L/R},0)$.
Thus the Eqs.~(\ref{eq:betwisted}), in combination with $SU(2)$ invariant bulk condition (\ref{eq:bulk}) [for $K=\one_2$] and noting that $t=\bar{s}$, implies that
the density matrix
\begin{eqnarray}
&&R_\infty = \Omega^{\rm twist}_n(\lambda,s,\chi)[\Omega^{\rm twist}_n(\lambda,s,\chi)]^\dagger,\nonumber\\
&& \Omega^{\rm twist}_n(\lambda,s,\chi) = \bra{\psi_{\rm L}} \mm{L}^T(\lambda,s)^{\otimes n}\ket{\psi_{\rm R}} ,\\
&& \bra{\psi_{\rm L}} = \bra{0} \mm{U}_s (\theta_{\rm L},\vec{u}_{\rm L}),\nonumber\\
&& \ket{\psi_{\rm R}} = \mm{U}_s(-\theta_{\rm R},\vec{u}_{\rm R})\ket{0}, \label{eq:coh1}
\end{eqnarray}
with $\lambda$ and $s$ determined from the first two lines of (\ref{eq:parXXX1}),
represents an exact NESS of the Lindbladian dynamics with the twisted jump operators
\begin{eqnarray}
&& L^{\rm twisted}_1 = U(\theta_{\rm L},\vec{u}_{\rm L}) \sigma^+_1 U(-\theta_{\rm L},\vec{u}_{\rm L})\\
&&=  \frac{e^{-\ii\varphi_{\rm L}}}{2}(
\cos\theta_{\rm L}\cos\phi_{\rm L}-\ii\sin\phi_{\rm L},
\cos\theta_{\rm L}\sin\phi_{\rm L}+\ii\cos\phi_{\rm L},
-\sin\theta_{\rm L})\cdot\vec{\sigma}_1, \nonumber\\
&&L^{\rm twisted}_2 = U(\theta_{\rm R},\vec{u}_{\rm R}) \sigma^-_n U(-\theta_{\rm R},\vec{u}_{\rm R}) = \nonumber\\
&&=  \frac{e^{\ii\varphi_{\rm R}}}{2}(
\cos\theta_{\rm R}\cos\phi_{\rm R}+\ii\sin\phi_{\rm R},
\cos\theta_{\rm R}\sin\phi_{\rm R}-\ii\cos\phi_{\rm R},
-\sin\theta_{\rm R})\cdot\vec{\sigma}_n, \nonumber 
\end{eqnarray}
and for the Hamiltonian with twisted boundary fields:
\begin{eqnarray}
H^{\rm twist}_{\rm b} = \sum_{x=1}^{n-1} h_{x,x+1} &+& b_{\rm L} (\sin\theta_{\rm L}\cos\phi_{\rm L},\sin\theta_{\rm L}\sin\phi_{\rm L},\cos\theta_{\rm L})\cdot\vec{\sigma}_1\nonumber\\
 &+& b_{\rm R}   (\sin\theta_{\rm R}\cos\phi_{\rm R},\sin\theta_{\rm R}\sin\phi_{\rm R},\cos\theta_{\rm R})\cdot\vec{\sigma}_n.
\label{eq:XXZtw}
\end{eqnarray}
The states $\bra{\psi_{\rm L}}$ and $\ket{\psi_{\rm R}}$, (\ref{eq:coh1}), are just the $SU(2)$ coherent states over the complex spin Verma module, namely  
\begin{eqnarray}
&&\!\!\!\!\!\!\!\!\!\!\!\!\!\!\!\!\!\!\!\!\!\!\!
\bra{\psi_{\rm L}} = \bra{0}\exp( \psi_{\rm L}\mm{s}^+_s-\bar{\psi}_{\rm L}\mm{s}^-_s) = |\!\cos\theta_{\rm L}|^{2s}\sum_{k=0}^\infty
 {2s\choose k}\left(-\tan{\frac{\theta_{\rm L}}{2}}\right)^k e^{-\ii k\phi_{\rm L}}\bra{k}
 ,\nonumber \\
&&\!\!\!\!\!\!\!\!\!\!\!\!\!\!\!\!\!\!\!\!\!\!\!
\ket{\psi_{\rm R}} = \exp( \psi_{\rm R} \mm{s}_s^+ - \bar{\psi}_{\rm R}\mm{s}^-_s)\ket{0} = |\!\cos\theta_{\rm R}|^{2s}\sum_{k=0}^\infty
\left(-\tan\frac{\theta_{\rm R}}{2}\right)^k e^{\ii k\phi_{\rm R}}\ket{k},
\label{eq:cs}
\end{eqnarray}
where the complex coherent-state parameters read
\begin{equation}
\psi_{\rm L} = -\frac{\theta_{\rm L}}{2}e^{-\ii\phi_{\rm L}},\quad \psi_{\rm R} = \frac{\theta_{\rm R}}{2}e^{-\ii\phi_{\rm R}}.
\end{equation}
The expansions are derived using a well known $SU(2)$ disentangling formula:
\begin{equation}
\!\!\!\!\!\!\!\!\!\!\!\!\!\!\!\!\!\!\!\!\!\!\!\!\!\!\!\!\!\!\!\!\!\!\!
\exp(\theta e^{\ii \phi} \mm{s}^+ - \theta e^{-\ii \phi} \mm{s}^-) =\exp(-\mm{s}^- e^{-\ii\phi}\tan\theta) \exp(2\mm{s}^\z\log|\!\cos\theta|)  \exp(\mm{s}^+ e^{\ii\phi}\tan\theta).
\end{equation}
Note that solvability condition $(\chi=1)$ in this case
\begin{equation}
(\Gamma^2_{\rm L} + 4 h^2_{\rm L})\Gamma_{\rm R} = (\Gamma^2_{\rm R} + 4 h^2_{\rm R})\Gamma_{\rm L}.\quad
\end{equation}
still allows some left-right asymmetry, in which case our solution goes beyond what has been discussed in Ref.~\cite{KPS13,PKS13}. Due to rotational symmetry, we could of course without loss of generality  fix three out of four angles $\phi_{\rm L}=\phi_{\rm L}=\theta_{\rm R}=0$ and keep only the relative twisting angle $\theta_{\rm R}$ between the
source and sink measurement axes \cite{PKS13}. Nevertheless, it is perhaps illuminating to write out the full rotationally symmetric parametrisation of the twisted boundary driven solution for possible further generalisations or deformations.

As we have seen, general analytic solutions for twisted boundary driving are limited to the isotropic case $\Delta=1$ only. However, interesting nontrivial properties of the spin current under twisted driving have been observed in the 
anisotropic case $\Delta\neq 1$ in Ref.~\cite{popkov} by exact analysis of short chains. The question of non-equilibrium integrability in such a case remains open.
 
\subsubsection{Perturbative driving with source\&sink processes on each end.}

The explicit forms of NESS of boundary driven $XXZ$ chains that we discussed so far were all characterized with simple, ultra local, rank-one dissipators, which can
be considered as a source and a sink of spin excitations with respect to some measurement bases. In the classical integrable locally interacting Markov chains, e.g. ASEP with open boundaries \cite{BE07}, one can analytically solve more complicated boundary conditions with in and out incoherent processes on each side. In the
framework of Lindblad equation, these would be described by four Lindblad channels,  with four non-negative rates $\Gamma^\pm_{\rm L/R} \ge 0$:
\begin{equation}
L_1 = \sqrt{\Gamma^+_{\rm L}}\sigma^+_1,\quad
L_2 = \sqrt{\Gamma^-_{\rm L}}\sigma^-_1,\quad
L_3 = \sqrt{\Gamma^+_{\rm R}}\sigma^+_n,\quad
L_4 = \sqrt{\Gamma^-_{\rm R}}\sigma^-_n,\;\;
\label{eq:L4}
\end{equation}
generating the Liouvilllian $\LL = -\ii\,{\rm ad}\,H_{\rm b} + \sum_{\mu=1}^4 \DD_{L_\mu}$ in terms of Hamiltonian (\ref{eq:XXZbf}) and canonical dissipators (\ref{eq:candis}).
In analogy to ASEP, one may hope that the NESS can be written with an ansatz generalizing (\ref{eq:Ransatz}),
 $R_\infty = \bra{\Psi_{\rm L}}(\vmbb{L}K)^{\otimes n}\ket{\Psi_{\rm R}}$, where $\ket{\Psi_{\rm L/R}} \in {\cal H}_\aa\otimes{\cal H}_\bbb$ are some free auxiliary states.
However, a straightforward calculation performed by the author showed that the above ansatz is {\em insufficient}, i.e. there is generally no solution for $ \ket{\Psi_{\rm L/R}}$ and parameters
$\varphi,\vartheta,s,t,\chi$. Therefore, finding an exact solution of NESS for $XXZ$ chain driven by four channel boundary dissipation remains a challenging open problem.

What one can do instead is to solve the problem perturbatively if all the coupling and driving rates are uniformly small (see Ref.~\cite{P11a}).
Writing 
\begin{equation}
\Gamma^\pm_{\rm L/R} = \varepsilon \gamma^\pm_{\rm L/R},\quad b_{\rm L/R} = \varepsilon \mu_{\rm L/R},
\end{equation} 
where $\varepsilon$ is a {\em formal small parameter} and expressing an un-normalized density operator as a power series 
\begin{equation}
R_\infty = \sum_{p=0}^\infty (\ii \varepsilon)^p \rho^{(p)}
\label{eq:pertseries}
\end{equation}
we get a recurrence relation connecting subsequent orders (where $H$ below now denotes the free-boundary $XXZ$ Hamiltonian (\ref{eq:HXXZ}))
\begin{eqnarray}
&&(\!\ad H)\rho^{(0)} = 0, \\
&&(\!\ad H)\rho^{(p)} = -\DD(\rho^{(p-1)}), \quad p=1,2\ldots,\quad {\rm where}\label{eq:pertp}\\
&&\DD=\gamma^+_{\rm L}\DD_{\sigma^+_1}+\gamma^-_{\rm L}\DD_{\sigma^-_1} + \gamma^+_{\rm R}\DD_{\sigma^+_n}+\gamma^-_{\rm R}\DD_{\sigma^-_n}-
\ii\mu_{\rm R}\ad\sigma^\z_n- \ii\mu_{\rm L}\ad\sigma^\z_1. \nonumber
\end{eqnarray}
From the uniqueness of NESS it follows that each term $\rho^{(p)}$ in the formal series expansion (\ref{eq:pertseries}) should also be determined uniquely.
This means that even though at each fixed order $p$, the solution of Eq.~(\ref{eq:pertp}) $\rho^{(p)}$ is undetermined up to addition of an arbitrary element from the {\em kernel}
of $\ad H$ (operator which commutes with $H$), there is always a unique element $\rho^{(p)}$ such that $\DD (\rho^{(p)})$ is in the {\em image} of $\ad H$ (it is Hilbert-Schmidt orthogonal $\tr\bigl\{X^\dagger \DD(\rho^{(p)})\bigr\} = 0$ to all operators $X$ which commute with $H$), so that the equation in the next order $p+1$ can have a solution.
Note that the map $\ad H$ is self-adjoint w.r.t. Hilbert-Schmidt inner product hence the orthogonal complement of its image is its kernel.

In the leading nontrivial order, the perturbative solution of NESS can be encoded compactly in terms of the $Z$-operator \cite{P11a}, a {\em strictly upper triangular matrix} $Z\in{\rm End}({\cal H}_n^{\otimes n})$ which satisfies a remarkable conservation law property
\begin{equation}
[H,Z] = -\sigma_1^\z + \sigma^\z_n.
\end{equation}
The operator whose time-derivative is composed of local operators at the chain boundaries has been termed as {\em almost conserved} \cite{IP13} and provides a useful tool to study the thermodynamic limit of quantum transport via the Lieb-Robinson bounds \cite{LR72}. Our operator $Z$ can be expressed in terms of a derivative of amplitude operator (\ref{eq:MPA1}) w.r.t. noise strength, or highest-weight transfer matrix w.r.t. representation parameter at $s=0$ \cite{PI13}
\begin{equation}
Z = -\ii \partial_\varepsilon \Omega_n|_{\varepsilon=0}= \frac{1}{2\eta\sin\eta} \partial_s \bra{0}\mm{L}^T\left(\frac{\pi}{2},s\right)^{\otimes n}\ket{0}|_{s=0}
\label{eq:Zs}
\end{equation}
and has a simple explicit MPA representation \cite{P11a}, related to (\ref{eq:AAA},\ref{eq:result1}) at $\varepsilon=0$ in an {\em extended} auxiliary space with a 
split-vacuum state
${\cal H}'_{\rm a} = {\rm lsp}\{ \ket{{\rm L}},\ket{{\rm R}},\ket{1},\ket{2}\ldots\}$
\begin{eqnarray}
&& Z = \sum_{\alpha_1,\ldots,\alpha_n\in\{ +,-,0\}} \bra{{\rm L}}\mm{A}'_{\alpha_1}\mm{A}'_{\alpha_2}\cdots \mm{A}'_{\alpha_n}\ket{{\rm R}} \sigma^{\alpha_1}\otimes \sigma^{\alpha_2} \otimes \cdots \sigma^{\alpha_n}, \label{eq:MPAZ} \\
&& \mm{A}'_0 = \ket{{\rm L}}\!\bra{{\rm L}} + \ket{{\rm R}}\!\bra{{\rm R}} + \sum_{k=1}^\infty \cos(k\eta) \ket{k}\!\bra{k},\nonumber\\
&& \mm{A}'_+ = \ket{{\rm L}}\!\bra{1} + \sum_{k=1}^\infty \sin(k\eta) \ket{k}\!\bra{k+1},\nonumber \\
&& \mm{A}'_- = \ket{1}\!\bra{{\rm R}} - \sum_{k=1}^\infty \sin((k+1)\eta) \ket{k+1}\!\bra{k}.\nonumber
\end{eqnarray}

To first order, up to ${\cal O}(\varepsilon^2)$, the following simple ansatz 
\begin{equation}
\rho^{(0)} = K(\chi)^{\otimes n},\quad \rho^{(1)} = \zeta \left(Z - Z^\dagger\right) K(\chi)^{\otimes n}
\end{equation}
solves Eqs.~(\ref{eq:pertp}), which results in a single condition 
\begin{equation}
\!\!\!\!\!\!\!\!\!\!\!\!\!\!\!\!\!\!\!\!\!\!\!\!\!\!\!\!\!\!\!\!\!\!\!\!
\zeta [H,Z-Z^\dagger]= -\gamma^+_{\rm L}K_1^{-1}\DD_{\sigma^+_1}K_1-\gamma^-_{\rm L}K^{-1}_1\DD_{\sigma^-_1}K_1 - \gamma^+_{\rm R}K_n^{-1}\DD_{\sigma^+_n}K_n-\gamma^-_{\rm R}K_n^{-1}\DD_{\sigma^-_n}K_n,\;
\end{equation}
imposing a vanishing linear combination of $\one, \sigma^\z_1,\sigma^\z_n$. Requiring the coefficients to vanish, results in a system of equations for
$\chi,\zeta$ with the unique solution:
\begin{equation}
\chi = \frac{\gamma^+_{\rm L}+\gamma^+_{\rm R}}{\gamma^-_{\rm L}+\gamma^-_{\rm R}},\quad
\zeta = \frac{1}{2} \frac{\gamma^+_{\rm L}+\gamma^+_{\rm R}+\gamma^-_{\rm L}+\gamma^-_{\rm R}}{(\gamma^+_{\rm L}+\gamma^+_{\rm R})(\gamma^-_{\rm L}+\gamma^-_{\rm R})}(\gamma^+_{\rm L}\gamma^-_{\rm R} - \gamma^-_{\rm L}\gamma^+_{\rm R}). \nonumber
\end{equation}
It is worth to remark that these leading order terms of NESS do not depend on coherent driving parameters $\mu_{\rm L/R}$.
Note that $(\chi-1)/(\chi+1)$ gives the net magnetization $\tr(\sigma^\z_x \rho_\infty)$ in NESS to leading order, while $\varepsilon \zeta$ is essentially the spin current $\tr[(\ii \sigma^+_x \sigma^-_{x+1} + {\rm h.c.})\rho_\infty]$ within the first order. 
In order to obtain the spatial modulation of the magnetization density profile, one needs to compute the second order $p=2$. With the tools at hand, this is only explicitly possible in the case
of symmetric incoherent driving 
\begin{equation}
\gamma^\pm_{\rm L} = \half (1\pm \mu),\quad\;
\gamma^\pm_{\rm R} = \half (1\mp \mu),\quad\; \mu_{\rm L/R} = 0,
\end{equation}
where the solution reads (for a simple proof see Ref.~\cite{P11a})
\begin{equation}
\rho^{(0)} = \one,\;\quad \rho^{(1)} = \mu (Z-Z^\dagger),\;\quad \rho^{(2)} = \frac{\mu^2}{2} (Z-Z^\dagger)^2 - \frac{\mu}{2}[Z,Z^\dagger].
\end{equation} 
In order to obtain a closed form expression for $\rho^{(2)}$ for general driving parameters one probably needs to include second derivative of
highest-weight transfer matrix w.r.t. $s$ at $s=0$ (extending (\ref{eq:Zs})) However, this has not been explicitly demonstrated yet.

One can also study asymptotics for a large coupling parameter $\varepsilon\to\infty$, in the so-called quantum Zeno regime, by writing a formal operator valued expansion of NESS in $1/\varepsilon$, 
$\rho_\infty = \sum_{p=0}^\infty \rho^{(p)} \varepsilon^{-p}$. In the case of fully anisotropic Heisenberg spin 1/2 chain ($XYZ$ model)  a remarkable effect has been demonstrated \cite{PSL15}, namely
engineering a transitions from equilibrium-like (uncorrelated) to genuine nonequilibrium (strongly correlated) steady state by applying local magnetic fields to spins near the boundary (at sites $x=2$ and $x=n-1$).  It is possible also to derive explicit asymptotic expansions in some other (large) parameters of the model, say in (spatially) modulated external magnetic field or anisotropy $\Delta$. Even in the generic, non-integrable situation of $XXZ$ spin $1/2$ chain with spatially modulated interactions an explicit asymptotic expression for the spin current has been derived  \cite{LP15}, exhibiting a strong rectification effect upon switching the direction of coherent driving in the presence of incoherent driving (see also Ref.~\cite{LOK15}).

\section{Nonequilibrium partition function and computation of observables}
\label{obs}

Here we shall elaborate on computation of physical observables using the standard `transfer-matrix' technique. For concrete examples, we work out spin-density profiles, spin currents, and two point spin-spin correlations. For most of our discussion we allow the NESS to be of the most general non-perturbative form (\ref{eq:Ransatz}), or equivalently (\ref{eq:nessXXZ1}), with parameters (\ref{eq:varphi}-\ref{eq:parXXZ1}), as driven by a general combination of asymmetric 
coherent and incoherent boundary couplings (\ref{eq:XXZbf},\ref{eq:asymL}). Only for some very specific explicit calculations for the $XXX$ chain discussed in the second part of this section we shall assume purely incoherent and left-right symmetric driving.
In parts this section follows Refs.~\cite{P11b,PKS13}, while some essential results are new and presented here for the first time.

\subsection{The case of generic $XXZ$ chain}

Let us define the nonequilibrium partition function as the trace of the un-normalized density operator of NESS (expressed as (\ref{eq:Ransatz}))
\begin{equation}
{\cal Z}_n = \tr R_\infty = \bra{0,0}\vmbb{T}^{n}\ket{0,0},
\label{eq:Z}
\end{equation}
where $\vmbb{T}=\vmbb{T}(\varphi,s,\chi) \in {\rm End}({\cal H}_{\rm a}\otimes{\cal H}_{\rm b})$ is the transfer-matrix
\begin{equation}
\vmbb{T}=\tr_{\rm p}(\vmbb{L} K)=2\vmbb{L}^0 \cosh\kappa +  2\vmbb{L}^\z \sinh\kappa,
\end{equation}
writing the magnetisation parameter $\chi$ in terms of another parameter $\kappa$ as
\begin{equation}
 \chi = e^{2\kappa}.
\end{equation}
Any physical observable $A\in{\rm End}({\cal H}_{\rm p}^{\otimes n})$ can be expressed in terms of a linear combination 
$A=\sum_{\ul{\alpha}} a_{\ul{\alpha}} O_{\ul{\alpha}}$ of tensor products $O_{\ul{\alpha}} = \sigma^{\alpha_1}\otimes\sigma^{\alpha_2}\otimes\cdots \sigma^{\alpha_n}$, $\alpha_x\in {\cal J}$.
For each product operator $O_{\ul{\alpha}}$ its NESS expectation value can again be conveniently expressed in terms of a matrix product
\begin{eqnarray}
&&\ave{O_{\ul{\alpha}}} = \tr(O_{\ul{\alpha}}\rho_\infty)= \frac{\tr(O_{\ul{\alpha}} R_\infty )}{\tr R_\infty} = {\cal Z}^{-1}_n \bra{0,0}\vmbb{V}^{\alpha_1}\vmbb{V}^{\alpha_2}\cdots\vmbb{V}^{\alpha_n}\ket{0,0}, \label{eq:O}
\end{eqnarray}
where $\vmbb{V}^{\alpha} = \tr_{\rm p}(\vmbb{L} K \sigma^\alpha)$,
or, explicitly: 
\begin{equation}
\vmbb{V}^\pm=e^{\pm\kappa} \vmbb{L}^{\mp},\qquad \vmbb{V}^0 = \vmbb{T},\qquad\vmbb{V}^\z=2\vmbb{L}^\z \cosh\kappa + 2\vmbb{L}^0\sinh\kappa.
\end{equation}
The operators $\vmbb{V}^\alpha$ shall sometimes be referred to as {\em vertex operators}.
Since the generators $\mm{s},\mm{s}^\pm,\mm{t},\mm{t}^\pm$ expressing the physical components of the double Lax operator (\ref{eq:LL2}) are tridiagonal matrices, 
one needs to consider, for a chain of $n$ sites, 
only auxiliary basis states $\ket{k}$ up to $k\le n/2$, and consequently the above expression (\ref{eq:O}) can be evaluated efficiently
within ${\cal O}(n^3)$ computer operations even without any further insight. 
We will anyway show bellow that expectation values of many observables in many situations can be evaluated fully analytically.

We shall particularly focus on three kinds of observables in NESS. The simplest and perhaps the most important one is the spin current 
\begin{equation}
j_{x,x+1} = \ii (\sigma^+_x \sigma^-_{x+1} - \sigma^-_x \sigma^+_{x+1})
\label{eq:j}
\end{equation}
which satisfies the continuity equation, or local conservation law for spin density $\sigma^\z_x$
\begin{equation}
\frac{\dd}{\dd t} \sigma^\z_x = \ii [H,\sigma^\z_x] = -4j_{x,x+1} + 4j_{x-1,x}.
\end{equation}
We shall use (\ref{eq:j}) as the spin-current operator in the following, although we note a trivial factor of $4$ which needs to be taken into account when comparing to physical units.
The above identity also implies that the expectation value of the current should be site independent in the steady state $\ave{j_{x,x+1}} = \ave{j_{1,2}} =: J$.
Similarly as in the case of ASEP, the current in NESS can be computed solely in terms of nonequilibrium partition function ${\cal Z}_n$.
To see that, we express the asymmetry parameter $\chi$ from the defining equations (\ref{eq:asymb}-\ref{eq:asyme}) as a simple function of four free complex variables $\varphi,\vartheta,s,t$
\begin{equation}
\chi(\varphi,\vartheta,s,t) = \sqrt{\frac{\sin(\vartheta-\varphi+\eta s - \eta t)}{\sin(\varphi-\vartheta+\eta s - \eta t)}}.
\label{eq:chi2}
\end{equation}
We note that arbitrary branch of the square-root can be chosen as it only affects the sign of the transfer matrix $\vmbb{T}$, whose explicit form we read from (\ref{eq:LL2}):
\begin{eqnarray}
\vmbb{T} &=& (\sin^2\eta) \chi^{1/2} \mm{s}^+\mm{t}^+ + (\sin^2\eta) \chi^{-1/2} \mm{s}^-\mm{t}^- \nonumber\\
&+& \chi^{1/2} \sin(\varphi-\eta \mm{s})\sin(\vartheta-\eta\mm{t}) + \chi^{-1/2} \sin(\varphi+\eta\mm{s})\sin(\vartheta+\eta\mm{t}),
\end{eqnarray}
or the sign of the un-normalized density operator $R_\infty$ for odd $n$, but not the observables (\ref{eq:O}) themselves.
One notes that a similar expression is obtained for the commutator of the off-diagonal elements of double-Lax operator
$(\csc\eta)[\vmbb{L}^-,\vmbb{L}^+] = 
(\csc\eta)[\vmbb{V}^+,\vmbb{V}^-]= (\sin^2\eta)\sin(\varphi-\vartheta+\eta\mm{s}-\eta\mm{t})\mm{s}^+\mm{t}^+ +
 (\sin^2\eta)\sin(\vartheta-\varphi+\eta\mm{s}-\eta\mm{t})\mm{s}^-\mm{t}^-
- \sin(2\eta\mm{s})\sin(\vartheta+\eta\mm{t})\sin(\vartheta-\eta\mm{t})
+ \sin(2\eta\mm{t})\sin(\varphi+\eta\mm{s})\sin(\varphi-\eta\mm{s})$.

Furthermore, let us identify a particularly important, {\em diagonal} subspace of the product auxiliary space ${\cal K} = {\rm lsp}\{ \ket{k,k},k=0,1,2\ldots\} \subset {\cal H}_\aa\otimes{\cal H}_\bbb$, where a compact Dirac notation $\ket{k,l}\equiv \ket{k}\otimes\ket{l}$ is used.
We note that any operator valued function of $f(\mm{s}-\mm{t})$ on ${\cal K}$ evaluates as $f(s-t)$. Henceforth, elementary trigonometry results in the following very useful relation for the orthogonal projection on ${\cal K}$
\begin{equation}
\!\!\!\!\!\!\!\!\!\!\!\!
\vmbb{P}([\vmbb{V}^+,\vmbb{V}^-] - \ii\zeta \vmbb{T}) = ([\vmbb{V}^+,\vmbb{V}^-] - \ii\zeta \vmbb{T})\vmbb{P} = 0,\quad \vmbb{P}:=\sum_{k=0}^\infty\ket{k,k}\!\bra{k,k},\quad
\end{equation}
where
\begin{equation}
\!\!\!\!\!\!\!\!\!\!\!\!\zeta(\varphi,\vartheta,s,t) = (\sin\eta)\sqrt{\sin(\varphi-\vartheta-\eta s + \eta t)\sin(\varphi-\vartheta+\eta s - \eta t)}.
\label{eq:zeta}
\end{equation}
Now using the definitions (\ref{eq:j},\ref{eq:O}), together with the facts $[\vmbb{T},\vmbb{P}]=0$, $\vmbb{P}\ket{0,0}=\ket{0,0}$, one shortly arrives at a compact expression for the steady-state spin current
\begin{equation}
J = \zeta \frac{{\cal Z}_{n-1}}{{\cal Z}_n},
\label{eq:current}
\end{equation}
which is similar to the expression of a particle current in the classical simple exclusion processes \cite{BE07}.
Note that the parameter $\zeta$ can be conveniently expressed also in terms of physical driving parameters, via (\ref{eq:asymb}-\ref{eq:asymc}), namely
\begin{equation}
\!\!\!\!\!\!\!\!\!\!\!\!\!\!\zeta = \frac{(\Gamma_{\rm L}\Gamma_{\rm R})^{1/2} \sin^2\eta}{\left((\quart\Gamma_{\rm L}^2-b_{\rm L}^2 -\sin^2\eta)^2 + \Gamma^2_{\rm L} b^2_{\rm L}\right)^{1/4}\left((\quart\Gamma_{\rm R}^2-b_{\rm R}^2 - \sin^2 \eta)^2+ \Gamma^2_{\rm R} b^2_{\rm R}\right)^{1/4}}.
\end{equation}
It the simplest case of symmetric incoherent driving, $\Gamma_{\rm L}=\Gamma_{\rm R} =\varepsilon$, $b_{\rm L}=b_{\rm R}=0$, it amounts to $\zeta = \varepsilon\sin^2\eta/|\quart\varepsilon^2-\sin^2\eta|$.

Other simple and interesting physical observables that we consider in some detail are the spin-density and connected transverse spin-spin correlation function of NESS
\begin{eqnarray}
\!\!\!\!\!\!\!\!\!\!{\cal M}_x &=& \ave{\sigma^\z_x} = {\cal Z}^{-1}_n \bra{0,0}\vmbb{T}^{x-1}\vmbb{V}^\z\vmbb{T}^{n-x}\ket{0,0}, \label{eq:defMx} \\
\!\!\!\!\!\!\!\!\!\!{\cal C}_{x,y} &=& \ave{\sigma^\z_x \sigma^\z_y} - \ave{\sigma^\z_x}\ave{\sigma^\z_y} = {\cal Z}^{-1}_n \bra{0,0}\vmbb{T}^{x-1}\vmbb{V}^\z\vmbb{T}^{y-x-1}\vmbb{V}^\z\vmbb{T}^{n-y}\ket{0,0} - {\cal M}_x{\cal M}_y,\nonumber
\end{eqnarray}
where in the last line we have assumed, without loss of generality, that $x < y$. Since $\vmbb{V}^\z$ also conserves the spin-difference $\mm{s}-\mm{t}$ and hence leaves the diagonal subspace ${\cal K}$ invariant i.e., 
$[\vmbb{V}^\z,\vmbb{P}]=0$, one can identify the diagonal subspace with the auxiliary subspace ${\cal K} \leftrightarrow {\cal H}_\aa$ via the mapping $\ket{k,k}\leftrightarrow \ket{k}$, and write
the diagonally projected transfer matrices $\mm{T} := \vmbb{T}|_{\cal K}$, $\mm{V} := \vmbb{V}^\z|_{\cal K}$ as
\begin{eqnarray}
\mm{T} &=& \mm{V}^{00} + \mm{V}^{11},\quad \mm{V} = \mm{V}^{00} - \mm{V}^{11}, \label{eq:TVdef} \\
\mm{V}^{00} &=& \chi^{-1} \sum_{k=0}^\infty \Bigl(|\sin((k+1)\eta)|^2 \ket{k}\!\bra{k+1} + |\sin(\varphi-\eta(s-k))|^2 \ket{k}\!\bra{k}\Bigr),\nonumber \\
\mm{V}^{11} &=& \chi \sum_{k=0}^\infty \Bigl(  |\sin((2s-k)\eta)|^2 \ket{k+1}\!\bra{k} + |\sin(\varphi+\eta(s-k))|^2 \ket{k}\!\bra{k}\Bigr).\nonumber 
\end{eqnarray}
We explicitly used the complex conjugation property $\eta t=\overline{\eta s}$, $\vartheta=\bar{\varphi}$, which one has for physical values of the driving parameters (\ref{eq:varphi}).
In terms of the projected transfer matrices, the nonequilibrium partition function and the transverse spin observables read
\begin{eqnarray}
{\cal Z}_n &=& \bra{0}\mm{T}^n\ket{0}, \label{eq:Zn} \\ 
{\cal M}_x &=& {\cal Z}^{-1}_n \bra{0}\mm{T}^{x-1} \mm{V}\mm{T}^{n-x}\ket{0},  \\
{\cal C}_{x,y} &=& {\cal Z}^{-1}_n \bra{0}\mm{T}^{x-1}\mm{V}\mm{T}^{y-x-1}\mm{V}\mm{T}^{n-y}\ket{0} - {\cal M}_x{\cal M}_y.
\end{eqnarray}

For the massless $XXZ$ model, $|\Delta| < 1$, one can always approximate $\eta=\arccos\Delta \in\RaR$ to an arbitrary accuracy, for fixed $n$,
with a rational $\eta=\pi l/m$, with $l,m \in \ZZ$, $m> 0$. This corresponds to $q=e^{\ii\eta}$ being a (generically non-primitive) $2m-$th root of unity.
In such a case, the transfer matrix $\mm{T}$ can be truncated to an $m$-dimensional sub-space, ${\cal H}'_\aa = {\rm lsp}\{ \ket{k}, k = 0,1\ldots,m-1\}$, since the transition between states $\ket{m}$ and $\ket{m-1}$ is forbidden (see the first summation term of $\mm{V}^{00}$ in (\ref{eq:TVdef})). Hence the transfer and vertex matrices $\mm{T}$, $\mm{V}$, can be replaced, respectively, by $m\times m$ matrices, 
$\mm{T}' = \mm{T}|_{{\cal H}'_\aa}$, $\mm{V}' = \mm{V}|_{{\cal H}'_\aa}$. Specifically, $\mm{T}^x\ket{0} = (\mm{T}')^x\ket{0}$, $\forall x$, so in TL, $n\to\infty$,
observables are essentially given by eigenvalue decomposition of $\mm{T}'=\mm{U} {\,\rm diag}(\tau_1,\tau_2,\ldots \tau_m)\mm{U}^{-1}$, with eigenvalues ordered as $|\tau_1| \ge |\tau_2|\ge\ldots |\tau_m|$, so the steady-state current is ballistic (i.e., $n$-independent) 
\begin{equation}
J = \frac{\zeta}{\tau_1}.
\end{equation}
Similarly, using the fact that in the eigenbasis of $\mm{T}'$, the transformed vertex operator 
$\mm{U}^{-1}\mm{V}'\mm{U}$ have vanishing diagonal elements
\footnote{This follows from a simple
observation that there exist a gauge transformation $\ket{k}\to \phi_k \ket{k}$, $\bra{k}\to (\phi_k)^{-1} \bra{k}$ for appropriate weights $\phi_k\neq 0$ such that $\mm{T}'$ becomes a symmetric and $\mm{V}'$ an anti-symmetric matrix, $\mm{T}'^T=\mm{T}', \mm{V}'^T=-\mm{V}'$.},
i.e. $\bra{\psi'_j}\mm{V}'\ket{\psi_j}=0$ for $\mm{T}'\ket{\psi_j} = \tau_j \ket{\psi_j}$, $\bra{\psi'_j}\mm{T}' = \tau_j\bra{\psi'_j}$,  eigenvalue decomposition gives thermodynamically vanishing spin density
with exponentially damped profiles near the ends (Fig.~\ref{fig:observ}a),  namely 
\begin{equation} 
\!\!\!\!\!\!\!\!\!\!\!\!\!\!\!\!\!\!\!\!\!\!\!\!\!\!\!\!\!\!{\cal M}_x = \sum_{j=2}^m c_j\left\{ \left(\frac{\tau_j}{\tau_1}\right) ^{x-1}\!\!- \left(\frac{\tau_j}{\tau_1}\right)^{n-x}\right\},\;\;{\rm with}\; c_j =\braket{0}{\psi_j}\bra{\psi'_j}\mm{V}'\ket{\psi_1}\braket{\psi'_1}{0}.
\label{eq:masslessMx}
\end{equation} 
Similarly, it can be shown that far away from the edges, $1\ll x,y \ll n$, spin-spin correlations decay exponentially 
\begin{equation}
\!\!\!\!\!\!\!\!\!\!\!\!\!\!\!\!\!\!\!\!\!\!\!{\cal C}_{x,y} \approx \sum_{j=2}^m c'_j \left(\frac{\tau_j}{\tau_1}\right)^{|x-y|},\;\;{\rm with}\; c'_j=\braket{0}{\psi_1}\bra{\psi'_1}\mm{V}'\ket{\psi_j}\bra{\psi'_j}\mm{V}'\ket{\psi_1}\braket{\psi'_1}{0}.
\label{eq:masslessCxy}
\end{equation}  
Note that such exponential decay of correlations in the steady state in this regime is qualitatively reminiscent of equilibrium behaviour at a finite temperature $T>0$, namely long-range order is absent, even though $\rho_\infty$ is highly non-thermal (non-Gibbsian).
Let us now work out an explicit example
for $\Delta=1/2=\cos(\pi/3)$, $m=3$, and for symmetric incoherent driving $\Gamma_{\rm L}=\Gamma_{\rm R} =\varepsilon$, $b_{\rm L}=b_{\rm R}=0$, implying $\varphi=0$ and $\tan \eta s = 2\ii\sin\eta/\varepsilon=\ii\sqrt{3}/\varepsilon$.
Up to trivial similarity (gauge) transformation $\ket{k} \to \left(\frac{1}{8}|\varepsilon^2-3|\right)^{-k} \ket{k}$, $\bra{k} \to \left(\frac{1}{8}|\varepsilon^2-3|\right)^{k} \bra{k}$, the transfer and vertex matrices read
\begin{eqnarray}
\mm{T}' &=& \frac{2\zeta}{\varepsilon} 
\pmatrix{1 & \quart\varepsilon^2 & 0 \cr
1 & \quart(\varepsilon^2+1) & \frac{1}{64}(1+\varepsilon^2)(9+\varepsilon^2) \cr
0 & 1 & \quart (1+\varepsilon^2)
},\\
\mm{V}' &=& \frac{2\zeta}{\varepsilon} 
\pmatrix{0 & \quart\varepsilon^2 & 0 \cr
-1 & 0 & \frac{1}{64}(1+\varepsilon^2)(9+\varepsilon^2) \cr
0 & -1 & 0
},\
\end{eqnarray}
The eigenvalues of $\mm{T}'$ are $\tau_{1,3}= \frac{\zeta}{8\varepsilon}(7 + 3 \varepsilon^2 \pm\sqrt{ 81 + 74 \varepsilon^2 + 9 \varepsilon^4})$, $\tau_2 = \frac{\zeta}{4}(5+\varepsilon^2)$, yielding the
spin current
$J = 8\varepsilon/(7 + 3 \varepsilon^2 + \sqrt{ 81 + 74 \varepsilon^2 + 9 \varepsilon^4})$ (see Fig.~\ref{fig:observ}b). 
Spin-profiles (see Fig.~\ref{fig:observ}a) and spin-spin correlation are described by Eqs.~(\ref{eq:masslessMx},\ref{eq:masslessCxy}) with explicit, but lengthy expressions for the coefficients $c_{2,3}(\varepsilon),c'_{2,3}(\varepsilon)$.

On the other hand, for the massive $XXZ$ model, $|\Delta| > 1$, parameter $\eta$ is complex, namely $\eta = \ii \eta'$, with $\eta'={\rm arcosh}\,|\Delta|$ for $\Delta > 1$ and $\eta'={\rm arcosh}\,|\Delta| + \ii\pi$ for $\Delta < -1$. 
In such a case, the tridiagonal transfer matrix $\mm{T}$ (\ref{eq:TVdef}) is of genuinely infinite dimension, with exponentially growing (in state index $k$) transition amplitudes. In WGS picture [see Eq.~(\ref{eq:WGSXXZ})] the walk that gives a dominating
contribution to the partition sum ${\cal Z}_n$, for large even $n$, is composed of $n/2$ steps forward $(0,1),(1,2),\ldots,(n/2-1,n/2)$ followed by $n/2$ steps backward $(n/2,n/2-1),\ldots,(1,0)$,
\begin{equation}
{\cal Z}_n \simeq \prod_{k=1}^{n/2} |\sinh((2s-k+1)\eta') \sinh(k\eta')|^2.
\end{equation}
We note that the contribution from this extremal walk relatively overweights the sum of all other contribitions for large $n$, as it grows faster than any exponential in $n$, so it is super-exponentially larger than contributions of exponentially many ($\le 3^n$) typical terms. The spin-current, for large $n$, can then be computed by applying (\ref{eq:current}) twice, namely $(\zeta/J)^2 = {\cal Z}_n/{\cal Z}_{n-2} = |\sinh((2s-n/2+1)\eta')|^2 |\sinh(n\eta'/2)|^2$, yielding asymptotically $J \simeq \zeta/|\sinh(n\eta'/2)|^2 \simeq \zeta |e^{n\eta'}|$, or (see Fig.~\ref{fig:observ}b for comparison with exact numerical results from transfer matrix computation)
\begin{equation}
J \simeq \frac{\zeta}{(|\Delta| + \sqrt{\Delta^2-1})^n}.
\end{equation}
Similarly, one can compute the spin density ${\cal M}_x$, obtaining a kink profile from the {\em dominating} walk, namely first $n/2$ spins pointing up and last $n/2$ spins pointing down, ${\cal M}_{x} = {\rm sign}((n+1)/2-x)$,
where ${\rm sign}(x)=-1,0,1$ for $x <,=,> 0$, respectively, while connected correlations thermodynamically vanish ${\cal C}_{x,y}=0$. 
In fact, from the dominating walk one obtains the leading asymptotics for the entire NESS density operator 
\begin{eqnarray}
\rho_{\infty} &\simeq& (\sigma^+_1 \sigma^-_1) \cdots (\sigma^+_{n/2}\sigma^-_{n/2})(\sigma^-_{n/2+1}\sigma^+_{n/2+1})\cdots(\sigma^-_{n}\sigma^+_{n}) \nonumber \\
&=& \ket{00\ldots 011\ldots 1}\bra{00\ldots 011\ldots 1}.
\end{eqnarray}
For finite $n$, the kink spin density profile attains a finite width (see e.g., Fig.~\ref{fig:observ}a), which can be quantified to be of order $\log n$ \cite{BCPRZ}.

\subsection{Isotropic ($XXX$) chain -- nonequilibrium partition function.}
\label{subsect:neZ}

At the end, let us turn to perhaps the most interesting case of SU(2) symmetric $XXX$ or isotropic Heisenberg chain. Here the double Lax operator have to be defined with respect to a scaling limit
$\vmbb{L}(\lambda,\mu,s,t) \longleftarrow
\lim_{\eta\to\infty} \eta^{-2} \vmbb{L}(\eta\lambda,\eta\mu,s,t)$, reading
\begin{equation}
\vmbb{L} = \pmatrix{
 (\lambda - \mm{s})(\mu - \mm{t}) + \mm{s}^+\mm{t}^+ &
(\lambda-\mm{s})\mm{t}^-+(\mu+\mm{t})\mm{s}^+ \cr
(\lambda + \mm{s})\mm{t}^+ + (\mu-\mm{t})\mm{s}^- &
(\lambda + \mm{s})(\mu + \mm{t}) + \mm{s}^-\mm{t}^-
},
\end{equation}
while the projected transfer operator $\mm{T}(\lambda,s) \longleftarrow \lim_{\eta\to\infty} \eta^{-2} \mm{T}(\eta\lambda,s)$
 is again manifestly infinite-dimensional, but with amplitudes growing only quadratically with the state index $k$
\begin{eqnarray}
\mm{T} &=& \sum_{k=0}^\infty \Bigl(  \chi^{1/2} (k+1)^2 \ket{k+1}\!\bra{k} + \chi^{-1/2} |k-2s|^2 \ket{k}\!\bra{k+1} + \nonumber\\
&&+ \left(\chi^{1/2}|k-s+\lambda|^2+\chi^{-1/2} |k-s-\lambda|^2\right)\ket{k}\!\bra{k}\Bigr).
\end{eqnarray}
We shall now present a simple scaling argument which allows to analytically compute large $n$ asymptotics of the nonequilibrium partition function ${\cal Z}_n$ and consequently the spin current, for arbitrary driving parameters,
deriving the result announced already in Ref.~\cite{P11b}.
Let us define a tridiagonal operator $T$ on the space $\ell_\infty$ of sequences of coefficients $\ul{\psi}=(\psi_0,\psi_1,\psi_2\ldots)$, with $ \sum_{k=0}^\infty (T\ul{\psi})_k \ket{k}
= \mm{T}\sum_{k=0}^\infty \psi_k\ket{k} $, namely
\begin{equation}
\!\!\!\!\!\!\!\!\!\!\!\!\!\!\!\!\!\!\!\!\!\!\!\!\!\!\!\!(T\ul{\psi})_k =  \chi^{1/2} k^2 \psi_{k-1} + \chi^{-1/2} |k-2s|^2 \psi_{k+1} + \bigl(\chi^{1/2} |k-s+\lambda|^2 + \chi^{-1/2}|k-s-\lambda|^2\bigr)\psi_k.\quad
\end{equation}
The partition function (\ref{eq:Zn}) can be written as ${\cal Z}_n = \psi^{(n)}_0$ where $\ul{\psi}^{(n)} = T^n (1,0,0,\ldots)$. We shall however compute large $n$ asymptotics of the entire sequence $\psi^{(n)}_k$.
Since $T$ is a tridiagonal operator the vector $\psi^{(n)}$ is supported on exactly $n+1$ sites, i.e. $\psi^{(n)}_k = 0$ for all $k > n$.
We thus propose the following scaling ansatz
\begin{equation}
\psi^{(n)}_k \simeq F_n \exp\left(n f(k/n)\right),
\end{equation}
where $F_n$ is a sequence of real numbers and $f(\xi)$ some smooth (differentiable) function on $\xi\in[0,1]$.
The consistency of the ansatz is demonstrated,  and difference and differential equations for $F_n$ and $f(\xi)$ are, respectively, derived, from expanding both-sides of local scaling relation
$\psi^{(n+1)}_k =F_{n+1} e^{(n+1) f\left(k/(n+1)\right)} = (T \ul{\psi}^{(n)})_k$ in $1/n$, namely
\begin{equation}
\!\!\!\!\!\!\!\!\!\!\!\!\!\!\!\!\!\!\!\!\!\!\!\!\!\!\!\!F_{n+1} e^{(n+1) f(k/n)-(k/n)f'(k/n)}  \simeq F_n  k^2 e^{n f(k/n)} (e^{f'(k/n)-\kappa}+e^{-f'(k/n)+\kappa}+e^{\kappa}+e^{-\kappa}).
\label{eq:LHSRHS}
\end{equation}
The dependence on parameters $\lambda$ and $s$ can be neglected to leading orders in $1/n$, namely they yield smaller or comparable correction than neglecting the second derivatives due to shifts $k\to k\pm 1$ or scaling $k/n\to k/(n+1)$ in the exponentials
on the RHS or LHS of (\ref{eq:LHSRHS}), respectively. Introducing a scaling variable $\xi$, via $k=n \xi$ and dividing by $F_n e^{n f(\xi)}$ we finally obtain
\begin{equation}
\frac{F_{n+1}}{n^2 F_n} \exp(f(\xi) - \xi f'(\xi)) = 2\xi^2 (\cosh(f'(\xi)-\kappa) + \cosh\kappa).
\label{eq:fdif}
\end{equation}
As RHS does not depend on $n$, neither must the LHS, i.e., $F_{n+1}/(n^2 F_n) = C$, while without loss of generality one may fix $C=1$ by suitably adjusting $f(\xi)$ by an additive constant.
Hence we arrive to
\begin{equation}
F_n = F_1 (n!)^2,
\end{equation}
and a curiously-looking {\em implicit} differential equation for $g(\xi):=f(\xi)-\kappa \xi$
\begin{equation}
g (\xi) - \xi g'(\xi) = 2\log \xi  + \log(2 \cosh g'(\xi) + 2\cosh \kappa).
\label{eq:gxi}
\end{equation}
Using a substitution for a new {\em independent} variable
\begin{equation}
t = -\dd g(\xi)/\dd\xi
\label{eq:t}
\end{equation}
and writing the parametric dependences as $g_t$ and $\xi_t$, the equation (\ref{eq:gxi}) transforms to
\begin{equation}
g_t = 2\log \xi_t - t \xi_t + \log(2\cosh t + 2\cosh \kappa).
\label{eq:g}
\end{equation}
An {\em explicit} differential equation for $\xi_t$ is obtained by differentiating with respect to $t$ and using (\ref{eq:t})
\begin{equation}
\frac{2}{\xi_t}\frac{\dd \xi_t}{\dd t} = \xi_t - \frac{\sinh t}{\cosh t + \cosh \kappa}.
\end{equation}
This equation can be linearised by substitution $y(t) = 1/\xi_t$ and solved explicitly in terms of elliptic integral of the first kind $F(\phi,k) = \int_0^\phi\dd \theta (1-k^2\sin^2\theta)^{-1/2}$.
Fixing the integration constant by incorporating the boundary condition
\begin{equation}
\xi_{t\to-\infty} = 0,\quad \xi_{t\to\infty} = 1,
\end{equation}
which expresses the obvious fact that the scaling variable $\xi$ has to span the entire interval $[0,1]$,
one obtains an explicit result
\begin{equation}
\xi_t = \sqrt{\frac{1+\cosh\kappa}{\cosh t + \cosh\kappa}}\left(K\left(\tanh^2\frac{\kappa}{2}\right) + \ii F\left(\frac{\ii t}{2}, {\rm sech}^2 \frac{\kappa}{2}\right)\right)^{-1},
\label{eq:xtgt}
\end{equation}
where $K(k) = F(\pi/2,k)$ is the complete elliptic integral of the first kind, which together with Eq.~(\ref{eq:g}) yields the complete scaling profile.
From (\ref{eq:xtgt}) we obtain the key information 
\begin{equation}
f(0) = \lim_{t\to-\infty} f_t = 2 \log\frac{\cosh(\kappa/2)}{K(\tanh^2(\kappa/2))},
\end{equation}
which yields the asymptotics of ${\cal Z}_n$, and consequently, the spin-current (\ref{eq:current})
\begin{equation}
\!\!\!\!\!\!\!\!\!\!\!\!\!\!\!\!\!\!\!\!\!\!\!\!\!\!\!\!
{\cal Z}_n \simeq F_n e^{n f(0)} = F_1 (n!)^2 \left(\frac{\cosh(\kappa/2)}{K(\tanh^2(\kappa/2))}\right)^{2n}\!\!,\;\; J = \left(\frac{\cosh(\kappa/2)}{K(\tanh^2(\kappa/2))}\right)^{2}\frac{\zeta}{n^2}, \label{eq:Zsc}
\end{equation}
where the $XXX$-scaled current parameter $\zeta$ (\ref{eq:zeta}) reads
\begin{equation}
\zeta \longleftarrow \lim_{\eta\to 0}\eta^{-2}\zeta(\eta\lambda,\eta\mu,s,t) = \sqrt{(\lambda-\mu-s+t)(\lambda-\mu+s-t)},
\end{equation} 
or in terms of driving parameters
\begin{equation}
\zeta=\frac{\sqrt{\Gamma_{\rm L}\Gamma_{\rm R}}}{\sqrt{(\quart\Gamma_{\rm L}^2+b_{\rm L}^2 )(\quart\Gamma_{\rm R}^2+b_{\rm R}^2)}}.
\end{equation}
In the special case of symmetric driving $\chi=1$, $\kappa=0$, the scaling profile simplifies (noting that $K(0)=\pi/2$)
\begin{equation}
\xi_t = \frac{2{\rm sech}(t/2)}{\pi-4\arctan\tanh(t/4)},\quad f_t \equiv g_t = 2\log\left(2\xi_t\cosh\frac{t}{2}\right)- t \xi_t,
\label{eq:xtgt0}
\end{equation}
and $f(0) = 2\log\frac{2}{\pi}$, yielding the spin current \cite{P11b}
\begin{equation}
J = \frac{\pi \zeta}{4 n^2}.
\end{equation}

One finds an excellent agreement of the whole scaling profile 
\begin{equation}
\bra{k}\mm{T}^n\ket{0} \simeq F_1 (n!)^2 \exp\left(n f(k/n) \right)
\label{eq:profile}
\end{equation}
with numerical iteration of the transfer operator $\mm{T}$ even for relatively small $n$ ($n \sim 100$ for $\Gamma_{\rm L,R},b_{\rm L,R}\sim 1$), where the undeterminable constant $F_1$ quickly becomes irrelevant due to the super-exponential growth (see Fig.~\ref{fig:fx}). One can repeat our scaling analysis for the transpose of the transfer operator to show the same leading
order in $1/n$ asymptotics
\begin{equation}
\bra{k}\mm{T}^n\ket{0} \simeq \bra{0}\mm{T}^n\ket{k}.
\label{eq:profsym}
\end{equation}

\begin{figure}
\hbox{\hspace{1.4cm} \includegraphics[scale=1.63]{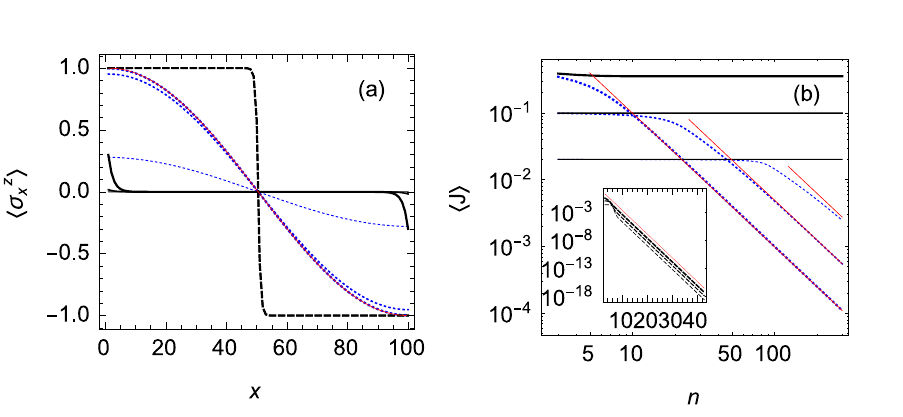}
}	
	\caption{ 
	(from \cite{P11b})
	Spin profiles ${\cal M}_x=\ave{\sigma^{\rm z}_x}$ at $n=100$ (a), and spin currents $J$ vs. size $n$ (b), of boundary driven $XXZ$ spin 1/2 chain for $\Delta=3/2$ (dashed), $\Delta=1$ (dotted/blue), 
	$\Delta=1/2$ (full curves), all for three different incoherent spin source/sink rates $\Gamma_{\rm L}=\Gamma_{\rm R}=\varepsilon=1,1/5, 1/25$ using thick, medium, thin curves, respectively. 
	Red full curves show closed-form asymptotic results  [see text]:
	${\cal M}_x=\cos \pi \frac{x-1}{n-1}$, $J =\pi^2 \varepsilon^{-1} n^{-2}$ for $\Delta=1$ in the main panels (a,b),
	and  $J \propto e^{-n \,{\rm arcosh}\Delta}$ in (b)-inset.}
 	\label{fig:observ}
\end{figure}

Note that the asymptotics of the partition function (\ref{eq:Zsc}) is unique and depends only on the asymmetry parameter $\kappa$ and not on the spectral and the representation parameters, $\lambda,s$ separately. In fact, an analogous asymptotics should be obtained, following essentially the same derivation, when starting from an arbitrary local state $\ket{l}$, $l\in\{0,1,\ldots,k\}$, hence giving the scaling of an arbitrary matrix element of $\mm{T}^n$, namely
\begin{equation}
\bra{k}\mm{T}^{n-l}\ket{l} \simeq F \frac{(n!)^2}{(l!)^2} \exp\left(n f(k/n)\right),\quad l \le k,
\label{eq:profile1}
\end{equation}
where $F$ is a constant independent of $k,l,n$. Using expressions (\ref{eq:profile},\ref{eq:profsym},\ref{eq:profile1}) one can control the asymptotic $n\to\infty$ behaviour
of any $\bra{\psi_{\rm L}}\mm{T}^n\ket{\psi_{\rm R}}$ if at least one of the auxilliary states $\ket{\psi_{\rm L/R}}\in{\cal K}$ has a {\em finite support}, i.e., in can be expanded in 
finitely many basis states $\ket{k}$.

As we shall see later, the same universal scaling of nonequilibrium partition function and the corresponding canonical current ${\cal Z}_{n-1}/{\cal Z}_n \propto J\propto n^{-2}$ applies to several other models with intrinsic (and undeformed) $SU(2)$ symmetry, such as the nonequilibrium Hubbard model or even spin-$1$ Lai-Sutherland chain.

\begin{figure}[!t]
\begin{center}
\includegraphics[scale=1.27]{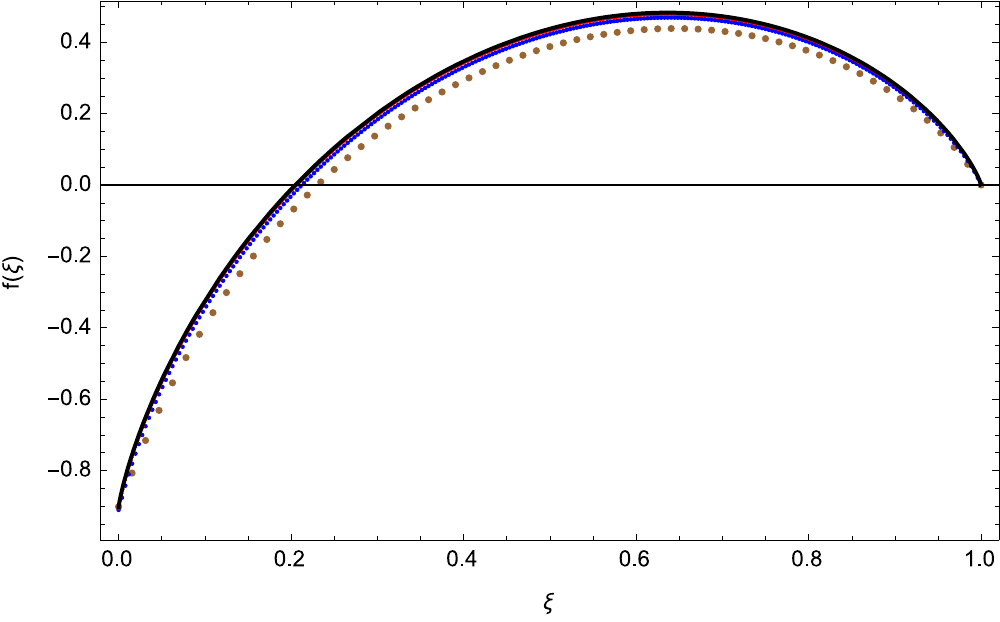}
\caption{
The universal scaling profile $f(\xi=k/n) \simeq \frac{1}{n}\log(\bra{k}\mm{T}^n\ket{0}/(n!)^2)$, generating the nonequilibrium partition function, for the boundary driven $XXX$ spin 1/2 chain with symmetric driving $\Gamma_{\rm L}=\Gamma_{\rm R}=1,b_{\rm L/R}=0$ (corresponding to $\kappa=0$).
The points show numerical data for $n=64$ (brown points), $n=256$ (blue points), and $n=1024$ (red points), compared to the universal analytical result (\ref{eq:xtgt0}) depicted with a black curve.
\label{fig:fx}}
\end{center}
\end{figure}

\subsection{Isotropic ($XXX$) chain -- spin-density profiles and correlations.}

\label{subsect:density}

In the isotropic case one can also calculate the spin-density profile and spin-spin correlations analytically, at least in the case of symmetric (and un-twisted) driving, i.e. in the absence of spectral parameters, $\mu=\lambda=0$ and $\chi=1$. This corresponds to driving with $\Gamma_{\rm L}=\Gamma_{\rm R}=\varepsilon$, $b_{\rm L}=-b_{\rm R}=b$, yielding the representation  (spin) parameter
\begin{equation}
s = \frac{2\ii}{\varepsilon-2\ii b}.
\label{eq:s}
\end{equation} In such a case, one finds a remarkable algebraic relation between the transfer and vertex operators,
$\vmbb{T}= \mm{s}^+\mm{t}^+ + \mm{s}^-\mm{t}^- + 2\mm{s}\mm{t}$, $\vmbb{V}^\z = \mm{s}^+\mm{t}^+ - \mm{s}^-\mm{t}^-$,
\begin{equation}
[\vmbb{T},[\vmbb{T},\vmbb{V}^\z]] + 2\{\vmbb{T},\vmbb{V}^\z\} = 4(s(s+1) + t(t+1)) \vmbb{V}^\z,
\label{eq:TTV}
\end{equation}
which can be derived straightforwardly using only $SU(2)$ commutation relations and our complex spin representation. Note that the relation holds even for a tensor product of two abstract $SU(2)$ algebras, where $s(s+1)$ and $t(t+1)$ have to be replaced by the corresponding Casimir operators $\mm{s}^-\mm{s}^++\mm{s}(\mm{s}+1)$ and $\mm{t}^-\mm{t}^++\mm{t}(\mm{t}+1)$, respectively.
Multiplying Eq.~(\ref{eq:TTV}) by $\bra{0,0}\vmbb{T}^{x-2}$ from the left and by $\vmbb{T}^{n-x-1}\ket{0,0}$ from the right, using the definition (\ref{eq:defMx}) of ${\cal M}^{(n)}_x$ (adding explicit notation of dependence on the chain length $n$), and noting that $t=\bar{s}$, one gets
\begin{equation}
\!\!\!\!\!\!\!\!\!\!\!\!\!\!\!\!\!\!\!\!\!\!\!\!\!\!\!\!\!\!\!\!\!\!
{\cal M}^{(n)}_{x-1} - 2 {\cal M}^{(n)}_{x} + {\cal M}^{(n)}_{x+1} + 2({\cal M}^{(n-1)}_{x-1} + {\cal M}^{(n-1)}_x)\frac{{\cal Z}_{n-1}}{{\cal Z}_n} = 8{\rm Re\,}s(s+1) {\cal M}^{(n-2)}_{x-1}\frac{{\cal Z}_{n-2}}{{\cal Z}_n}.\;\;
\label{eq:Mxbulk}
\end{equation}
This is a closed form difference equation for ${\cal M}^{(n)}_x$, knowing ${\cal Z}_{n-1}/{\cal Z}_{n} \simeq \pi^2/(4n^2)$, which has a unique solution once we specify the boundary conditions 
${\cal M}^{(n)}_1$ and ${\cal M}_n^{(n)}$. These are givien by the following trivially satisfied boundary equations
\begin{equation}
\bra{0,0}(\vmbb{T}-\vmbb{V}^\z)=2 |s|^2 \bra{0,0},\quad (\vmbb{T}+\vmbb{V}^\z)\ket{0,0} = 2|s|^2 \ket{0,0},
\label{eq:bb}
\end{equation}
namely multiplying them, respectively, by $\vmbb{T}^{n-1}\ket{0,0}$ and $\bra{0,0}\vmbb{T}^{n-1}$, we obtain
\begin{equation}
1 + {\cal M}^{(n)}_1 = 2|s|^2\frac{{\cal Z}_{n-1}}{{\cal Z}_n} \simeq \frac{\pi^2|s|^2}{2n^2},\;\;
1 - {\cal M}^{(n)}_n = 2|s|^2\frac{{\cal Z}_{n-1}}{{\cal Z}_n} \simeq \frac{\pi^2|s|^2}{2n^2}.\quad
\label{eq:Mxb}
\end{equation}
We can take TL $n\to\infty$ of equations (\ref{eq:Mxbulk},\ref{eq:Mxb}) obtaining the differential equations for the scaled spin-density profile 
\begin{equation}
{\cal M}\left(\xi=\frac{x-1}{n-1}\right) \simeq {\cal M}^{(n)}_x,
\label{eq:contans}
\end{equation}
specifically
\begin{equation}
{\cal M}''(\xi) = -\pi^2  {\cal M}(\xi),\quad  {\cal M}(0) = 1,\quad  {\cal M}(1) = -1.
\label{eq:difM0}
\end{equation}
The bulk and boundary conditions are all correct to order ${\cal O}((\varepsilon^2+4b^2)^{-1} n^{-2})$.
The cosine-shaped solution of the spin-density profile 
\begin{equation}
{\cal M}(\xi) = \cos(\pi \xi),
\end{equation}
should be universally valid for any fixed $\varepsilon > 0$, in the limit $n\to\infty$, or $\varepsilon \gg \varepsilon^*\sim1/n$.

We can make a similar computation for the two point spin-spin correlation function ${\cal C}_{x,y}$, however here we need to keep the first {\em two} leading orders in the $1/n$ expansion. In principle we should now need also $1/n$-corrected scaling of the partition function
\begin{equation}
\frac{{\cal Z}_{n-1}}{{\cal Z}_n} = \frac{\pi^2}{4 (n-\alpha)^2} (1 + {\cal O}(n^{-2})).
\end{equation}
It will turn out that the final result for connected correlator ${\cal C}_{x,y}$ does not depend on the value of $\alpha$ as it cancels out from our calculation, so we may leave it as a free, unspecified parameter.
Nevertheless, numerical simulations suggest clearly that $\alpha=3/4$ \cite{P11b}. 
We start by upgrading the accuracy of the 1-point function ${\cal M}(\xi)$.
Expanding Eq.~(\ref{eq:Mxbulk}) via (\ref{eq:contans}) to ${\cal O}(n^{-2})$ results in the following differential equation correcting (\ref{eq:difM0})
\begin{equation}
{\cal M}''(\xi) + \pi^2 {\cal M}(\xi) = \frac{\pi^2}{2n}\left(\beta {\cal M}(\xi) + (1-2\xi){\cal M}'(\xi)\right),\quad
\beta := 4(1-\alpha).\;\;
\end{equation}
Writing the solution as ${\cal M}(\xi) = \cos\pi\xi + n^{-1}\widetilde{\cal M}(\xi) + {\cal O}(n^{-2})$, one finds inhomogeneous equation for the first order term
\begin{equation}
\widetilde{\cal M}''(\xi) + \pi^2 \widetilde{\cal M}(\xi) = \frac{\pi^2}{2}\left(\beta\cos\pi\xi - \pi (1-2\xi)\sin\pi\xi\right),
\end{equation}
with boundary conditions $\widetilde{\cal M}(0)=\widetilde{\cal M}'(0)=\widetilde{\cal M}(1)=\widetilde{\cal M}'(1)=0$, following from further expanding Eq.~(\ref{eq:Mxb}),
with a unique solution
\begin{equation}
\widetilde{\cal M}(\xi) = \frac{\pi}{4}\left(\pi \xi(1-\xi)\cos\pi\xi + ((1+\beta)\xi-1)\sin\pi\xi\right).
\end{equation}
Going next to 2-point function we start with the difference system [following from (\ref{eq:TTV})] for its unconnected part
${\cal M}^{(n)}_{x,y}= \bra{0,0}\vmbb{T}^{x-1}\vmbb{V}^\z \vmbb{T}^{y-x-1}\vmbb{V}^\z\vmbb{T}^{n-1-y}\ket{0,0}$, $x<y$:
\begin{eqnarray}
&\!\!\!\!\!\!\!\!\!\!\!\!\!\!\!\!\!\!\!\!\!\!\!\!\!\!\!\!\!\!\!\!\!\!(n-\alpha)^2\left({\cal M}^{(n)}_{x-1,y} - 2 {\cal M}^{(n)}_{x,y} + {\cal M}^{(n)}_{x+1,y}\right) + \frac{\pi^2}{2}\left({\cal M}^{(n-1)}_{x-1,y-1}+{\cal M}^{(n-1)}_{x,y-1}\right) = {\cal O}(n^{-2}), \label{eq:bulk2} \\
&\!\!\!\!\!\!\!\!\!\!\!\!\!\!\!\!\!\!\!\!\!\!\!\!\!\!\!\!\!\!\!\!\!\!
{\cal M}^{(n)}_{1,x}-{\cal M}^{(n)}_x={\cal O}(n^{-2}),\; {\cal M}^{(n)}_{2,x}-{\cal M}^{(n)}_{x}={\cal O}(n^{-2}). \label{eq:bb2}
\end{eqnarray}
The second boundary equation (\ref{eq:bb2}) is derived straightforwardly from a relation analogous to (\ref{eq:bb}) using explicit representation of $\vmbb{T}$ and $\vmbb{V}$.
Writing the scaling function ${\cal M}^{(n)}_{x,y} = {\cal M}(\frac{x-1}{n-1},\frac{y-1}{n-1})$ we expand (\ref{eq:bulk2}) to ${\cal O}(n^{-2})$, in particular keeping the order $1/n$ 
coming from the anti-commutator of (\ref{eq:TTV}). Omitting straightforward details, we obtain a differential equation which fully determines ${\cal M}(\xi_1,\xi_2)$, for $\xi_1 < \xi_2$
\begin{eqnarray}
&&(\partial_1^2 + \pi^2){\cal M}(\xi_1,\xi_2) = \frac{\pi^2}{2n}\left(\beta+(1\!-\!2\xi_1)\partial_1 + 2(1\!-\!\xi_2)\partial_2\right){\cal M}(\xi_1,\xi_2), \label{eq:sys} \\
&& {\cal M}(0,\xi_2) ={\cal M}(\xi_2),\quad \partial_1 {\cal M}(0,\xi_2)= 0.
\end{eqnarray}
This system is solved with an ansatz ${\cal M}(\xi_1,\xi_2) = {\cal C}(\xi_1,\xi_2) + {\cal M}(\xi_1){\cal M}(\xi_2)= \cos(\pi\xi_1)\cos(\pi\xi_2) + {\cal C}(\xi_1,\xi_2) +
n^{-1}\cos(\pi\xi_1){\widetilde{\cal M}}(\xi_2) + n^{-1}{\widetilde{\cal M}}(\xi_1)\cos(\pi\xi_2) + {\cal O}(n^{-2})$, resulting in an inhomogeneous system for the connected correlator
${\cal C}(\xi_1,\xi_2)$, with $\beta$ cancelling out,
\begin{eqnarray}
&&(\partial_1^2 + \pi^2){\cal C}(\xi_1,\xi_2) = \frac{\pi^3}{n}(\xi_2-1)\cos(\pi\xi_1)\sin(\pi\xi_2), \label{eq:sysC} \\
&& {\cal C}(0,\xi_2) = 0,\quad \partial_1 {\cal C}(0,\xi_2) = 0,
\end{eqnarray}
with a solution\footnote{Note a typo in the expression for ${\cal C}(\xi_1,\xi_2)$ in Ref.~\cite{P11b}.}, for $\xi_1 < \xi_2$: ${\cal C}(\xi_1,\xi_2) = -\frac{\pi^2}{2n} \xi_1 (1-\xi_2)\sin(\pi\xi_1)\sin(\pi\xi_2)$.
For $\xi_1 > \xi_2$, the solution is obtained from the symmetry ${\cal C}(\xi_1,\xi_2)={\cal C}(\xi_2,\xi_1)$, or generally (see Fig.~\ref{fig:Cxy})
\begin{equation}
{\cal C}(\xi_1,\xi_2) = -\frac{\pi^2}{2n} {\rm min}(\xi_1,\xi_2)(1-{\rm max}(\xi_1,\xi_2))\sin(\pi\xi_1)\sin(\pi\xi_2).
\label{eq:C2}
\end{equation}
Note a qualitative resemblance to a 2-point function in classical SSEP (see e.g. Ref.~\cite{S01}), apart from a trigonometric factor $\sin(\pi\xi_1)\sin(\pi\xi)2)$ which seems to be of genuinely quantum nature. Our result establishes {\em anti-correlation} ${\cal C} < 0$  between arbitrary pair of spins and the hydrodynamic scaling ${\cal C} \propto 1/n$.
Using this strategy one could derive further all the higher $k$-point transverse spin correlation functions.

\begin{figure}
          \centering	
	\hspace{1.5cm}\includegraphics[scale=0.65]{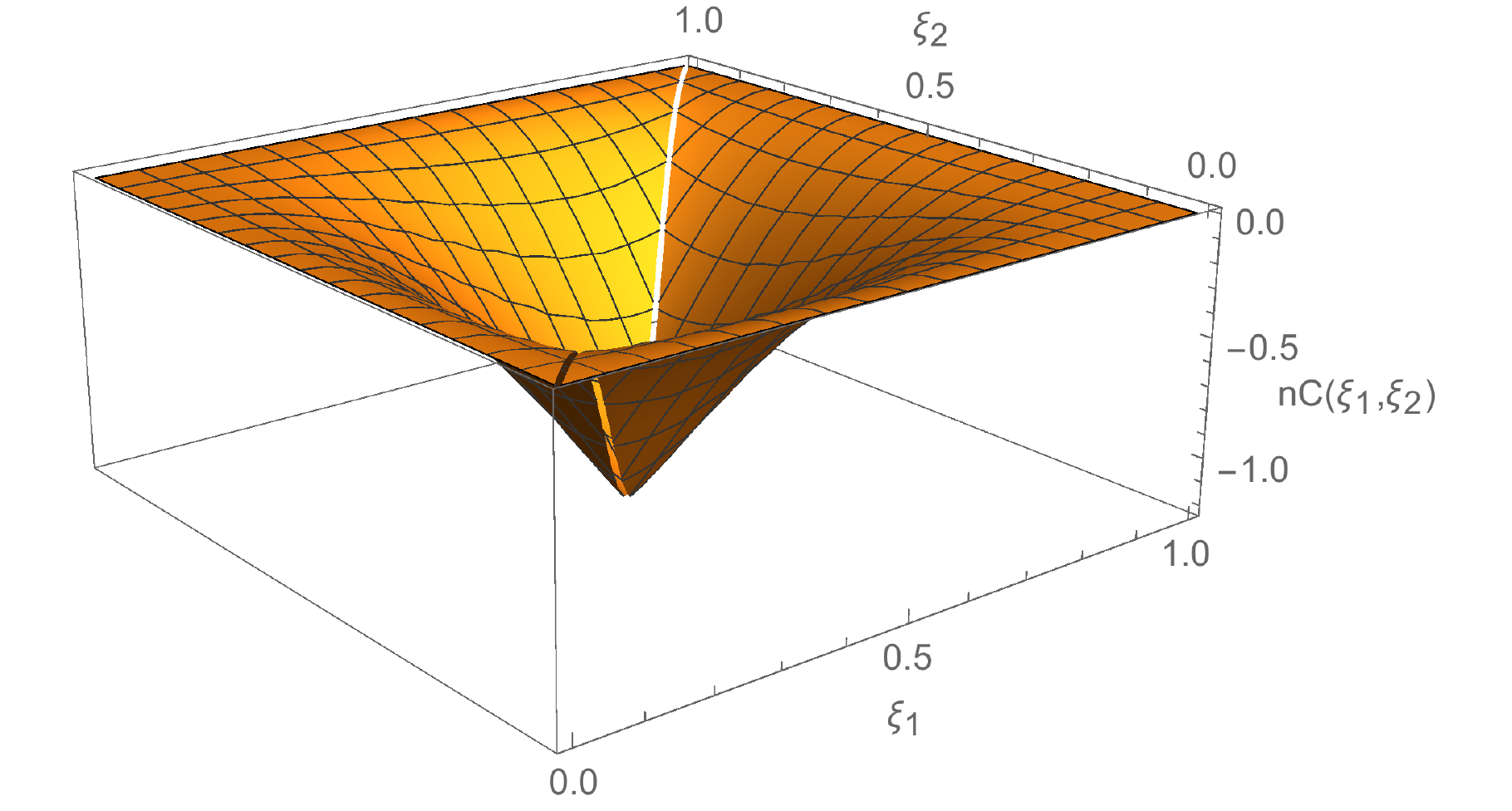}
	\vspace{-3mm}
	\caption{ 
	Scaled connected 2-point spin-spin correlation (\ref{eq:C2}) function $n \times {\cal C}(\xi_1,\xi_2)$ in NESS of $XXX$ chain ($\Delta=1$).}
 	\label{fig:Cxy}
\end{figure}

\subsection{Isotropic ($XXX$) chain -- $SU(2)-$twisted boundary driving}

Here we briefly discuss some details of explicit computation of a more general nonequilibrium partition function 
\begin{equation}
{\cal Z}_n=(\bra{\psi_{\rm L}}\otimes\bra{\overline{\psi_{\rm L}}})\vmbb{T}^n(\ket{\psi_{\rm R}}\otimes\ket{\overline{\psi_{\rm R}}}),
\end{equation}
generalizing the expression (\ref{eq:Z}), and observables in the $XXX$ case with twisted boundary driving as described in subsect.~\ref{subsect:twist} and treated originally in Ref.~\cite{PKS13}. 
As using an arbitrary pair of twists requires full knowledge of the transfer operator $\vmbb{T}$ beyond the diagonal subspace ${\cal K}$ we shall immediately utilize the rotational invariance, and choose a coordinate system in which the source axis is un-twisted while the sink axis is tilted in the $x-z$ plane, specifically 
$\phi_{\rm L}=\phi_{\rm R}=\theta_{\rm L}=0$, $\theta_{\rm R}\equiv \theta \in [0,\pi)$, where $\theta$ is the angle between the source/sink measurement axes. In such a case one can again work with the diagonal auxiliary subspace ${\cal K}$ and the projected transfer operator $\mm{T}$, rewriting ${\cal Z}_n$ as (using asymptotic scaling of local transfer matrix elements (\ref{eq:profile})) 
\begin{eqnarray}
{\cal Z}_n 
&\simeq& \sum_{k=0}^n  \left(\tan\frac{\theta}{2}\right)^{2k} \bra{0}\mm{T}^n\ket{k} \nonumber\\
&\simeq& F_1 (n!)^2\sum_{k=0}^n \exp\left(n f(k/n)+2k \log|\!\tan\theta/2|\right) \nonumber \\
&\simeq& F_1 (n!)^2 n \int_{-\infty}^\infty\!\dd t\,(\dd{\xi}_t/\dd t) \exp\left(n (g_t +  2\xi_t \log|\!\tan\theta/2|)\right).
\end{eqnarray}
Note that $\kappa=0$ ($\chi=1$) as required by solvability of twisted-driving boundary conditions (see subsect.~\ref{subsect:twist}).
This integral can be explicitly evaluated asymptotically ($n\to\infty$) by means of the saddle point method, namely expanding around the extremum of the exponential at $t=\log\tan(\theta/2)$, yielding
\begin{equation}
\!\!\!\!\!\!\!\!\!\!\!\!\!\!\!\!\!\!\!\!\!\!\!\!\!\!\!\!{\cal Z}_n \simeq F_1 (n!)^2 \frac{\sin\theta}{\pi-\theta}\sqrt{\pi n(1+(\pi-\theta)\cot\theta)} \left(\frac{2}{\pi-\theta}\right)^{2n},
\quad
\frac{{\cal Z}_{n-1}}{{\cal Z}_n} \simeq \frac{{(\pi-\theta)}^2}{4n^2}.
\end{equation}
This reproduces the leading $1/\varepsilon$-order result of Ref.~\cite{PKS13} and generalises it to any fixed value of $\varepsilon > 0$ in the large $n$ asymptotics.
Note that ${\cal Z}_n(\theta)$ is not continuous at $\theta=0$, where one should take instead the expression (\ref{eq:Zsc}), as the limits $n\to\infty$ and $\theta\to 0$ do not commute.

Note that for the computation of ${\cal Z}_n(\theta)$ described above we could still allow for some left-right driving asymmetry, and hence non-vanishing value of spectral parameter $\lambda$, as long as $\chi=1$. However, bellow we report, following Ref.~\cite{PKS13}, a simple calculation of vectorial spin-currents and spin-densities which is based on simple closed-form algebraic identities among $\vmbb{T},\vmbb{V}^\alpha$ which are only possible for fully symmetric driving, i.e. $\lambda=0$ and (\ref{eq:s}), so we assume this to be the case for the rest of this discussion. 
For $SU(2)-$symmetric $XXX$ model, one can write the local conservation law for the full spin density vector $\vec{\sigma}_x=(\sigma^\x_x,\sigma^\y_x,\sigma^\z_x)$
and spin-current vector $\vec{j}_{x,x+1}$ satisfying 
\begin{equation}
\frac{\dd}{\dd t} \vec{\sigma}_x = [H,\vec{\sigma}_x] = \vec{j}_{x-1,x} - \vec{j}_{x,x+1},\quad {\rm where}\quad
\vec{j}_{x,x+1}:=2 \vec{\sigma}_x \times \vec{\sigma}_{x+1},
\label{eq:ce}
\end{equation}
and $j_{x,x+1} = \quart j^\z_{x,x+1}$ is the current discussed earlier.
The expectation for the current components $J^\alpha = \ave{j_{x,x+1}^\alpha}$, which due to continuity equation (\ref{eq:ce}) has to be site-independent, can be expressed in terms of the commutators
\begin{equation}
J^\alpha = \frac{2}{{\cal Z}_{n}(\theta)} \sum_{\beta,\gamma\in\{\x,\y,\z\}} \epsilon_{\alpha\beta\gamma} \bra{0,0}[\vmbb{V}^\beta,\vmbb{V}^\gamma]\vmbb{T}^{n-1}\ket{\psi_{\rm R},\overline{\psi_{\rm R}}}
\end{equation}
where $\epsilon_{\alpha\beta\gamma}$ is Levi-Civita symbol.
Facilitating algebraic identities, which hold for $\lambda=0$,
\begin{eqnarray}
\left[\vmbb{V}^\x,\vmbb{V}^\y\right] &=& 2\ii(\mm{t}-\mm{s})\vmbb{T}  = 2\ii\vmbb{T} (\mm{t}-\mm{s}), \\
\left[\vmbb{V}^\y,\vmbb{V}^\z\right] &=& (\mm{s}^+ - \mm{s}^- + \mm{t}^+ - \mm{t}^-)\vmbb{T}= \vmbb{T} (\mm{s}^+ - \mm{s}^- + \mm{t}^+ - \mm{t}^-), \\
\left[\vmbb{V}^\z,\vmbb{V}^\x\right] &=& (\mm{s}^+ + \mm{s}^- - \mm{t}^+ -\mm{t}^-) \vmbb{T}=\vmbb{T} (\mm{s}^+ + \mm{s}^- - \mm{t}^+ -\mm{t}^-),
\end{eqnarray}
and elementary properties of the coherent states (\ref{eq:cs}) we find the scalings of the {\em in-plane} current components
\begin{equation}
\!\!\!\!\!\!\!\!\!\!\!\!\!\!\!\!\!\!\!\!\!\!\!\!\!\!\!\!
J^\z = 4\ii (t-s) \frac{{\cal Z}_{n-1}}{{\cal Z}_n} = \frac{\zeta (\pi-\theta)^2}{n^2},\quad
J^\x = 4\ii (s-t) \frac{{\cal Z}_{n-1}}{{\cal Z}_n} \tan\frac{\theta}{2} = -J^\z \tan\frac{\theta}{2},
\end{equation}
where $\zeta = 4\varepsilon/(\varepsilon^2 + 4 b^2)$. 
The transverse component behaves drastically differently though, it asymptotically scales in an `Ohmic' fashion $n^{-1}$ \cite{PKS13}
\begin{equation}
J^\y \simeq \frac{2(\pi-\theta)}{n},
\end{equation}
and behaves discontinuously at $\theta=0$ where it vanishes ($J^\y|_{\theta=0}=0$).

Vectorial spin-density profiles $\vec{\cal M}_x = \ave{\vec{\sigma}_x}$ are found in full analogy to computation described in subsect.~\ref{subsect:density}, by extending the
algebraic identity (\ref{eq:TTV}) to arbitrary components $[\vmbb{T},[\vmbb{T},\vmbb{V}^\alpha]] + 2\{\vmbb{T},\vmbb{V}^\alpha\} = 4(s(s+1) + t(t+1)) \vmbb{V}^\alpha$ due to $SU(2)$ invariance.
Taking the continuum limit one again arrives to harmonic differential equation 
\begin{equation}
\left(\frac{\dd^2}{\dd\xi^2} + (\pi-\theta)^2\right){\cal M}^\alpha(\xi) = 0
\end{equation} 
for all three components of the continuous spin-density ${\cal M}^\alpha((x-1)/(n-1)) = {\cal M}^\alpha_x$, 
where appropriate boundary conditions follow from explicit representation of $\vmbb{T},\vmbb{V}^\alpha$ and the properties of coherent states (\ref{eq:cs}), resulting in asymptotic harmonic profiles
\begin{equation}
\!\!\!\!\!\!\!\!\!\!\!\!\!\!\!\!\!\!\!\!\!\!\!\!\!\!\!\!
{\cal M}^\z_x \simeq \cos\left((\pi-\theta)\frac{x-1}{n-1}\right),\quad {\cal M}^\x_x \simeq \sin\left((\pi-\theta)\frac{x-1}{n-1}\right),\quad {\cal M}^\y_x \simeq 0.
\end{equation}

\section{Hubbard chain}
\label{hubb}

Here we turn to a different, two-species quantum model, the fermionic Hubbard chain \cite{hubbardbook}. The Hubbard model is the fundamental model of strongly correlated electrons on regular lattices.
Even though the model on a 1D chain has been solved by the coordinate Bethe ansatz a while ago \cite{LW}, it still poses many deep fundamental questions, in particular regarding its dynamical and 
nonequilibrium properties. Here we describe an explicit MPA solution of the corresponding nonequilibrium steady state of the Hubbard chain for diagonal (untwisted) boundary driving. We shall discuss graph theoretic interpretation of the solution and identify key elements of both approaches: IDO method (following Ref.~\cite{P14a}) and local operator divergence (or Lax operator) method (following Ref.~\cite{PP15}).

The Hubbard Hamiltonian for an open chain of $n$ sites, with canonical fermi operators $c_{s,x}$, $x\in\{1\ldots n\}$, $s\in\{\ua,\da\}$, reads
\begin{eqnarray}
H&=&
-2\sum_{s,x} (c^\dagger_{s,x}c_{s,x+1}+c^\dagger_{s,x+1}c_{s,x} ) +  u \sum_{x} (2n_{\uparrow,x}-1)(2n_{\downarrow,x}-1) \nonumber\\
&& + \mu_{\rm L}(n_{\uparrow,1}+n_{\downarrow,1}-1)+\mu_{\rm R}(n_{\uparrow,n}+n_{\downarrow,n}-1), 
\label{eq:HubHfermi}
\end{eqnarray}
where $n_{s,x}=c^\dagger_{s,x} c_{s,x}$.
The nondimensional interaction parameter 
$u=U/(2t_{\rm h})$ contains standard Hubbard interaction $U$ and hopping amplitude $t_{\rm h}$, while $\mu_{\rm L/R}$ are non-dimensional chemical potentials at the boundary sites which shall produce the coherent part of the boundary driving.
The incoherent boundary driving is provided by four Lindblad channels which manifest a pure source/sink for electrons at rates $\Gamma_{\rm L/R}$
\begin{equation}
L_1 = \sqrt{\Gamma_{\rm L}}c^\dagger_{\ua,1},\;\;L_2 = \sqrt{\Gamma_{\rm L}}c^\dagger_{\da,1},\;\;
L_3 = \sqrt{\Gamma_{\rm R}}c_{\ua,n},\;\;L_4 = \sqrt{\Gamma_{\rm R}}c_{\da,n}.
\end{equation}
We shall again be interested in the density operator $\rho_\infty$ of NESS defined as the solution of the stationary Lindblad equation (\ref{eq:lindeq}), $\LL\rho_\infty=0$.
Before proceeding, we shall reformulate the problem in terms of a spin-$1/2$ ladder, namely implementing the Wigner-Jordan transformation which expresses the anticommuting fermi variables
\begin{equation}
c_{\uparrow,x}=P^{(\sigma)}_{x-1} \sigma_x^-,\quad c_{\downarrow,x}=P^{(\sigma)}_n P^{(\tau)}_{x-1} \tau_x^-
\end{equation}
where
$P^{(\sigma)}_x:=\sigma_1^{\rm z}\sigma_2^\z \cdots \sigma_x^{\rm z}$, $P^{(\tau)}_x:=\tau_1^{\rm z}\tau_2^\z \cdots \tau_x^{\rm z}$, in terms of two sets
of independent spins-1/2, $\sigma^s_x,\tau^t_x\; x\in\{1,\ldots,n\},s,t\in{\cal J}=\{+,-,0,\z\}$, $\sigma^0_x\equiv \tau^0_x\equiv\one$, which can be considered as operators over ${\cal H}^{\otimes n}_{\rm p}$. The local physical space is now four dimensional ${\cal H}_{\rm p} = \CC^2\otimes\CC^2$ so that $\sigma^s\tau^t$ span\footnote{Note that here we use letters $s,t$ to name indices denoting physical space components in contradistinction to previous sections where they denoted complex spin parameters.} the complete basis of ${\rm End}({\cal H}_{\rm p})$.
The Hubbard Hamiltonian (\ref{eq:HubHfermi}) then maps to
\begin{eqnarray}
H &=& \sum_{x=1}^{n-1} h_{x,x+1} + h_{\rm L} + h_{\rm R}, \label{eq:Hhub}\\
h_{1,2} &:=& h^\sigma_{1,2} + h^\tau_{1,2} + \frac{u}{2}\left(\sigma^\z_1 \tau^\z_1 + \sigma^\z_{2} \tau^\z_{2}\right), \label{eq:h12} \\
h^\sigma_{1,2}&:=&2 \sigma^{+}_1 \sigma^-_2 + 2 \sigma^{-}_1 \sigma^+_2,\quad h^\tau_{1,2}:=2 \tau^{+}_1 \tau^-_2 + 2 \tau^{-}_1 \tau^+_2 \\
h_{\rm L/R} &:=& \frac{u}{2} \sigma^\z_{1/n} \tau^\z_{1/n} + \frac{\mu_{\rm L/R}}{2}\left(\sigma^\z_{1/n} + \tau^\z_{1/n}\right), \label{eq:hLR}
\end{eqnarray}
while the Lindblad jump operators map to
\begin{equation}
\!\!\!\!\!\!\!\!\!\!\!\!\!\!\!\!\!\!\!\!\!\!\!\!\!\!\!\!\!\!\!L_1 = \sqrt{\Gamma_{\rm L}}\sigma^+_1,\;L_2 = \sqrt{\Gamma_{\rm L}}  P^{(\sigma)}_n \tau^+_1,\;L_3 = -\sqrt{\Gamma_{\rm R}}  P^{(\sigma)}_n \sigma^-_n,\;L_4 = -\sqrt{\Gamma_{\rm R}}  P^{(\sigma)}_n P^{(\tau)}_n \tau^-_n.
\end{equation}
However, since the Hamiltonian and the dissipator $\DD=\sum_{\mu=1}^4 \DD_{L_\mu}$ conserve the numbers of spin-up and spin-down electrons, 
$N_{\sigma}=\sum_{x=1}^n \half(\sigma^\z_x + \one)$, $N_\tau = \sum_{x=1}^n \half(\tau^\z_x + \one)$, $[H,N_{\sigma/\tau}]=0$, $[N_{\sigma/\tau},\DD(\rho)]=\DD([N_{\sigma/\tau},\rho])$,
the {\em unique} steady state $\rho_\infty$, should \cite{BP12} also conserve $N_{\sigma/\tau}$ and their parities $P^{(\sigma/\tau)}_n$, i.e., $[\rho_\infty,P^{(\sigma/\tau)}_n]=0$.
Therefore, $\rho_\infty$ should also be a fixed point of $\LL = -\ii \ad H + \sum_{\alpha=1}^4 \DD_{L_\alpha}$, $\LL\rho_\infty=0$, where $L_\mu$ are replaced by 
\begin{equation}
L_1 = \sqrt{\Gamma_{\rm L}}\sigma^+_1,\;\;L_2 = \sqrt{\Gamma_{\rm L}} \tau^+_1,\;\;
L_3 = \sqrt{\Gamma_{\rm R}} \sigma^-_n ,\;\;L_4 = \sqrt{\Gamma_{\rm R}}  \tau^-_n .
\label{eq:Lhub}
\end{equation}
i.e., with all unitary conserved (non-local) operators removed (noting that $(P^{(\sigma/\tau)})^2=\one$). This remarkable fact teaches us that non-locality of Wigner-Jordan transformation has no effect in the (nonequilibrium) steady state but potentially affects the nature of relaxation. Uniqueness of NESS can be again proved by straightforward application of the Evans--Frigeiro theorem \cite{evans,frigeiro} trivially extending the argument of
subsect.~\ref{unique} to two species of spins.

An important $\ZZ^2-$symmetry of the Hubbard model, analogous to (\ref{eq:parity}) for the Heisenberg model, is generated by the spin-flip operator $G$, i.e. permutation operator between
$\sigma$ and $\tau$ spins (or fermion species), defined as $G \sigma^s G = \tau^s$, $G^2=\one$. Clearly,
\begin{equation} 
G h^{\sigma}_{1,2}G = h^{\tau}_{1,2},\;\,
G h_{1,2} G = h_{1,2}, \;\,
G H G = H, \;\,
G\DD(\rho)G = \DD(G\rho G).\quad
\end{equation}

\subsection{Walking graph state representation of NESS}

In the absence of previously known non-Hermitian Lax operators with enough free complex parameters for the Hubbard model (note that the Hermitian Shastry's Lax matrix \cite{S88} would not work as it lacks a free representation 
parameter), we shall again start with a constructive approach of IDO method \cite{P14a}, while impatient reader is welcome to jump right away to a more elegant formulation of subsect.~\ref{subsect:LaxHubb}.
 
 A useful technical result which can be implemented to establish NESS is a trivial extension of Lemma 1 to a symmetrically boundary driven Hubbard ladder:
 
 \begin{lem}
Let  $\Omega_n \in {\rm End}({\cal H}_{\rm p}^{\otimes n})$ satisfy the following conditions (defining relations):\\ 
(i) a recursion identity for the bulk, setting $\Omega_1 := \one_4$,
\begin{equation}
\!\!\!\!\!\!\!\!\!\!\!\!\!\!\!\!\!\!\!\!\!\!\!\!\!\!\!\!\!\!\!\!
[H,\Omega_n]= -\ii\varepsilon\!\!\sum_{s\in\{0,+\}}\!\!\bigl(\sigma^\z \tau^s \otimes P^{0,s}_{n-1} + \sigma^s \tau^\z \otimes P^{s,0}_{n-1}  
- Q^{0,-s}_{n-1}\otimes \sigma^\z \tau^{-s} - Q^{-s,0}_{n-1}\otimes \sigma^{-s}\tau^\z\bigr),\quad \label{eq:hubdefrel}
\end{equation}
introducing the operators $P^{s,t}_{n-1},Q^{s,t}_{n-1} \in {\rm End}({\cal H}_{\rm p}^{\otimes (n-1)})$
\begin{equation}
P^{s,t}_{n-1} = \frac{\tr_{1}\{(\sigma_1^s\tau_1^t)^\dagger \Omega_n\}}{\tr (\{\sigma^s\tau^t)^\dagger \sigma^s\tau^t\}},\quad
Q^{s,t}_{n-1} = \frac{\tr_{n}\{(\sigma_n^s\tau_n^t)^\dagger \Omega_n\}}{\tr\{(\sigma^s\tau^t)^\dagger \sigma^s\tau^t\}}
\end{equation}
and (ii) the boundary conditions (rendering $\Omega_n$ upper-triangular with unit-diagonal)
\begin{eqnarray}
&& P^{s,t}_{n-1}=0 \quad {\rm for}\; s\in \{ -,\z\}\;{\rm or}\; t \in\{-,\z\}, \nonumber \\
&& Q^{s,t}_{n-1}=0 \quad {\rm for}\; s\in \{ +,\z\}\;{\rm or}\; t \in\{+,\z\}.
\end{eqnarray}
Then, the density operator
\begin{equation}
\rho_\infty = \frac{R_\infty}{\tr R_\infty},\quad R_\infty= \Omega_n \Omega^\dagger_n,
\label{eq:rhoinfH}
\end{equation}
satisfies the fixed point (NESS) condition 
\begin{equation}
\ii [H,\rho_\infty] = \sum_{\alpha=1}^4 \DD_{L_\alpha}(\rho_\infty)
\label{eq:fp}
\end{equation}
for symmetric, totally incoherent driving $\varepsilon = \Gamma_{\rm L}=\Gamma_{\rm R}$, $\mu_{\rm L}=\mu_{\rm R}=0$.
\end{lem}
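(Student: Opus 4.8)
The plan is to mirror the proof of Lemma 1 almost verbatim, now carrying two spin species through the tensor algebra and matching against four boundary dissipators instead of two. For the symmetric, purely incoherent choice $\varepsilon=\Gamma_{\rm L}=\Gamma_{\rm R}$, $\mu_{\rm L}=\mu_{\rm R}=0$, the boundary terms reduce to $h_{\rm L/R}=\frac{u}{2}\sigma^\z_{1/n}\tau^\z_{1/n}$ and the generator becomes $\LL=-\ii\,\ad H + \varepsilon(\DD_{\sigma^+_1}+\DD_{\tau^+_1}+\DD_{\sigma^-_n}+\DD_{\tau^-_n})$, so that the fixed-point condition (\ref{eq:fp}) is equivalent to
\begin{equation}
\ii\varepsilon^{-1}[H,\Omega_n\Omega^\dagger_n]=\left(\DD_{\sigma^+_1}+\DD_{\tau^+_1}+\DD_{\sigma^-_n}+\DD_{\tau^-_n}\right)(\Omega_n\Omega^\dagger_n).
\end{equation}
First I would apply the Leibniz rule to the left-hand side and substitute the bulk recursion (\ref{eq:hubdefrel}), writing it as $\Omega_n\bigl(\ii\varepsilon^{-1}[H,\Omega_n]\bigr)^\dagger + \bigl(\ii\varepsilon^{-1}[H,\Omega_n]\bigr)\Omega^\dagger_n$, exactly as in the step leading to (\ref{eq:b}).

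Next I would split the resulting terms into those localized at the left edge --- carrying a defect $\sigma^\z\tau^s$ or $\sigma^s\tau^\z$ on the first tensor factor, multiplied by $P^{0,s}_{n-1}$ or $P^{s,0}_{n-1}$ --- and those localized at the right edge, carrying $Q^{0,-s}_{n-1}$ or $Q^{-s,0}_{n-1}$ on the last factor. In each term I would collapse the remaining free copy of $\Omega_n$ (or $\Omega^\dagger_n$) using the upper-triangular boundary expansion forced by condition (ii): at the first site $\Omega_n=\sum_{s,t\in\{0,+\}}\sigma^s\tau^t\otimes P^{s,t}_{n-1}$ with $P^{0,0}_{n-1}=\Omega_{n-1}$, and the mirror expansion with $Q^{s,t}_{n-1}$ at the last site. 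This turns every boundary term into a single-site operator on the $\sigma\tau$ factor tensored with a product such as $P^{0,s}_{n-1}\Omega^\dagger_{n-1}$, in complete analogy with the six-term expression (\ref{eq:6terms}).

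In parallel I would evaluate the right-hand side using the local action of the dissipators --- the rules $\DD_{\sigma^+}(\sigma^0)=2\sigma^\z$, $\DD_{\sigma^+}(\sigma^\pm)=-\sigma^\pm$, $\DD_{\sigma^+}(\sigma^+\sigma^-)=0$ and the sink analogues (\ref{eq:endproof}), together with their $\tau$-counterparts --- noting that $\DD_{\sigma^+_1}$ and $\DD_{\tau^+_1}$ act only on the $\sigma$- resp.\ $\tau$-component of the first site, with the other species a mere spectator, and likewise $\DD_{\sigma^-_n},\DD_{\tau^-_n}$ at the last site. Feeding the same boundary expansion of $\Omega_n\Omega^\dagger_n$ into these maps should reproduce, term by term, the left- and right-edge contributions computed above: the source pair $\DD_{\sigma^+_1}+\DD_{\tau^+_1}$ matches the $P$-terms and the sink pair $\DD_{\sigma^-_n}+\DD_{\tau^-_n}$ matches the $Q$-terms.

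Two symmetries should cut the labor substantially: the spin-flip $G$ (with $G\sigma^s G=\tau^s$, $GHG=H$) exchanges the $\sigma$- and $\tau$-sectors and intertwines $L_1\leftrightarrow L_2$, $L_3\leftrightarrow L_4$, so the $\tau$-channel identities follow from the $\sigma$-channel ones by $G$-conjugation, while left--right inversion combined with Hermitian conjugation relates the right-edge matching to the left-edge one. The main obstacle I anticipate is the heavier two-species bookkeeping relative to Lemma 1: one must track the mixed defect combinations $\sigma^\z\tau^+$ and $\sigma^+\tau^\z$ (the $s=+$ terms of (\ref{eq:hubdefrel})) and verify that the cross-excited contributions $\sigma^+\tau^+$ arising from the boundary expansion pair correctly against $\DD_{\sigma^+}(\sigma^+\sigma^-)=0$ and its analogues, so that no spurious term survives. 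Once this matching is checked on the $\sigma$-sector at the left edge, the remaining cases close by the two symmetries, establishing $\LL(\Omega_n\Omega^\dagger_n)=0$.
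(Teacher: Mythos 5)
Your proposal is correct and is exactly what the paper has in mind: the paper gives no detailed proof of this lemma, stating only that it is a straightforward extension of the computation proving Lemma 1 (the Leibniz-rule step, the boundary expansion of $\Omega_n$, and the local dissipator action rules), which is precisely the two-species matching argument you lay out. Your additional observations — that the $\tau$-channel identities follow from the $\sigma$-channel ones by $G$-conjugation, and that the cross-excited terms such as $\sigma^+\tau^+$ pair correctly against $\DD_{\sigma^+}(\sigma^+\sigma^-)=0$ and its analogues — are exactly the bookkeeping needed to make that ``straightforward extension'' go through.
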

A straightforward proof along  extension of Eqs.~(\ref{eq:stlind}-\ref{eq:endproof}) is left to the reader. 

The following constructive strategy for obtaining exact NESS solution has been devised \cite{P14a} which is based purely on empirical data about the model. One starts by computing numerical NESS density operators for small systems, feasible for $n\le 6$, and determine their Cholesky factors $\Omega_n$. Operators $\Omega_n$ posses $U(1)\times U(1)$ symmetry, namely they commute separately with the species number operators $[\Omega_n,N_{\sigma/\tau}]=0$ hence all non-vanishing terms of a general operator expansion
$\Omega_{n} = \sum_{\ul{s},\ul{t}} c_{\ul{s},\ul{t}} \bigotimes_{x=1}^n \sigma^{s_x}\tau^{t_x}$ should satisfy $\sum_{x=1}^n d(s_x) = 0$ and $\sum_{x=1}^n d(t_x) = 0$ where the shift-function $d : {\cal J}\to\ZZ$ is defined as 
$d(\pm)=\pm 1, d(0)=d(\z)=0$. Thus each sequence $(s_1,t_1,\dots,s_n,t_n)\in{\cal J}^{2n}$ with $c_{\ul{s},\ul{t}}\neq 0$ can be considered as an $n-$step recurrent walk on a 2-dimensional cartesian grid $\ZZ \times \ZZ$ originating from site 
$(0,0)$, visiting a point $\sum_{y=1}^x (d(s_y),d(t_y))$ after step $x$. However, empirical evidence suggests that the set of non-vanishing terms is much more restricted and can be compactly encoded by a directed graph $({\cal V},{\cal E})$ depicted in Fig.~\ref{fig:HD}.

\begin{figure}
\centering	
\vspace{-1mm}
\includegraphics[width=0.55\columnwidth]{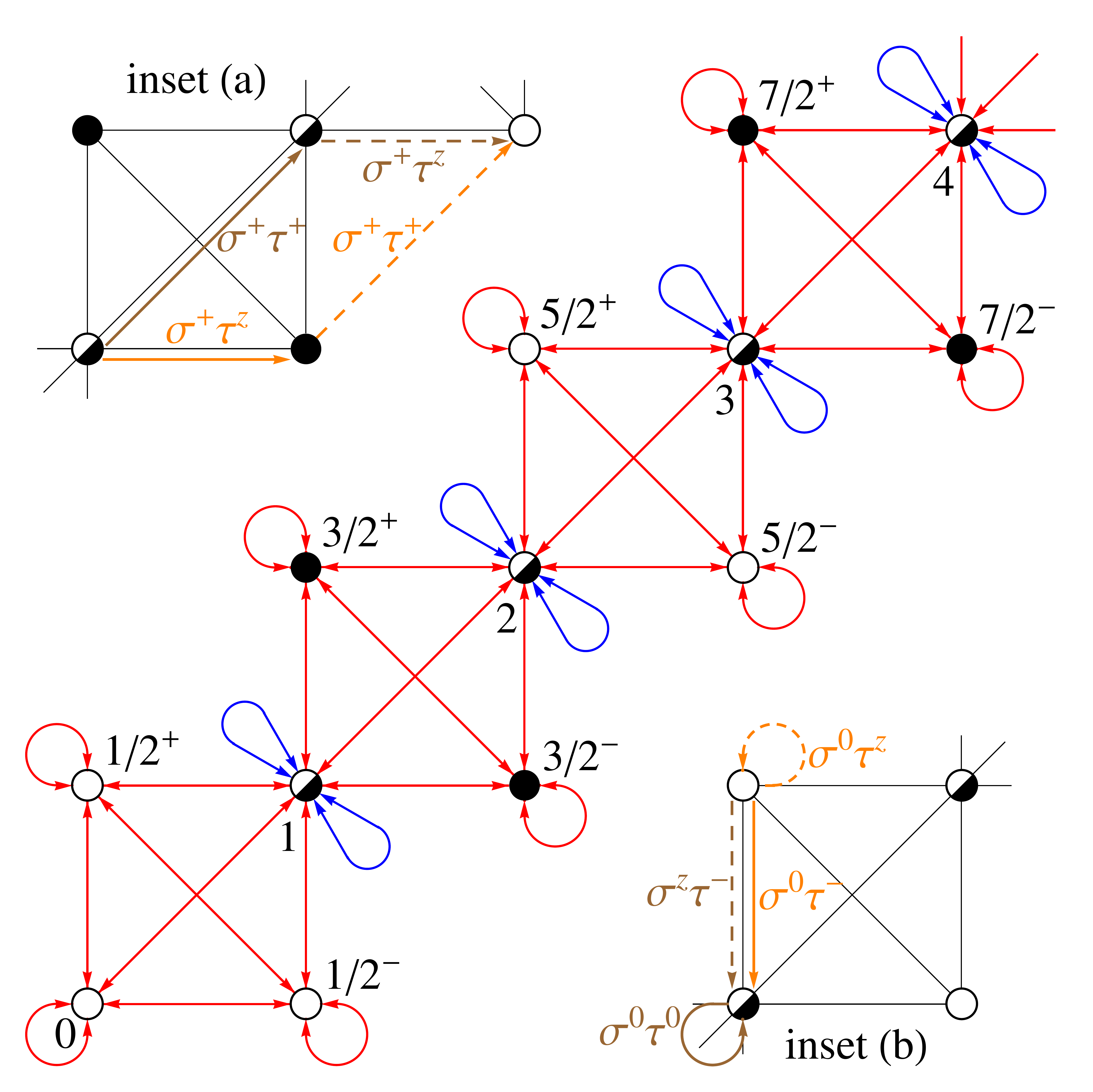}
\vspace{-1mm}
\caption{(from \cite{P14a}) Diagram of a semi-infinite graph $G$ (structure repeating periodically beyond the upper-right corner) showing the {\em allowed} transitions 
for building up the MPA form of NESS for the Hubbard chain. 
Nodes in black, edges with multiplicity 1 in red, and edges with multiplicity 2 in blue.
Each edge $e$ is associated with a physical product-operator $\omega(e)=\sigma^{b^1}\tau^{b^2}$ 
where $b^\nu=0$ ($b^\nu=\z$) for edges connecting white (black) nodes, where
$\nu$ is the Cartesian component which does not change along such $e$ in the diagram.
Degenerate edges correspond to operators $\sigma^0 \tau^0$ ($\mu=+1$) and $\sigma^\z\tau^\z$ ($\mu=-1$). Insets indicate all possible terms (two in each, orange and brown) for two examples of $[h,\omega(e)\otimes\omega(f)]$,
specifically $[h,\sigma^+\tau^+\otimes\sigma^+\tau^0]$ (a), and $[h,\sigma^0\tau^-\otimes\sigma^0\tau^0]$ (b). Full arrows denote valid edge factors, while dashed arrows correspond to {\em defect} operators.}
\label{fig:HD}
\end{figure}

The set of all visited nodes (or vertices) ${\cal V} \subset \ZZ\times\ZZ$ is composed of: the origin $0\equiv (0,0)$, the {\em diagonal} nodes $k \equiv (k,k)$, and upper-, and lower-diagonal nodes,
$(k-\frac{1}{2})^+ \equiv (k-1,k)$, and $(k-\frac{1}{2})^- \equiv (k,k-1)$, for $k \in \{1,2,\ldots\}$. Cartesian coordinates of a node $v\equiv (v^1,v^2)$ will be written as $v^\nu$, $\nu\in\{1,2\}$, in general.
The set of directed edges ${\cal E}(G)$ contains: vertical, horizontal, diagonal, skew-diagonal, and self-connections, as indicated in Fig.~\ref{fig:HD}, where only self-connections
of diagonal nodes are degenerate with multiplicity two. Edges may also be identified with triples $e\equiv (p(e),q(e);\mu(e))$, pointing from node $p(e)$ to node $q(e)$ and having degeneracy label $\mu(e)$, where $\mu=1$ for all edges except diagonal self-connections $(k,k;\mu)$ where $\mu \in\{\pm 1\}$. 

There are two crucial concepts of the IDO method generalising the concept of defect operator $\sigma^\z$ in the $XXZ$ model. The first is the {\em index function} $\omega : {\cal E} \to {\rm End}({\cal H}_{\rm p})$ associating a local physical operator $\omega(e)$ to each edge $e$ of the graph, which can be fully determined by a careful inspection of empirical data, i.e., it should match $\sigma^{s_x}\tau^{t_x}$ for the edge corresponding to $x-$th step of all walks generated by nonvanishing terms of $\Omega_n$. Painting the nodes of the graph as black, white and black\&white (see Fig.~\ref{fig:HD}) we encode the empirical data suggesting the index function 
\begin{equation}
\omega(e) = \sigma^{b^1(e)} \tau^{b^2(e)},\quad {\rm where} \quad b^\nu : {\cal E} \to {\cal J},
\end{equation}
as follows:
$b^\nu (e)=\pm$ if $q^\nu(e)-p^\nu(e)=\pm 1$, while for $q^\nu(e)=p^\nu(e)$, $b^\nu(e)=0$, if $e$ connects {\em white} nodes, and
$b^\nu(e)=\z$, if $e$ connects {\em black} nodes. 
For diagonal self-connections (on {\em black\&white} nodes), the index functions are determined by the degeneracy index, 
$b^\nu(k,k;1)=0, b^\nu(k,k;-1)=\z$.

The second key concept is the {\em defect edge} of the graph.
Let us consider an arbitrary walk of length 2, i.e., a pair of subsequent edges $e,f\in {\cal E}$, with $q(e)=p(f)$.
Writing a Hamiltonian density on a pair of sites as 
$h_{1,2}=h^\sigma_{1,2}+h^\tau_{1,2} +u_1 \sigma^\z_1 \tau^\z_1 + u_2 \sigma^\z_2\tau^\z_2$, which can represent either the bulk or boundary part of 
$H$ (\ref{eq:Hhub}), one finds the following general form of the local commutator of 
$h$ with a tensor product of two {\em valid} edge factors for a pair of consecutive edges ($2-$walks) $e,f\in{\cal E}$, $q(e)=p(f)$
\begin{equation}
\!\!\!\!\!\!\!\!\!\!\!\!\!\!\!\!\!\!\!\!\!\!\!\!\!\!\!\!\!\!\!\![h,\omega(e)\otimes\omega(f)] = \!\!\!\!\!\!\sum_{s,t \in{\cal J},e'\in{\cal E}}^{p(e')=p(e),q(f)-q(e')=d(s,t)}\!\!\!\!\!\!\!\!\!\!\!\!\!\!\!X^{s,t}_{e,f}\,\omega(e')\otimes \sigma^s \tau^t+\!\!\!\!\!\!\!\!\sum_{s,t \in{\cal J},f'\in{\cal E}}^{q(f')=q(f),p(e)-p(f')=d(s,t)}\!\!\!\!\!\!\!\!\!\!\!\!\!\!\!Y^{s,t}_{e,f}\,\sigma^s \tau^t \otimes \omega(f'),\quad \label{local}
\end{equation}
with suitable structure constants $X^{s,t}_{e,f}(u_1,u_2),Y^{s,t}_{e,f}(u_1,u_2)$. We define a displacement vector associated with a pair of Pauli indices as $d(s,t) \equiv (d(s),d(t))$.
Eq.~(\ref{local}), in analogy to identity (\ref{eq:struct}) for $XXZ$ model, has the following crucial property: Any tensor factor $\sigma^s \tau^t$ in the first (or second) sum on RHS of (\ref{local}) is (i) 
neither of the form $\omega(f')$ (or $\omega(e')$), for {\em any} edge $f'$ (or $e'$) which would complete the 2-walk $(e',f')$ to connect the same nodes as $(e,f)$, 
(ii) nor is the missing link $d(s,t)$ between $q(e')$ and $q(f)$ (or $p(e)$ and $p(f')$) provided by {\em any} edge of the graph. We shall call such a factor {\em a defect operator}, or {\em defect edge} if referring to the graph. See insets of Fig.~\ref{fig:HD} for two characteristic examples. 

To each vertex $v\in {\cal V}$ we associate a vector space ${\cal H}_v$, such that the entire auxiliary space is a direct sum ${\cal H}_{\rm a}=\bigoplus_{v\in{\cal V}} {\cal H}_v$, and associate a transition amplitude to each edge $e\in{\cal E}$ as a linear operator $a_e \in {\rm Lin}({\cal H}_{p(e)},{\cal H}_{q(e)})$. Writing a WGS ansatz (\ref{eq:WGSXXZ}) for the amplitude operator of NESS 
\begin{equation}
\Omega_n = \sum_{(e_1,\ldots,e_n)\in{\cal W}_n(0,0)} a_{e_1}a_{e_2}\cdots a_{e_n} \omega(e_1)\otimes \omega(e_2)\otimes \cdots \omega(e_n).
\label{eq:wg}
\end{equation}
one notes that, since the Hamiltonian (\ref{eq:Hhub}) is a sum of local terms, the entire commutator $[H,\Omega_n]$ written in the tensor product expansion (like (\ref{eq:wg})) is composed of terms which 
correspond to $n$-walks over a defective graph with exactly {\em one} defect operator. As the RHS of (\ref{eq:hubdefrel}) has only boundary defects, in the first or last factor, all the terms with defects in the bulk should therefore identically vanish. Picking any pair of nodes, $v,r\in{\cal V}$, which can be connected with at least one $3-$walk, it is then sufficient that the following local conditions are satisfied
\begin{equation}
\!\!\!\!\!\!\!\!\!\!\!\!\!\!\!\!\!\!\!\!\!\!\!\!\!\!\!\!\!\!\!\!\!\sum_{(e,f,g)\in {\cal W}_3(v,r)} a_e a_f a_g  \tr\!\left\{ \left(\omega(e')\otimes \sigma^s \tau^t\otimes\omega(g')\right)^\dagger [H_3,\omega(e)\otimes\omega(f)\otimes\omega(g)]\right\}
 = 0,
 \label{bulk}
\end{equation}
for any pair of edges $e',g'\in{\cal E}(G)$ for which $p(e')=v,q(g')=r$, and any defect component $s,t\in{\cal J}$. 
Here and below $H_k\in {\rm End}({\cal H}^{\otimes k}_{\rm p})$ denotes the Hamiltonian (\ref{eq:Hhub}) for a small cluster of $n=k$ sites with open boundaries $\mu_{\rm L/R}=0$.
Of course, for many combinations $(v,r,e',g',s,t)$ the above equation is trivial, i.e. always satisfied, e.g., when $\sigma^s\tau^t=\omega(f')$ for some 
valid edge $f'$ between $q(e')$ and $p(g')$. The remaining local equations which need to be satisfied to yield (\ref{eq:hubdefrel}) are those for which the defect operator sits at the first $x=1$ or the last $x=n$ tensor factor.
Again, one can factor out sufficient local conditions, which can now be formulated on two sites, in terms of $2-$walks, namely
\begin{eqnarray}
\!\!\!\!\!\!\!\!\!\!\!\!\!\!\!\!\!\!\!\!\!\!\!\!\!\!\!\!\!\!\!\!\!\!\!\!\sum_{(e,f)\in{\cal W}_2(0,v)} a_e a_f \tr\!\left\{\left(\sigma^s\tau^t \otimes \omega(f')\right)^\dagger\left([H_2,\omega(e)\otimes\omega(f)]+\ii \varepsilon \hat{\cal P}(\omega(e))\otimes \omega(f)\right)\right\}=0,\quad \label{eq:lbc} \\
\!\!\!\!\!\!\!\!\!\!\!\!\!\!\!\!\!\!\!\!\!\!\!\!\!\!\!\!\!\!\!\!\!\!\!\!\sum_{(e,f)\in{\cal W}_2(v,0)} a_e a_f \tr\!\left\{\left(\omega(e')\otimes\sigma^s\tau^t\right)^\dagger\left([H_2,\omega(e)\otimes\omega(f)]- \ii \varepsilon \omega(e)\otimes \hat{\cal P}(\omega(f))\right)\right\}=0,\quad \label{eq:rbc}
\end{eqnarray}
for all $e',f' \in{\cal E}$, with $q(f')=v$, and $p(e')=v$. 
 $\hat{\cal P}$ is a map over ${\rm End}({\cal H}_{\rm p})$ defined as $\hat{\cal P}(\rho) := \frac{1}{2}\sigma^\z\otimes \tr_{\!\sigma}(\rho) + \frac{1}{2}\tr_{\!\tau}(\rho)\otimes \tau^{\rm z}$ where $\tr_{\!\sigma}$ (or $\tr_{\!\tau}$) 
denotes the partial trace over $\sigma$ (or $\tau$) qubit of ${\cal H}_{\rm p}$. Here, the set of possible defect operators is quite limited, specifically, to $(s,t)\in\{(0,\z),(\z,0),(+,\z),(\z,+)\}$ for the left boundary conditions (\ref{eq:lbc}), or to $(s,t)\in\{(0,\z),(\z,0),(-,\z),(\z,-)\}$ for the right boundary condition (\ref{eq:rbc}).

Summarizing, finding $a_e$ obeying the three-point recurrences in the bulk (\ref{bulk}) with the two-point boundary conditions (\ref{eq:lbc},\ref{eq:rbc}) is {\em sufficient} for establishing validity of 
Eq.~(\ref{eq:hubdefrel}) with ansatz (\ref{eq:wg}) together with the conjectured structure of the graph $({\cal V},{\cal E})$ and its index function $\omega$ and hence, according to Lemma 2, exactly solving NESS for symmetric incoherent driving $\Gamma_{\rm L}=\Gamma_{\rm R}=\varepsilon$, $\mu_{\rm L/R}=0$, for {\em any} $n$. The solution, unique up to gauge transformations, has been found \cite{P14a} by means of a computer program in Mathematica, requiring the local auxiliary 
spaces of uni-color nodes $0,(k-\half)^\pm$ to be one-dimensional
and those of black\&white nodes $k, k\ge 1$, to be two-dimensional:
\begin{eqnarray}
&&a_{(0,0;+1)} = 1, \quad a_{(0,0;-1)} = 0, \quad a_{(0,1/2^\pm)} = -\varepsilon, \quad a_{(1/2^\pm,0)} = -\ii, \nonumber \\
&& a_{((k-1/2)^\pm,(k-1/2)^\pm)} = -(-1)^k \ihalf\varepsilon,\quad a_{((k-1/2)^\pm,(k-1/2)^\mp)} = \ii \varepsilon,
\nonumber \\
&&a_{(0,1)} = \pmatrix{-2\ii \varepsilon & 0},
\quad a_{(1,0)} = 
\pmatrix{
-\ihalf\varepsilon - u \cr 
-1
},
\nonumber \\
&& a_{(k,(k+1/2)^\pm)} = \pmatrix{
-\varepsilon \cr 0},
\quad
a_{(k,(k-1/2)^\pm)} = \pmatrix{
-(-1)^k (ku + \ihalf \varepsilon){\textstyle\frac{\varepsilon}{2}} \cr
(-1)^{\lfloor\frac{k-1}{2}\rfloor}{\textstyle\frac{\varepsilon}{2}}
}, \nonumber \\ 
&& a_{((k-1/2)^\pm,k)} = \pmatrix{
-\varepsilon & 0
}, \;\,
a_{((k+1/2)^\pm,k)} = \pmatrix{
-(-1)^k(\ii + {\textstyle\frac{k}{2}} \varepsilon u) & (-1)^{\lfloor\frac{k-1}{2}\rfloor}{\textstyle\frac{\varepsilon}{2}}
},
\nonumber\\
&&a_{(k,k+1)} = -(-1)^k 2\ii\varepsilon \pmatrix{
1 & 0\cr
0 & 0
},\nonumber\\
&&
a_{(k+1,k)} = \pmatrix{
(-1)^k (\ii + {\textstyle\frac{k}{2}} \varepsilon u)(\ii (k\!+\!1) u - {\textstyle\frac{\varepsilon}{2}}) & (-1)^{\lfloor\frac{k-1}{2}\rfloor}({\textstyle\frac{\varepsilon}{2}}-\ii(k\!+\!1)u){\textstyle\frac{\varepsilon}{2}},\cr
(-1)^{\lfloor\frac{k}{2}\rfloor}(\ihalf k \varepsilon u - 1) & \ihalf\varepsilon
},
\nonumber\\
&& a_{(k,k;(-1)^k)} =
\pmatrix{
(-1)^k (1- \ihalf k \varepsilon u) & (-1)^{\lfloor\frac{k-1}{2}\rfloor} \ihalf \varepsilon \cr
0 & 0 
}, 
\nonumber\\
&&a_{(k,k;-(-1)^k)} =
\pmatrix{
(-1)^k ({\textstyle\frac{\varepsilon}{2}}- \ii ku)\frac{\varepsilon}{2} & 0 \cr
(-1)^{\lfloor \frac{k-1}{2}\rfloor} \ihalf\varepsilon & 0
}.  \label{givea} \end{eqnarray}

\subsection{Lax representation of NESS}

\label{subsect:LaxHubb}

Having an exact NESS solution for the Hubbard model at hand, one can now explore its mathematical properties more deeply. A strong motivation for that comes from observation that MPA formulation of the WGS ansatz (\ref{eq:wg},\ref{givea}) $\Omega_n = \sum_{\ul{s},\ul{t}} \mm{A}_{s_1,t_1}\cdots \mm{A}_{s_n,t_n} \sigma^{s_1}_1\tau^{t_1}_1\cdots \sigma^{s_n}_n\tau^{t_n}_n$
with $\mm{A}_{s,t} = \bigoplus_{e\in{\cal E}} \delta_{s,b^1(e)} \delta_{t,b^2(e)} a_e$ allows an explicit factorization $\mm{A}_{s,t} = \mm{S}_s \mm{T}_t \mm{X}$, with $\mm{S}_s,\mm{T}_t,\mm{X}\in{\rm End}({\cal H}_{\rm a})$, $[\mm{S}_s,\mm{T}_t]=0$,
which is, in spirit, very close to Shastry's form of the Lax operator \cite{S88}. We show in this section how a general Lax form of the NESS amplitude operator $\Omega_n$ can be derived satisfying a generalised Sutherland relation (closely following Ref.~\cite{PP15}) which reproduces and generalizes the result of the previous section, namely it solves the boundary driven Hubbard chain for arbitrary rates $\Gamma_{\rm L/R}$ and chemical potentials $\mu_{\rm L/R}$.
IDO technique could then be re-interpreted merely as a graph theoretical representation of the Sutherland condition formulated locally between adjacent vertices of the graph.

It turns advantageous here to choose a particular basis of auxiliary sub-spaces for diagonal (black\&white) vertices ${\cal H}_{k\ge 1}={\rm lsp}\{ \ket{k^-}, \ket{k^+}\}$, ${\cal H}_0 = {\rm lsp}\{ \ket{0^+} \}$, and hence to identify the nodes of the graph
with unique labels of individual auxiliary basis states ${\cal V}=\{0^{+},\frac{1}{2}^{+},\frac{1}{2}^{-},1^{-},1^{+},\frac{3}{2}^{+},\frac{3}{2}^{-},2^{-},2^{+}\ldots\}$, so that the entire infinite-dimensional auxiliary space is a simple linear span ${\cal H}_{\rm a}={\rm lsp}\{\ket{v}; v\in{\cal V}\}$.
We extend the definition of the spin-flip $\mm{G}$ over ${\cal H}_{\rm a}$ as a diagonal reflection of the graph, $\mm{G}\ket{k^\pm} = \ket{k^\pm}$, $\mm{G} \ket{k\!+\!\half^\pm} = \ket{k\!+\!\half^\mp}$, $k \in \ZZ^+$.
We begin our analysis with a simple observation:

\begin{lem} \cite{PP15}
Assume there exist operators $\mm{S},\acute{\mm{S}},\grave{\mm{S}},\mm{T},\acute{\mm{T}},\grave{\mm{T}} \in {\rm End}({\cal H}_{\rm a}\otimes {\cal H}_{\rm p})$, and $\mm{X},\mm{Y}\in {\rm End}({\cal H}_{\rm a})$ (acting as scalars over ${\cal H}_{\rm p}$),  satisfying
\begin{eqnarray}
&& [h^\sigma_{1,2},\mm{S}_1 \mm{X} \mm{S}_2]  = \acute{\mm{S}}_1 \mm{X} \mm{S}_2 - \mm{S}_1\mm{X} \grave{\mm{S}}_2, \label{eq:id1}\\
&& [h^\tau_{1,2},\mm{T}_1 \mm{X} \mm{T}_2] = \acute{\mm{T}}_1 \mm{X} \mm{T}_2 - \mm{T}_1\mm{X}\grave{\mm{T}}_2, \label{eq:id2}\\
&& \mm{S}\acute{\mm{T}} + \mm{T}\acute{\mm{S}} - \grave{\mm{S}}\mm{T} - \grave{\mm{T}}\mm{S} = [\mm{Y}-u \sigma^\z \tau^\z,\mm{S}\mm{T}], \label{eq:id3}\\
&& [\mm{S},\mm{T}] = 0, \label{eq:id4}\\
&&[\mm{X},\mm{Y}] = 0. \label{eq:id5}
\end{eqnarray}
Subscripts, like in $\mm{S}_x$, indicate independent local physical spaces pertaining to sites $x$ in the embeded representation ${\rm End}({\cal H}_{\rm a}\otimes {\cal H}^{\otimes n}_{\rm p})$. Then, one can define a Lax operator and its `derivative'
$\mm{L},\widetilde{\mm{L}}\in {\rm End}({\cal H}_{\rm a}\otimes {\cal H}_{\rm p})$ as
\begin{eqnarray}
&&\mm{L} = \mm{S}\mm{T}\mm{X}, \label{eq:L} \\
&&\widetilde{\mm{L}} = \half(\mm{S}\acute{\mm{T}} + \mm{T}\acute{\mm{S}} + \grave{\mm{S}}\mm{T} + \grave{\mm{T}}\mm{S} - \{\mm{Y},\mm{S}\mm{T}\})\mm{X}, \label{eq:Lt}
\end{eqnarray}
such that the following Sutherland-Shastry relation (or generalized local operator divergence condition) holds
\begin{equation}
[h_{1,2},\mm{L}_1 \mm{L}_2] = (\widetilde{\mm{L}}_1 + \mm{Y} \mm{L}_1)\mm{L}_2 - \mm{L}_1(\widetilde{\mm{L}}_2 + \mm{L}_2\mm{Y}).
\label{eq:gLOD}
\end{equation}
\end{lem}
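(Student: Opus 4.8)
The plan is to verify the Sutherland--Shastry relation (\ref{eq:gLOD}) by direct computation, splitting the local Hamiltonian density $h_{1,2}=h^\sigma_{1,2}+h^\tau_{1,2}+\frac{u}{2}(\sigma^\z_1\tau^\z_1+\sigma^\z_2\tau^\z_2)$ from (\ref{eq:h12}) into its three parts and treating the commutator of each with $\mm{L}_1\mm{L}_2=\mm{S}_1\mm{T}_1\mm{X}\mm{S}_2\mm{T}_2\mm{X}$ separately. The computation rests on the structural reading that the decorated $\mm{S}$-operators act only on the auxiliary space and the physical $\sigma$-qubit while the $\mm{T}$-operators act only on the auxiliary space and the physical $\tau$-qubit; consequently $h^\sigma_{1,2}$ commutes with every $\mm{T}$-operator, with $\mm{X}$ and with $\mm{Y}$, and symmetrically for $h^\tau_{1,2}$, while $[\mm{S},\mm{T}]=0$ from (\ref{eq:id4}) upgrades to $[\mm{S}_a,\mm{T}_b]=0$ for arbitrary site labels $a,b$ (decomposing both operators over the physical factors shows their auxiliary parts commute), and likewise for the decorated versions. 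The operators $\mm{X}$ and $\mm{Y}$ commute only with each other, by (\ref{eq:id5}), and with the on-site physical operators, but crucially \emph{not} with $\mm{S}$ or $\mm{T}$, so their position in every product must be tracked with care --- this is the principal bookkeeping hazard of the proof.

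For the $h^\sigma$ part I would first use $[\mm{S}_1,\mm{T}_1]=0$ to rewrite $\mm{L}_1\mm{L}_2=\mm{T}_1(\mm{S}_1\mm{X}\mm{S}_2)\mm{T}_2\mm{X}$, exposing exactly the combination $\mm{S}_1\mm{X}\mm{S}_2$ appearing in (\ref{eq:id1}); since $h^\sigma_{1,2}$ passes freely through the flanking $\mm{T}_1,\mm{T}_2,\mm{X}$ this yields $[h^\sigma_{1,2},\mm{L}_1\mm{L}_2]=(\acute{\mm{S}}_1\mm{T}_1\mm{X})\mm{L}_2-\mm{L}_1(\grave{\mm{S}}_2\mm{T}_2\mm{X})$ after restoring the $\mm{T}$'s. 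A mirror manipulation, pulling $\mm{S}_2$ to the right past $\mm{T}_2$ to expose $\mm{T}_1\mm{X}\mm{T}_2$ and invoking (\ref{eq:id2}), gives $[h^\tau_{1,2},\mm{L}_1\mm{L}_2]=(\mm{S}_1\acute{\mm{T}}_1\mm{X})\mm{L}_2-\mm{L}_1(\mm{S}_2\grave{\mm{T}}_2\mm{X})$. Adding these and abbreviating $\mm{P}:=\acute{\mm{S}}\mm{T}+\mm{S}\acute{\mm{T}}$ and $\mm{Q}:=\grave{\mm{S}}\mm{T}+\mm{S}\grave{\mm{T}}$, the off-diagonal contribution collapses to $(\mm{P}_1\mm{X})\mm{L}_2-\mm{L}_1(\mm{Q}_2\mm{X})$. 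The diagonal piece is immediate: since $\sigma^\z_1\tau^\z_1$ commutes with everything at site $2$ and with $\mm{X}$, one gets $\frac{u}{2}[\sigma^\z_1\tau^\z_1+\sigma^\z_2\tau^\z_2,\mm{L}_1\mm{L}_2]=\frac{u}{2}([\sigma^\z\tau^\z,\mm{S}\mm{T}]_1\mm{X})\mm{L}_2+\frac{u}{2}\mm{L}_1([\sigma^\z\tau^\z,\mm{S}\mm{T}]_2\mm{X})$.

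The final step is to recognise the resulting left- and right-acting coefficients as $\widetilde{\mm{L}}+\mm{Y}\mm{L}$ and $-(\widetilde{\mm{L}}+\mm{L}\mm{Y})$. Using $[\mm{S},\mm{T}]=0$, relation (\ref{eq:id3}) reduces to $\mm{P}-\mm{Q}=[\mm{Y}-u\sigma^\z\tau^\z,\mm{S}\mm{T}]$, while the definition (\ref{eq:Lt}) of $\widetilde{\mm{L}}$ gives $(\mm{P}+\mm{Q})\mm{X}=2\widetilde{\mm{L}}+\{\mm{Y},\mm{S}\mm{T}\}\mm{X}$. Solving this linear pair for $\mm{P}\mm{X}$ and $\mm{Q}\mm{X}$, and using the algebraic collapse $\{\mm{Y},\mm{S}\mm{T}\}\pm[\mm{Y},\mm{S}\mm{T}]$ equal to $2\mm{Y}\mm{S}\mm{T}$ or $2\mm{S}\mm{T}\mm{Y}$ together with $[\mm{X},\mm{Y}]=0$ (which is exactly what slides $\mm{Y}$ onto the correct side of $\mm{L}=\mm{S}\mm{T}\mm{X}$), I obtain $\mm{P}\mm{X}=\widetilde{\mm{L}}+\mm{Y}\mm{L}-\frac{u}{2}[\sigma^\z\tau^\z,\mm{S}\mm{T}]\mm{X}$ and $\mm{Q}\mm{X}=\widetilde{\mm{L}}+\mm{L}\mm{Y}+\frac{u}{2}[\sigma^\z\tau^\z,\mm{S}\mm{T}]\mm{X}$. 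The stray diagonal terms then cancel precisely against the $h$-diagonal piece, leaving left coefficient $\widetilde{\mm{L}}_1+\mm{Y}\mm{L}_1$ and right coefficient $-(\widetilde{\mm{L}}_2+\mm{L}_2\mm{Y})$, i.e. (\ref{eq:gLOD}). The subtle point to keep straight throughout is the asymmetry forced by the non-commutativity of $\mm{X}$ with $\mm{S},\mm{T}$: the operator $\mm{Y}$ must land to the left of $\mm{L}_1$ but to the right of $\mm{L}_2$, and it is the opposite signs in the $\{\cdot\}\pm[\cdot]$ collapse that deliver exactly this placement.
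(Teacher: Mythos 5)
Your overall strategy --- splitting $h_{1,2}=h^\sigma_{1,2}+h^\tau_{1,2}+\frac{u}{2}(\sigma^\z_1\tau^\z_1+\sigma^\z_2\tau^\z_2)$, commuting each piece through $\mm{L}_1\mm{L}_2$ via (\ref{eq:id1},\ref{eq:id2}), and recombining with (\ref{eq:id3}), (\ref{eq:id5}) and the definition (\ref{eq:Lt}) --- is exactly the ``straightforward insertion'' the paper intends, and your final assembly (solving the linear pair for $\mm{P}\mm{X}$, $\mm{Q}\mm{X}$, the $\{\mm{Y},\cdot\}\pm[\mm{Y},\cdot]$ collapse, and the cancellation of the $u$-terms against the diagonal part of $h$) is correct. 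There is, however, one unjustified ingredient: your claim that the commutativity $[\mm{S}_a,\mm{T}_b]=0$ extends ``likewise to the decorated versions.'' The hypotheses (\ref{eq:id1})--(\ref{eq:id5}) say nothing about $[\acute{\mm{S}},\mm{T}]$, $[\mm{S},\grave{\mm{T}}]$, etc., and no such relation can be derived from them (in the paper's explicit representation the objects $\acute{\mm{S}},\grave{\mm{S}}$ are not even defined individually --- only the products $\acute{\mm{S}}\mm{X}$ and $\mm{X}\grave{\mm{S}}$ are). You invoke this claim twice: once to write the hopping contribution as $(\mm{P}_1\mm{X})\mm{L}_2-\mm{L}_1(\mm{Q}_2\mm{X})$ with $\mm{P}=\acute{\mm{S}}\mm{T}+\mm{S}\acute{\mm{T}}$, $\mm{Q}=\grave{\mm{S}}\mm{T}+\mm{S}\grave{\mm{T}}$, and once more to ``reduce'' (\ref{eq:id3}) and (\ref{eq:Lt}) to statements about these $\mm{P},\mm{Q}$. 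Each of those two intermediate assertions is false under the stated assumptions alone.

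Fortunately the gap is self-cancelling, so the repair needs no new idea --- only the removal of the spurious reorderings. If you keep the orderings that the computation actually produces, the $h^\sigma$ part gives $\mm{T}_1\acute{\mm{S}}_1\mm{X}\,\mm{L}_2-\mm{L}_1\,\grave{\mm{S}}_2\mm{T}_2\mm{X}$ (the $\mm{T}_1$ stays to the \emph{left} of $\acute{\mm{S}}_1$, because you swapped $\mm{S}_1\mm{T}_1=\mm{T}_1\mm{S}_1$ before applying (\ref{eq:id1}) and never need to swap back), and the $h^\tau$ part gives $\mm{S}_1\acute{\mm{T}}_1\mm{X}\,\mm{L}_2-\mm{L}_1\,\grave{\mm{T}}_2\mm{S}_2\mm{X}$. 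The left and right coefficients are then $\mm{T}\acute{\mm{S}}+\mm{S}\acute{\mm{T}}$ and $\grave{\mm{S}}\mm{T}+\grave{\mm{T}}\mm{S}$, which are \emph{verbatim} the combinations appearing in (\ref{eq:id3}) and in the definition (\ref{eq:Lt}); the operator orderings in those hypotheses were evidently chosen precisely so that no decorated cross-commutation is ever required. Setting $\mm{P}:=\mm{T}\acute{\mm{S}}+\mm{S}\acute{\mm{T}}$ and $\mm{Q}:=\grave{\mm{S}}\mm{T}+\grave{\mm{T}}\mm{S}$, every subsequent step of your assembly is valid exactly as you wrote it. Because you applied the same illegitimate swap consistently on both sides of the bookkeeping, your final identity came out right; but as written, the proof rests on relations that do not follow from the lemma's assumptions.
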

The proof is a straightforward insertion of (\ref{eq:L},\ref{eq:Lt}) into Eq.~(\ref{eq:gLOD}) followed by subsequent application of identities (\ref{eq:id1}-\ref{eq:id5}) observing the definition (\ref{eq:h12}).

 We continue by {\em deriving} an explicit closed form representation of algebraic identities (\ref{eq:id1}-\ref{eq:id5}). Assuming the spin-flip symmetry
 \begin{equation}
 \mm{G} \mm{S} \mm{G} = \mm{T},\,
 \mm{G} \acute{\mm{S}} \mm{G} = \acute{\mm{T}},\,
 \mm{G} \grave{\mm{S}} \mm{G} = \grave{\mm{T}},\, [\mm{G},\mm{X}]=[\mm{G},\mm{Y}]=0,
 \label{eq:sym}
 \end{equation}
 and writing out the components $\mm{S}=\sum_{s\in{\cal J}} \mm{S}^s \sigma^s$, $\mm{T}=\sum_{t\in{\cal J}} \mm{T}^t \tau^t$, and similarly for $\acute{\mm{S}},\grave{\mm{S}},\acute{\mm{T}},\grave{\mm{T}}$, we find that Eqs. (\ref{eq:id1}) and (\ref{eq:id2}) are equivalent, Eq.~(\ref{eq:id3}) is invariant under $\mm{G}$, while Eq.~(\ref{eq:id4}) implies $[\mm{S}^s,\mm{T}^t]\equiv 0$.
Eqs. (\ref{eq:id1},\ref{eq:id2}) are in fact just a particularly `decorated' 6-vertex Yang-Baxter equations for free fermion (or $XX$) chains. We shall thus make an ansatz for $\mm{S}^s,\mm{T}^t$
in which each square plaquette $\{ k^+, k\!+\!\half^+,k\!+\!\half^-,k\!+\!1^-\}$ of the graph spans a pair of representations of a free fermion algebra (see Fig.~\ref{ST}), namely requiring that
$\{\mm{S}^+,\mm{S}^-\}$ (and similarly for $\{\mm{T}^+,\mm{T}^-\}$ via (\ref{eq:sym})) is  in the center of the algebra generated by $\mm{S}^s,\mm{T}^t$
\begin{equation}
[\{\mm{S}^+,\mm{S}^-\},\mm{S}^s] = [\{\mm{S}^+,\mm{S}^-\},\mm{T}^t] = 0,\quad s,t\in{\cal J}.
\end{equation}
One finds that these conditions are fulfilled by an ansatz
\begin{eqnarray}
&&\mm{S}^+ =\sqrt{2} \sum_{k=0}^\infty \left(\ket{k^+}\bra{k\!+\!\half^+} + \ket{k\!+\!\half^-}\bra{k\!+\!1^-}\right),\label{eq:Sp} \\
&&\mm{S}^-  = \sqrt{2} \sum_{k=0}^\infty (-1)^k \left(\ket{k\!+\!\half^+}\bra{k^+} + \ket{k\!+\!1^-}\bra{k\!+\!\half^-}\right),\nonumber\\
&&\mm{S}^0 = \sum_{k=0}^\infty \bigl(\ket{2k^+}\bra{2k^+} + \ket{2k\!+\!\half^+}\bra{2k\!+\!\half^+} \nonumber \\
&&\qquad+ \ket{2k\!+\!1^-}\bra{2k\!+\!1^-}+ \ket{2k\!+\!\half^-}\bra{2k\!+\!\half^-}\bigr)\nonumber\\
&&\quad+ \lambda\sum_{k=1}^\infty \left( \ket{2k\!-\!\half^+}\bra{2k\!-\!\half^+} + \ket{2k^-}\bra{2k^-}\right), \nonumber \\
&&\mm{S}^\z = \sum_{k=1}^\infty \bigl(\ket{2k\!-\!1^+}\bra{2k\!-\!1^+} + \ket{2k\!-\!\half^+}\bra{2k\!-\!\half^+} \nonumber \\
&&\qquad + \ket{2k^-}\bra{2k^-} + \ket{2k\!+\!\half^-}\bra{2k\!+\!\half^-}\bigr)  \nonumber\\
&&\quad + \lambda\sum_{k=0}^\infty\left(\ket{2k\!+\!\half^+}\bra{2k\!+\!\half^+} + \ket{2k\!+\!1^-}\bra{2k\!+\!1^-}\right), \nonumber
\end{eqnarray}
where $\lambda\in\CC$ is a free parameter.
Eqs. (\ref{eq:sym}) imply definition of another set of auxilliary fermi operators $\mm{T}^t=\mm{G}\mm{S}^t\mm{G}$, such that Eq.~(\ref{eq:id4}) is satisfied.

Furthermore, one can write a consistent ansatz for the `interaction' operator $\mm{X}$ coupling the neighbouring plaquettes:
\begin{eqnarray}
\mm{X} &=& \ket{0^+}\bra{0^+} +
 \sum_{k=1}^\infty (-1)^{k}\!\!\!\sum_{\nu,\nu'\in\{-,+\}}   \ket{k^{\nu}} X^{\nu,\nu'}_k  \bra{k^{\nu'}} \nonumber\\
&+& w \sum_{k=0}^\infty (-1)^k \left(\ket{k\!+\!\half^+}\bra{k\!+\!\half^+} +
\ket{k\!+\!\half^-}\bra{k\!+\!\half^-}\right),  \label{eq:Xansatz}
\end{eqnarray}
where $X_k=\{ X^{\nu,\nu'}_k \}_{\nu,\nu'\in\{-,+\}}$ are still unknown $2\times 2$ matrices and $w\in\CC$ is another free parameter.
Namely, Eq.~(\ref{eq:id1}) yields a system of linear equations for auxiliary operators $\acute{\mm{S}}^s\mm{X},\mm{X}\grave{\mm{S}}^s$,
with a unique solution parametrised by $X_k,w,\lambda$:
\begin{eqnarray}
&&\acute{\mm{S}}^+\mm{X}  =  -2\sqrt{2}\sum_{k=1}^{\infty}(-1)^k X^{+-}_k \ket{k^-}\bra{k\!+\!\half^+}, \label{eq:Sb}\\
&&\acute{\mm{S}}^-\mm{X} =   -2\sqrt{2}\sum_{k=1}^{\infty} X^{-+}_k\ket{k^+}\bra{k\!-\!\half^-}, \nonumber\\
&&\mm{X}\grave{\mm{S}}^{+}  =  2\sqrt{2}\sum_{k=1}^{\infty}
(-1)^k X^{+-}_k\ket{k\!-\!\half^-}\bra{k^+} \nonumber\\
&&\mm{X}\grave{\mm{S}}^- =  -2\sqrt{2}\sum_{k=1}^{\infty}
X^{-+}_k \ket{k\!+\!\half^+}\bra{k^-}, \nonumber\\
&&\acute{\mm{S}}^0\mm{X}=\mm{X}\grave{\mm{S}}^0=
2\sum_{k=1}^{\infty}\bigl(w\ket{2k\!-\!1^+}\bra{2k\!-\!1^+}-w\ket{2k^-}\bra{2k^-} \nonumber \\
&&\quad - X^{++}_{2k-1}\ket{2k\!-\!\half^+}\bra{2k\!-\!\half^+}-X^{--}_{2k}\ket{2k\!-\!\half^-}\bra{2k\!-\!\half^-}\bigr) \nonumber \\
&&+2\lambda
\sum_{k=0}^{\infty}
\bigl(-w \ket{2k^+}\bra{2k^+}+X^{--}_{2k+1}\ket{2k\!+\!\half^-}\bra{2k\!+\!\half^-}\bigr), \nonumber\\
&&\acute{\mm{S}}^\z\mm{X}=\mm{X}\grave{\mm{S}}^\z=
 2\sum_{k=0}^{\infty}\bigl(w \ket{2k\!+\!1^-}\bra{2k\!+\!1^-}-w\ket{2k^+}\bra{2k^+} \nonumber \\
 &&\quad + X^{++}_{2k}\ket{2k\!+\!\half^+}\bra{2k\!+\!\half^+} + X^{--}_{2k+1}\ket{2k\!+\!\half^-}\bra{2k\!+\!\half^-}\bigr) \nonumber \\
 && + 2\lambda
\sum_{k=1}^{\infty}\bigl(w\ket{2k\!-\! 1^+}\bra{2k\!-\! 1^+}-X^{--}_{2k} \ket{2k\!-\!\half^-}\bra{2k\!-\!\half^-}\bigr). \nonumber
\end{eqnarray}
Assuming $\mm{X}$ to be invertible (i.e., $w\neq 0$, $\det X_k \neq 0$) and
plugging expressions (\ref{eq:Sb}) to the remaining identity (\ref{eq:id3}) result in (i) a unique consistent expression for the `spectral' operator $\mm{Y}$
\begin{equation}
\mm{Y}=-2\lambda u\sum_{k=0}^{\infty}\bigl(\ket{k^+}\bra{k^+} + \ket{k\!+\!1^-}\bra{k\!+\!1^-}\bigr),
\end{equation}
which clearly commutes with $\mm{X}$, as required by (\ref{eq:id5}),
and (ii) recurrence relations for the matrix elements of $X_k$:
$X^{--}_{k+1}=X^{--}_{k} - u w$, $X^{++}_{k+1}=X^{++}_{k} - u w (1-\lambda^2)$,
and $\det X_k = -w^2$, while also fixing the initial condition $X^{++}_{0}=1$, $X^{--}_{0}=-w^2$, yielding
\begin{equation}
X_k(\lambda,w)=\pmatrix{
-(w+k u)w & 1-(w+k u)w(1-\lambda^2)\cr
 -k u w & 1- k u w (1-\lambda^2)
}.
\label{Xk}
\end{equation}
Note that $X^{-+}_k/X^{+-}_k$ can be chosen freely exploring a gauge freedom
$\ket{k^\pm}\to \xi^{\pm 1} \ket{k^\pm}$, $k=1,2\ldots$
\begin{figure}
\hbox{
\hspace{1.35cm}

\includegraphics[width=0.9\columnwidth]{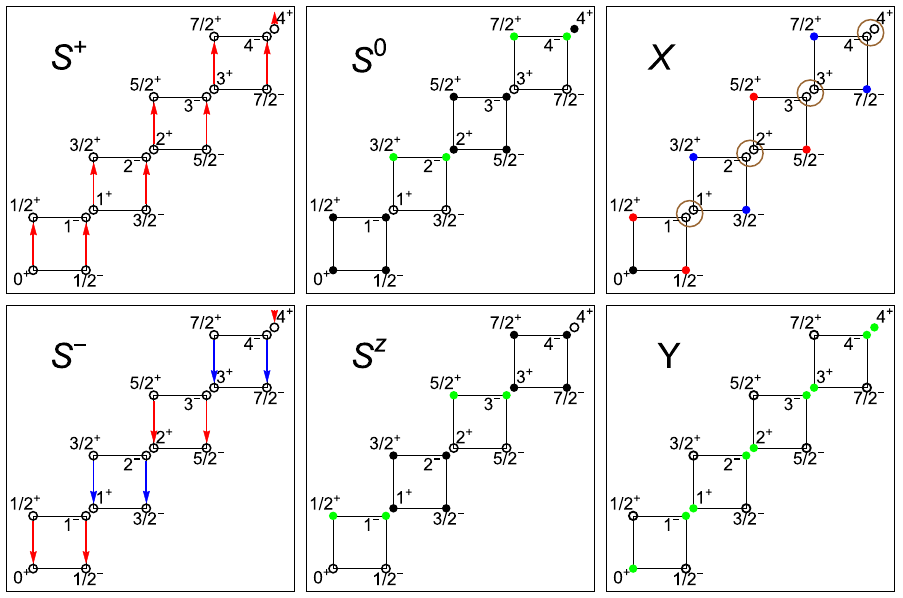}
}
\caption{Diagrammatic representation of factors of the Lax operator where auxiliary states are labelled by vertices ${\cal V}$. Diagrams for $\mm{T}^s$ are obtained by
reflection of those of $\mm{S}^s$ across the diagonal.
Red/blue arrows indicate offdiagonal transitions with amplitude $\pm \sqrt{2}$.
Red, blue, green, black, open points represent diagonal multiplications by $w,-w,\propto\lambda,1,0$, respectively, and brown circles represent multiplications by $2\times 2$ matrices $X_k$.}
\label{ST}
\end{figure}
We have thus constructed two-parameter representation of the Lax matrix $\mm{L}(\lambda,w)=\mm{S}(\lambda)\mm{T}(\lambda)\mm{X}(\lambda,w)$ satisfying Sutherland-Shastry relation (\ref{eq:gLOD}). We propose to call $\lambda$ a {\em spectral parameter} and $w$ a {\em representation parameter}. 
Remarkably, our representation is generically of infinite-dimension, for any nonzero $u$, and seems to be
essentially different from Shastry's \cite{S88} which, including appropriate auxiliary operators $\acute{\mm{S}},\grave{\mm{S}},\ldots$, forms a $4$-dimensional representation of the algebra (\ref{eq:id1}-\ref{eq:id5}).

As an application of such novel Lax operator we consider a markovian master equation and demonstrate how to obtain NESS fixed point (\ref{eq:fp}) for general Hamiltonian (\ref{eq:Hhub}) and boundary dissipation (\ref{eq:Lhub}):
\smallskip
\noindent
\begin{thm} \cite{PP15}
Unique fixed point $\LL\rho_\infty=0$ of boundary driven Hubbard chain reads
\begin{equation}
\rho_\infty = (\!\tr R_\infty)^{-1} R_\infty,\quad R_\infty = \Omega\,\Omega^\dagger K,
\end{equation}
where $\Omega=\Omega_n(\lambda,w)$ is a {\em highest-weight} transfer matrix
\begin{equation}
\Omega = \bra{0^+} \mm{L}_1(\lambda,w)\mm{L}_2(\lambda,w)\cdots  \mm{L}_n(\lambda,w)\ket{0^+}
\label{eq:OmHub}
\end{equation}
and $K$ is a diagonal operator
\begin{equation}
K = K_1 K_2\cdots K_n,\quad K_n = \exp\left(\kappa (\sigma^\z_j+\tau^\z_j)\right)
\end{equation}
with $\kappa = \half\log \Gamma_{\rm L}/\Gamma_{\rm R}$ and parameters $\lambda,w$ are related to coherent and incoherent biases
\begin{equation}
\lambda = \frac{\Gamma_{\rm L}-\Gamma_{\rm R} - \ii (\mu_{\rm L}+\mu_{\rm R})}{\Gamma_{\rm L}+\Gamma_{\rm R} -
\ii (\mu_{\rm L}-\mu_{\rm R})},\;
w =\frac{1}{4}\left(\mu_{\rm L}-\mu_{\rm R} + \ii\left(\Gamma_{\rm L}+\Gamma_{\rm R}\right)\right).
\label{eq:pars}
\end{equation}
\end{thm}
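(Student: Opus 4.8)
\emph{Proof plan.} The strategy mirrors the Lax-operator derivation of the $XXZ$ solution in subsections~\ref{subsect:XXX2} and~\ref{subsect:asym}, now built on the generalized local operator divergence condition (\ref{eq:gLOD}) established in Lemma~3 for the explicit two-parameter Hubbard Lax operator $\mm{L}(\lambda,w)=\mm{S}\mm{T}\mm{X}$ of (\ref{eq:Sp})--(\ref{Xk}). Since the Hubbard NESS is unique by the Evans--Frigerio theorem \cite{evans,frigeiro} (the argument of subsection~\ref{unique} extends verbatim to two spin species, as already noted), it suffices to exhibit a single fixed point, i.e.\ to verify the stationarity condition (\ref{eq:fp}) for $R_\infty=\Omega\,\Omega^\dagger K$. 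First I would establish the bulk telescoping: writing (\ref{eq:gLOD}) for each bond $(x,x+1)$, multiplying by $\mm{L}_1\cdots\mm{L}_{x-1}$ on the left and $\mm{L}_{x+2}\cdots\mm{L}_n$ on the right, and summing over $x=1,\dots,n-1$, the $\widetilde{\mm{L}}$ contributions collapse telescopically to the two end terms $\widetilde{\mm{L}}_1\mm{L}_2\cdots\mm{L}_n-\mm{L}_1\cdots\mm{L}_{n-1}\widetilde{\mm{L}}_n$, exactly as in (\ref{HLT}) for $XXZ$. The genuinely new feature is the spectral operator $\mm{Y}$: the insertions $\mm{Y}\mm{L}_x$ and $\mm{L}_{x+1}\mm{Y}$ telescope only partially, leaving four $\mm{Y}$-decorated products, two at each boundary.

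Next I would project with the highest-weight vacua $\bra{0^+},\ket{0^+}$ to pass from $\prod_x\mm{L}_x$ to $\Omega=\Omega_n(\lambda,w)$. The decisive simplifications are the eigenrelations $\bra{0^+}\mm{Y}=\mm{Y}\ket{0^+}=-2\lambda u$ together with the highest-weight action of $\mm{S}^s,\mm{T}^t,\mm{X}$ at the top vertices: the outermost $\mm{Y}$-insertions reduce to scalars, and all the scalar pieces $-2\lambda u\,\Omega$ coming from the four residues cancel in pairs, leaving only two boundary \emph{corrections}. Adjoining the commutator of the boundary Hamiltonian $h_{\rm L}+h_{\rm R}$ from (\ref{eq:hLR}), which is already single-site at each end, and commuting the magnetization dressing $K=\prod_x\exp(\kappa(\sigma^\z_x+\tau^\z_x))$ through the bulk by means of the identity $[h_{x,x+1},K_xK_{x+1}]=0$ (the Hubbard analogue of (\ref{eq:MM})), I would reduce the fixed-point condition to two independent boundary operator equations on ${\cal H}_{\rm a}\otimes{\cal H}_{\rm p}$, one per end, in complete analogy with (\ref{eq:be1}). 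Each such equation involves only the action of $\mm{L},\widetilde{\mm{L}},\mm{Y}$ on the vacuum, hence only finitely many low-lying auxiliary states $\ket{0^+},\ket{\tfrac12^\pm},\ket{1^\pm}$, so it is a closed finite linear system.

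It then remains to match these boundary operators to the dissipator $\sum_{\alpha=1}^4\DD_{L_\alpha}(R_\infty)$ with the jump operators (\ref{eq:Lhub}). Resolving the finite systems in the $\mm{End}({\cal H}_{\rm p})$ components and imposing their solvability fixes the spectral parameter $\lambda$, the representation parameter $w$, and the asymmetry $\kappa=\half\log\Gamma_{\rm L}/\Gamma_{\rm R}$ in terms of the physical biases, yielding precisely the relations (\ref{eq:pars}). Hermiticity and positivity of $R_\infty$ follow from the Cholesky-type structure $R_\infty=\Omega\,\Omega^\dagger K$, with $K>0$ (as $\kappa$ is real) and $\Omega\Omega^\dagger$ block-diagonal in the conserved $U(1)\times U(1)$ magnetization sectors in which $K$ is scalar; normalization by $\tr R_\infty$ then gives $\rho_\infty$.

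The main obstacle I anticipate is the bookkeeping of the $\mm{Y}$-residues. Unlike the $XXZ$ case, where no such term is present, here $\mm{Y}$ does \emph{not} commute with $\mm{L}$, so the interaction-induced contribution does not telescope cleanly; one must verify that after highest-weight projection the surviving corrections are genuinely supported near the two ends and that their combination with the $\widetilde{\mm{L}}$ end terms, the boundary fields $h_{\rm L/R}$, and the $K$-commutators assembles into exactly the four dissipation channels. This is where the specific representation of $\mm{S},\mm{T},\mm{X},\mm{Y}$ enters decisively, and it is the computation carried out in detail in Ref.~\cite{PP15}.
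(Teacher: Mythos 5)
Your plan follows the paper's proof essentially step for step: uniqueness via the Evans--Frigeiro theorem, telescoping of the Sutherland--Shastry relation of Lemma 3 leaving four $\mm{Y}$-decorated residues, cancellation of the two outermost (scalar) residues through the vacuum eigenrelation for $\mm{Y}$, commutation of $K$ through the bulk via $[h_{x,x+1},K_xK_{x+1}]=0$, and reduction to two finite boundary linear systems whose non-trivial components fix $(\lambda,w,\kappa)$ exactly as in (\ref{eq:pars}). One notational correction: since the dissipators act on $R_\infty=\Omega\,\Omega^\dagger K$ rather than on $\Omega$ alone, the two boundary equations live on the \emph{doubled} auxiliary space ${\cal H}_{\rm a}\otimes{\cal H}_{\rm a}\otimes{\cal H}_{\rm p}$ (precisely as in the cited Eq.~(\ref{eq:be1})), which the paper realizes through the conjugate-representation double Lax operators $\vmbb{L}_j=\mm{L}_j\mm{\un{L}}_j K_j$ and $\vmbb{Y}=\mm{Y}-\mm{\un{Y}}$, not on ${\cal H}_{\rm a}\otimes{\cal H}_{\rm p}$ as you wrote.
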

\begin{proof}
Let us now invoke two copies of the auxiliary space and define operators $\mm{S},\mm{T},\mm{\un{S}},\mm{\un{T}}\in{\rm End}({\cal H}_{\rm a}\otimes {\cal H}_{\rm a}\otimes {\cal H}_{\rm p})$ as
\begin{eqnarray*}
&&\mm{S} = \sum_s \mm{S}^s \otimes \one_{\rm a} \otimes \sigma^s,\;\,\qquad \mm{T} = \sum_t \mm{T}^t \otimes \one_{\rm a} \otimes \tau^t,\quad {\rm and}\\
&&\mm{\un{S}} = \sum_s  \one_{\rm a} \otimes \mm{\bar{S}}^s \otimes (\sigma^s)^T,\quad \mm{\un{T}} = \sum_t \one_{\rm a} \otimes \mm{\bar{T}}^t \otimes (\tau^t)^T.
\end{eqnarray*}
$()^T$ denotes the matrix transposition and $\bar{\mm{S}}$ the complex conjugation, i.e. replacement $\lambda,w,\to\bar{\lambda},\bar{w}$, and similarly for
$\acute{\mm{S}},\grave{\mm{S}},\acute{\mm{\un{S}}},\grave{\mm{\un{S}}},\acute{\mm{T}},\grave{\mm{T}},\acute{\mm{\un{T}}},\grave{\mm{\un{T}}}$,
and $\mm{X},\mm{\un{X}},\mm{Y},\mm{\un{Y}}\in{\rm End}({\cal H}_{\rm a}\otimes{\cal H}_{\rm a})$.
In fact, the primed operators
$ \mm{\un{S}},\acute{\mm{\un{S}}},\grave{\mm{\un{S}}},\mm{\un{T}},\acute{\mm{\un{T}}},\grave{\mm{\un{T}}},\mm{\un{X}},\mm{\un{Y}} $ generate a {\em conjugate} representation of the algebra (\ref{eq:id1}-\ref{eq:id5}). Noting $[h_{1,2},K_1 K_2]=0$ and the Jacobi identity one finds that the following double auxiliary operators
\begin{equation}
\vmbb{L}_j = \mm{L}_j\mm{\un{L}}_{\!\!j} K_j,\;
\widetilde{\vmbb{L}}_j = (\widetilde{\mm{L}}_j \mm{\un{L}}_{\!\!j} - \mm{L}_j\widetilde{\mm{\un{L}}}_{\!\!j})K_j,\;
\vmbb{Y} = \mm{Y}-\mm{\un{Y}},
\end{equation}
also respect  Sutherland-Shastri relation (\ref{eq:gLOD}), resulting in the telescoping series
\begin{eqnarray}
\sum_{j=1}^{n-1} \left[h_{j,j+1}, \vmbb{L}_1\vmbb{L}_2\cdots \vmbb{L}_n\right] &=&
(\widetilde{\vmbb{L}}_1+\{\vmbb{Y},\vmbb{L}_1\})\vmbb{L}_2\cdots\vmbb{L}_{n} \nonumber \\
&-&\vmbb{L}_1\cdots\vmbb{L}_{n-1}(\widetilde{\vmbb{L}}_n+\{\vmbb{Y},\vmbb{L}_n\}).\qquad \label{eq:telescop}
\end{eqnarray}
Double Lax operator expresses NESS in a compact form
\begin{equation}
R_\infty = \bra{0^+,0^+} \vmbb{L}_1\vmbb{L}_2 \cdots \vmbb{L}_n \ket{0^+,0^+},
\end{equation}
hence the fixed point condition $\LL R = 0$ becomes, after applying (\ref{eq:telescop}) to $[H,R]$, equivalent
to a pair of equations for ultralocal operators at the boundary physical sites
\begin{eqnarray}
&& \bra{0^+,0^+}\left(\ii \Gamma_{\rm L}(\hat{\cal D}_{\sigma^+}+ \hat{\cal D}_{\tau^+})\vmbb{L} + \widetilde{\vmbb{L}}+ \vmbb{L}\vmbb{Y}+[h_{\rm L},\vmbb{L}]\right)=0,\quad
\nonumber\\
&&\left(\ii \Gamma_{\rm R}(\hat{\cal D}_{\sigma^-}+ \hat{\cal D}_{\tau^-})\vmbb{L} - \widetilde{\vmbb{L}} - \vmbb{Y}\vmbb{L}+[h_{\rm R},\vmbb{L}]\right)\ket{0^+,0^+}=0,\quad
\label{eq:bc}
 \end{eqnarray}
 where boundary interactions with fields, $h_{\rm L/R}$, are defined in (\ref{eq:hLR}).
 Using explicit forms (\ref{eq:Sp}-\ref{Xk}) and in particular $\mm{X}\ket{0^+,0^+}=\mm{\un{X}}\ket{0^+,0^+}=\ket{0^+,0^+}$, each of Eqs.~(\ref{eq:bc})
  results in $\dim{\cal H}_{\rm p}\times \dim{\cal H}_{\rm p}=16$ equations for (bra/ket) vectors from ${\cal H}_{\rm a}\otimes{\cal H}_{\rm a}$, most of them
  trivially satisfied, whereas the non-trivial ones are equivalent to conditions (\ref{eq:pars}).
\end{proof}

We have presented infinite-dimensional irreducible representation of Lax operator and Sutherland-Shastry compatibility condition and shown how it can be employed to yield exact NESS of asymmetrically boundary driven Hubbard chain with arbitrary boundary chemical potentials. We shall discuss later in subsection~\ref{subsect:undeformed} how this Lax operator can be used to define new conserved operators of the Hubbard model which break particle-hole or spin-reversal symmetry, in full analogy with the situation in the $XXZ$ model. One could now also embark on computation of local observables in the Hubbard model. For example, linear dependence of the amplitudes (\ref{Xk}) on auxiliary state $k$ immediately yields, 
following the same asymptotic analysis as for computing the nonequilibrium partition function in the case of $XXX$ chain presented in subsect.~\ref{subsect:neZ}, a universal scaling of the spin/charge currents $J\sim n^{-2}$ and cosine-shaped
spin/charge density profile, as observed in numerical simulations of Ref.~\cite{PZ12}.

\section{Lai-Sutherland spin-1 chain}
\label{lai}

As the third characteristic example we outline (following Ref.~\cite{IP14}) exact solution of a boundary driven spin-1 Lai-Sutherland chain which, due to the three-state nature of the model with only a pair of states being dissipatively transformed at the boundary, allows for macroscopically degenerate NESS.

Consider a finite chain of $n$ sites with $3-$state local physical space ${\cal H}_{\rm p}=\CC^3$. 
Using the Weyl matrix basis $\{ e^{ij} = \ket{i}\!\bra{j}; i,j=1,2,3\}$ of $\End({\cal H}_{\rm p}) = \frak{gl}_3$, we define a full set of local generators of the full matrix algebra $\mathfrak{F}=\End({\cal H}_n)$, where ${\cal H}_n={\cal H}^{\otimes n}_{\rm p}$ denotes $3^n$-dimensional Hilbert space of $n$-site chain, as
\begin{equation}
e^{ij}_x = \one_3^{\otimes (x-1)} \otimes e^{ij} \otimes \one_3^{\otimes (n-x)},
\end{equation}
satisfying the Lie algebra relations
\begin{equation}
[e_{x}^{ij},e_{x'}^{kl}]=(\delta_{j\,k}e_{x}^{i\,l}-\delta_{i\,l}e_{x}^{kj})\delta_{x,x'}.
\end{equation}
The spin-$1$ Lai--Sutherland model \cite{S75} for a chain of $n$ sites is given by the Hamiltonian  
\begin{equation}
H=\sum_{x=1}^{n-1}h_{x,x+1}, \quad h_{x,x+1}= \vec{s}_{x}\cdot \vec{s}_{x+1}+(\vec{s}_{x}\cdot \vec{s}_{x+1})^{2} - \one,
\label{Lai--Sutherland}
\end{equation}
where $\vec{s}_x=(s^{1}_x,s^{2}_x,s^{3}_x)$, with 
\begin{equation}
\!\!\!\!\!\!\!\!\!\!\!\!\!\!
s_x^1 = \frac{1}{\sqrt{2}}(e_x^{12}+e_x^{21}+e_x^{23}+e_x^{32}),\; s^2_x = \frac{\ii}{\sqrt{2}}(e_x^{21}-e_x^{12}+e_x^{32}-e_x^{23}),\; s^3_x=e_x^{11}-e_x^{33},
\quad
\end{equation}
form independent spin-$1$ variables (local $s=1$ representations of $\mathfrak{su}_2$) satisfying
\begin{equation}
[s_{x}^{i},s_{x'}^{j}]=\ii\sum_k \epsilon_{ijk}s_{x}^{k}\delta_{x,x'}.
\end{equation}
Straightforward inspection shows that the local Hamiltonian $h_{x,x+1}$ -- the interaction -- is in fact just the permutation operator between neighbouring sites 
\begin{equation}
h_{x,x+1} = \sum_{i,j=1}^3 \one_3^{\otimes(x-1)}\otimes \ket{i,j}\bra{j,i}\otimes \one_3^{\otimes (n-x-1)}= \sum_{i,j=1}^3 e^{ij}_x e^{j\,i}_{x+1}.
\end{equation}
The local Hilbert state basis is therefore given by a triple of states $\ket{1}\equiv \ket{\uparrow},\ket{2}\equiv \ket{0},\ket{3}\equiv \ket{\downarrow}$,
which can be interpreted as three different particle components (or colours); respectively, as {\em spin-up} (red) particles,  {\em spin zero} or {\em holes} (green), and {\em spin-down} (blue) particles.

\begin{figure}
 \centering	
\vspace{-1mm}
\includegraphics[width=0.7\columnwidth]{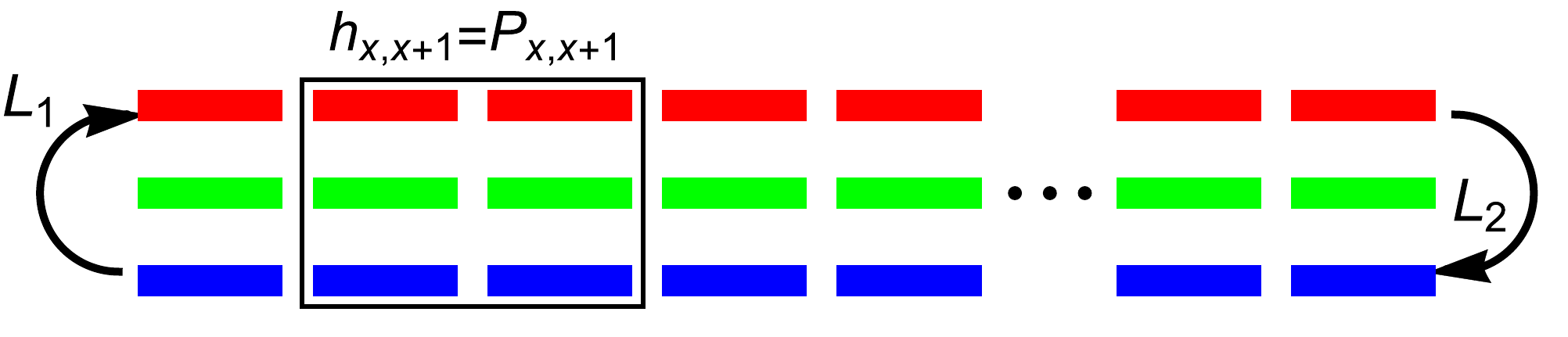}
\vspace{-1mm}
\caption{Schematic illustration of the degenerate boundary driven 3-color quantum chain, or spin-1 Lai-Sutherland chain.
The incoherent boundary jump processes transform just a pair of colour states (red and blue), while the green states remain invariant and hence the number of green particles is a constant of motion.}
\label{fig:LS}
\end{figure}

Since Lai--Sutherland chain is a multi-colour quantum model one may associate with it a skew-symmetric tensor of particle currents, with two-site density
\begin{equation}
J^{ij} = \ii (e^{ij}\otimes e^{j\,i} - e^{j\,i}\otimes e^{ij}),\quad J^{ij}_x = \one^{\otimes (x-1)}_3 \otimes J^{ij} \otimes \one^{\otimes(n-1-x)}_3 = -J^{j\,i}_x,
\quad
\end{equation}
which, by construction, satisfies the following continuity equation
\begin{equation}
\frac{\dd}{\dd t} (e^{i\,i}_x - e^{jj}_x) = \ii [H,e^{i\,i}_x - e^{jj}_x] = J^{ij}_{x-1,x} - J^{ij}_{x,x+1}.
\label{continuity_equation}
\end{equation}
$J^{ij}$ can be considered as a partial current of the particles of colour $i$ into particles of colour $j$.
The total current of component (colour) $i$,
\begin{equation}
J^i = \sum_{j=1}^3 J^{ij},
\label{totalJ}
\end{equation}
then fulfills the continuity equation
\begin{equation}
\frac{\dd}{\dd t} e^{i\,i}_x = J^i_{x-1,x}  - J^i_{x,x+1},
\end{equation}
where $e^{i\,i}_x$ can be considered as the operator of particle density of colour $i$.

We shall now open the Lai--Sutherland chain and couple it to the environment via Markovian processes which act only on local quantum spin spaces at the boundary, via the following Lindblad jump operators
\begin{equation}
\!\!\!\!\!\!\!\!\!\!\!\!\!\!\!\!\!\!\!\!\!\!\!
L_{1}= \sqrt{\varepsilon}e_{1}^{13} = \frac{\sqrt{\varepsilon}}{2} (s^+_1)^2,\quad L_{2}= \sqrt{\varepsilon}e_{n}^{31} = \frac{ \sqrt{\varepsilon}}{2} (s^-_n)^2,\;\;{\rm where}\;\;
s^\pm_x :=  s^1_x \pm \ii s^2_x.
\label{driving}
\end{equation} 
Two dissipation channels, interpreted as the left and right magnetization bath, perform the incoherent processes 
$\ket{\uparrow}\to\ket{\downarrow}$ and $\ket{\downarrow}\to\ket{\uparrow}$, respectively, with the rates $\varepsilon$. Both processes keep the hole state $\ket{0}$ unaffected.
Since also the bulk dynamics generated by $\LL_0$ conserves the number of particles of each colour, it follows that the whole Liouvillian dynamics (master equation) preserves the number of holes.
More precisely, defining the hole-number operator $N_0 \in \mathfrak{F}$ as
\begin{equation}
N_0 \ket{i_1,i_2,\ldots,i_n} = \left(\sum_{x=1}^n \delta_{i_x,2}\right) \ket{i_1,i_2,\ldots,i_n},
\end{equation}
we have that the set of all, Hamiltonian and jump operators, commute with $N_0$
\begin{equation}
[H,N_0] = 0,\quad [L_{1,2},N_0] = 0,
\end{equation}
which implies that $N_0$ generates a {\em strong} \cite{BP12} $U(1)$ symmetry of the Liouvillian flow (\ref{eq:lindeq}).
$N_0$ foliates the physical Hilbert space into $n+1$ orthogonal $N_0$-eigenspaces, ${\cal H}_n = \bigoplus_{\nu=0}^n {\cal H}^{(\nu)}_{n}$,
$N_0 {\cal{H}}^{(\nu)}_n= \nu {\cal H}^{(\nu)}_n$. The theorem A.1 of Ref.~\cite{BP12} then guarantees that the full Lindblad dynamics (\ref{eq:lindeq})
is closed on $\mathfrak{F}^{(\nu)}=\End({\cal H}^{(\nu)}_n)$, $\LL^{(\nu)} = \LL|_{\mathfrak{F}^{(\nu)}}$,
and that a {\em fixed point} $\rho^{(\nu)}_\infty = \lim_{t\to \infty} \exp(t \LL^{(\nu)}) \rho^{(\nu)}_0$  (NESS) exists for each symmetry subspace flow
\begin{equation}
\LL^{(\nu)} \rho^{(\nu)}_\infty = -\ii [H, \rho^{(\nu)}_{\infty}] + (\DD_{L_1}+\DD_{L_2})\rho^{(\nu)}_{\infty}= 0. \label{fixed_point}
\end{equation}
The theorem by Evans and Frigeiro \cite{evans,frigeiro} can then again be used to show uniqueness of NESS $\rho^{(\nu)}_\infty$ for each fixed $\nu$.
In the following we shall outline a simple algebraic procedure for actual explicit construction of density operators $\rho^{(\nu)}_\infty$.

\subsection{Degenerate matrix product solution}

Let $\PP^{(\nu)}\in\End(\mathfrak{F})$ be a  projector to $\mathfrak{F}^{(\nu)}$, orthogonal with respect to Hilbert-Schmidt inner product $(A|B)=\tr A^\dagger B$, with respect to which the Weyl basis $e^{i_1 j_1}\otimes e^{i_2 j_2}\cdots$ is orthonormal. We define a {\em grand} density matrix of NESS as a direct sum of non-trivial solutions of (\ref{fixed_point}) for all $\nu$,
\begin{equation}
\rho_\infty = \sum_{\nu=0}^{n} \rho^{(\nu)}_\infty,\quad{\rm with}\quad\rho^{(\nu)}_\infty= \PP^{(\nu)}\rho_\infty \neq 0,
\label{universal}
\end{equation}
being solution of the fixed point equation (\ref{fixed_point}) as well. The grand state $\rho_\infty$ shall be sought for in terms of Cholesky factorization
(in analogy to previous solutions of $XXZ$  and Hubbard models)
\begin{equation}
\rho_{\infty}(\varepsilon)=\Omega_n(\varepsilon)\Omega^{\dagger}_n(\varepsilon),
\label{factorization}
\end{equation}
where $\Omega_{n}(\varepsilon)\in \End({\cal H}_n)$ is some yet unknown operator which is represented by an upper triangular matrix in the computational basis $\ket{i_1,\ldots,i_n}$.
Introducing an auxiliary Hilbert space ${\cal H}_{\rm a}$ -- separable, but of infinite-dimensionality as will become clear later -- 
we define the monodromy operator
$\bb{M}(\varepsilon)\in \End{({\cal H}_n\otimes{\cal H}_\aa)}$ as a spatially-ordered product of some local Lax operators
$\bb{L}_{x}(\varepsilon)\in \End{({\cal H}_n\otimes{\cal H}_\aa)}$,
\begin{equation}
\bb{M}(\varepsilon) =\bb{L}_{1}(\varepsilon)\bb{L}_{2}(\varepsilon)\cdots \bb{L}_{n}(\varepsilon).
\end{equation}
Index free Lax operator is defined as $\bb{L}(\varepsilon) \in \End({\cal H}_{\rm p}\otimes{\cal H}_{\rm a})$ so that one writes 
$\bb{L}_{x}(\varepsilon)=\one_{3}^{\otimes (x-1)} \otimes \bb{L}(\varepsilon)\otimes \one_{3}^{\otimes (n-x)}$. 
Furthermore, we define the components of Lax matrix $\bb{L}^{ij}(\varepsilon) \in  \End({\cal H}_{\rm a})$, such that
\begin{equation}
\bb{L}_{x}(\varepsilon)=\sum_{i,j=1}^{3}e_{x}^{ij}\otimes \bb{L}^{ij}(\varepsilon),\qquad
\bb{L}(\varepsilon)=\sum_{i,j=1}^{3}e^{ij}\otimes \bb{L}^{ij}(\varepsilon).
\label{L_operator}
\end{equation}
We further assume existence of a special state $\ket{\rm vac}\in{\cal H}_{\rm a}$, such that Cholesky factor writes as the auxiliary expectation value of monodromy operator, or equivalently, as MPA
\begin{equation}
\!\!\!\!\!\!\!\!\!\!\!\!\!\!
\Omega_n = \bra{\rm vac}\bb{M}\ket{\rm vac} = \sum_{i_1,j_1\ldots i_n,j_n} \bra{\rm vac}\bb{L}^{i_1 j_1}\cdots \bb{L}^{i_n j_n}\ket{\rm vac} e^{i_1 j_1}\otimes \cdots \otimes e^{i_n j_n}.
\label{MPO}
\end{equation}
Fixing an arbitrary, fixed orthonormal basis $\{\ket{\psi_k}\}$ of ${\cal H}_{\rm a}$ we define the conjugate Lax matrices $\ol{\bb{L}}(\varepsilon)$ by 
$\bra{\psi_k}\ol{\bb{L}}^{ij}(\varepsilon)\ket{\psi_l} := \ol{\bra{\psi_k}\bb{L}^{ij}(\varepsilon)\ket{\psi_l}}$.
For notational convenience we denote the second copy of auxiliary space carrying conjugate representation of $\ol{\bb{L}}^{ij}$ as $\ol{{\cal H}}_{\rm a}$.
One can then write MPA for NESS density operator $\rho_{\infty}$ directly, by introducing {\em two-leg Lax matrices} $\vmbb{L}^{ij}(\varepsilon)\in\End({\cal H}_{\rm a}\otimes \ol{{\cal H}}_{\rm a})$,
and $\vmbb{L}_x(\varepsilon) \in \End({\cal H}_n\otimes{\cal H}_\aa\otimes \ol{{\cal H}}_{\rm a})$ as
\begin{equation}
\vmbb{L}^{ij}(\varepsilon) = \sum_{k} \bb{L}^{ik}(\varepsilon)\otimes \ol{\bb{L}}^{jk}(\varepsilon),\quad \vmbb{L}_x(\varepsilon) = \sum_{i,j} e^{ij}_x \otimes \vmbb{L}^{ij}(\varepsilon),
\label{double_L}
\end{equation}
namely
\begin{equation}
\rho_{\infty}(\varepsilon)=\llvac\vmbb{M}(\varepsilon)\rrvac.
\label{rhoMM}
\end{equation}
Note the transposition in the quantum space of the conjugated factor of \eref{double_L}.
Here a two-leg monodromy operator 
\begin{equation}
\vmbb{M}(\varepsilon) = \vmbb{L}_1(\varepsilon)\cdots  \vmbb{L}_n(\varepsilon) \in \End({\cal H}_n\otimes{\cal H}_{\rm a}\otimes \ol{{\cal H}}_{\rm a}),
\label{MM}
\end{equation}
and a product of a pair of vacua $\llvac=\lvac \otimes \lvac$, $\rrvac=\rvac \otimes \rvac$ have been introduced, so that (\ref{rhoMM}) is merely a formal rewriting of (\ref{factorization}).
These definitions become particularly handy when we consider evaluation of expectation values of local observables with respect to NESS $\rho_{\infty}(\varepsilon)$.

Let $\eta:=\ii \varepsilon$ be a complex-rotated coupling parameter and let us (for convenience) relabel the quantum space matrix elements of the $\bb{L}$-operator as
\begin{eqnarray}
\label{relabeling}
\bb{L}=\pmatrix{
\bb{l}^{\ua} & \bb{t}^{+} & \bb{v}^{+} \cr
\bb{t}^{-} & \bb{l}^{0} & \bb{u}^{+} \cr
\bb{v}^{-} & \bb{u}^{-} & \bb{l}^{\da}
}.
\end{eqnarray}
Explicit MPA structure of the grand density operator of NESS is then established by the following result \cite{IP14}:
\begin{thm} \cite{IP14} (i) Suppose that 9 matrix elements $\{\bb{L}^{ij}\}$ generate the Lie algebra $\mathfrak{g}$ defined by commutation relations,
\begin{eqnarray}
\label{algebra}
&& [\bb{u}^{+},\bb{t}^{\pm}]=[\bb{u}^{-},\bb{t}^{\pm}]=[\bb{u}^{\pm},\bb{v}^{\pm}]=[\bb{t}^{\pm},\bb{v}^{\pm}]=0,\cr
&& [\bb{l}^{\ua},\bb{u}^{\pm}]=[\bb{l}^{\da},\bb{t}^{\pm}]=[\bb{l}^{\ua},\bb{l}^{\da}]=0,\cr
&& [\bb{l}^{\ua},\bb{t}^{\pm}]=\mp \eta \bb{t}^{\pm},\qquad [\bb{l}^{\da},\bb{u}^{\pm}]=\mp \eta \bb{u}^{\pm},\cr
&& [\bb{u}^{\pm},\bb{v}^{\mp}]=\pm \eta \bb{t}^{\mp},\qquad [\bb{t}^{\pm},\bb{v}^{\mp}]=\pm \eta \bb{u}^{\mp}, \cr
&&  [\bb{l}^{\ua},\bb{v}^{\pm}]=[\bb{l}^{\da},\bb{v}^{\pm}]=\mp \eta \bb{v}^{\pm},\quad [\bb{v}^{+},\bb{v}^{-}]=\eta(\bb{l}^{\ua}+\bb{l}^{\da}),\cr
&& [\bb{t}^{+},\bb{t}^{-}]=[\bb{u}^{+},\bb{u}^{-}]=\eta \bb{l}^{0},\cr
&& [\bb{l}^{\ua,\da},\bb{l}^{0}] = [\bb{u}^\pm,\bb{l}^{0}] = [\bb{v}^\pm,\bb{l}^{0}] = [\bb{t}^\pm,\bb{l}^{0}] = 0,
\end{eqnarray}
with a representation over the Hilbert space ${\cal H}_{\rm a}$ satisfying the following conditions
\begin{eqnarray}
\label{boundary_conditions}
&\bb{l}^{\ua}\rvac = \bb{l}^{0}\rvac = \bb{l}^{\da}\rvac = \rvac,\cr
&\lvac \bb{l}^{\ua} = \lvac \bb{l}^{0} = \lvac \bb{l}^{\da} = \lvac,\cr
&\bb{t}^+ \rvac = \bb{u}^+ \rvac = \bb{v}^+ \rvac = 0,\cr
&\lvac \bb{t}^- = \lvac \bb{u}^- = \lvac\bb{v}^- = 0.
\end{eqnarray}
Then, the grand state solution (\ref{universal}) to NESS fixed point condition \eref{fixed_point} is given via Cholesky factorization \eref{factorization} with explicit MPA \eref{MPO} for $\Omega_n(\varepsilon)$ with $\eta=\ii\varepsilon$.

(ii) A possible irreducible explicit representation of Lie algebra $\mathfrak{g}$ \eref{algebra} satisfying \eref{boundary_conditions} is given as 
\begin{eqnarray}
&& \bb{t}^{+}=\bb{b}_{\ua},\quad \bb{t}^{-}=\eta \bb{b}^\dagger_{\ua}, \cr
&& \bb{u}^{+}=\eta \bb{b}_{\da},\quad \bb{u}^{-}=\bb{b}^\dagger_{\da},\cr
&& \bb{v}^{+}=\eta(\bb{b}_{\ua}\bb{b}_{\da}+\bb{s}^{+}),\quad \bb{v}^{-}=\eta(\bb{b}^\dagger_{\ua}\bb{b}^\dagger_{\da}-\bb{s}^{-}),\cr
&& \bb{l}^{\ua,\da}=\eta\left(\bb{b}^\dagger_{\ua,\da}\bb{b}_{\ua,\da}+\frac{1}{2}-\bb{s}^{\rm z}\right),\quad \bb{l}^{0}=\one,
\label{algebra_generators}
\end{eqnarray}
in terms of three auxiliary degrees of freedom with a three dimensional lattice $\{\ket{j,k,l},\; j,k,l\in\ZZ^+\}$ forming a basis of ${\cal H}_{\rm a}$, namely, two bosonic modes $\bb{b}_{\ua,\da}$
\begin{eqnarray}
\bb{b}^\dagger_{\ua}\ket{j,k,l}= \sqrt{j+1}\ket{j+1,k,l},\quad \bb{b}_{\ua}\ket{j,k,l}=\sqrt{j}\ket{j-1,k,l},\cr
\bb{b}^\dagger_{\da}\ket{j,k,l}= \sqrt{k+1}\ket{j,k+1,l},\quad \bb{b}_{\da}\ket{j,k,l}=\sqrt{k}\ket{j,k-1,l},
\label{bosons}
\end{eqnarray}
and a complex spin (Verma module of $\mathfrak{sl}_2$)
\begin{eqnarray}
\bb{s}^{+} \ket{j,k,l} &=& l \ket{j,k,l-1},\cr 
\bb{s}^-\ket{j,k,l} &=& (2p-l) \ket{j,k,l+1},\cr 
\bb{s}^{\rm z}\ket{j,k,l} &=& (p-l)\ket{j,k,l}.
\label{verma}
\end{eqnarray}
with $\rvac = \ket{0,0,0}$ being the highest-weight-state.
The complex spin parameter $p$ should be linked to dissipation parameter via
\begin{equation}
p = \frac{1}{2} - \frac{1}{\eta} = \frac{1}{2} + \frac{\ii}{\varepsilon}.
\label{pfix}
\end{equation}
\end{thm}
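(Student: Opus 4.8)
The plan is to mirror the IDO/Lax strategy already established for the $XXZ$ (Lemma 1) and Hubbard (Lemmas 2--3) cases; the only genuinely new input is the identification of the correct local operator divergence (Sutherland-type) relation for the permutation Hamiltonian (\ref{Lai--Sutherland}). First I would show that the commutation relations (\ref{algebra}) are precisely the component form of the single operator identity
\begin{equation}
[h_{x,x+1},\bb{L}_x\bb{L}_{x+1}] = \widetilde{\bb{L}}_x\bb{L}_{x+1} - \bb{L}_x\widetilde{\bb{L}}_{x+1},\qquad \widetilde{\bb{L}} := -\eta\, s^3\otimes\one_{{\cal H}_{\rm a}},
\end{equation}
where $s^3 = e^{11}-e^{33}$. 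Indeed, since $h=\sum_{i,j}e^{ij}\otimes e^{ji}$ is the permutation, a one-line computation gives $[h,\bb{L}_1\bb{L}_2]=\sum_{i,j,l,n}e^{il}\otimes e^{jn}\otimes[\bb{L}^{jl},\bb{L}^{in}]$; matching coefficients of $e^{il}\otimes e^{jn}$ against the right-hand side reduces the identity to $[\bb{L}^{jl},\bb{L}^{in}]=\widetilde{\bb{L}}^{il}\bb{L}^{jn}-\bb{L}^{il}\widetilde{\bb{L}}^{jn}$ for every $(i,j,l,n)$, and with $\widetilde{\bb{L}}^{pq}=-\eta\,\delta_{pq}(\delta_{p1}-\delta_{p3})\one$ these reproduce term by term the relations listed in (\ref{algebra}). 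This is the exact analog of the $XXX$/$XXZ$ Sutherland relation (\ref{eq:LOD}), now with the magnetization $s^3$ playing the role of the defect $\sigma^\z$.

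Given this divergence relation, I would run the telescopic-series argument verbatim: multiplying by $\bb{L}_1\cdots\bb{L}_{x-1}$ on the left and $\bb{L}_{x+2}\cdots\bb{L}_n$ on the right and summing over $x$ yields $[H,\bb{M}]=\widetilde{\bb{L}}_1\bb{L}_2\cdots\bb{L}_n - \bb{L}_1\cdots\bb{L}_{n-1}\widetilde{\bb{L}}_n$. Because $\widetilde{\bb{L}}$ acts trivially on ${\cal H}_{\rm a}$, sandwiching between $\lvac$ and $\rvac$ collapses the end factors and produces the defining relation
\begin{equation}
[H,\Omega_n] = -\ii\varepsilon\bigl(s^3\otimes\Omega_{n-1} - \Omega_{n-1}\otimes s^3\bigr),
\end{equation}
the Lai--Sutherland counterpart of (\ref{eq:def}). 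The diagonal vacuum conditions in (\ref{boundary_conditions}), acting from the left and the right, together with the annihilation conditions there, furthermore make $\lvac\bb{L}_1$ upper-triangular and $\bb{L}_n\rvac$ lower-triangular in the physical index (with unit diagonal), giving the two boundary factorizations that play the role of (\ref{eq:bcOm}). At this point the proof of Lemma 1 transcribes directly: expanding $\ii\varepsilon^{-1}[H,\Omega_n\Omega_n^\dagger]$ by the Leibniz rule, substituting the boundary factorizations, and comparing with the explicit local dissipator actions $\DD_{e^{13}}$ and $\DD_{e^{31}}$ on the six boundary operators $e^{ij}$ establishes $\LL(\Omega_n\Omega_n^\dagger)=0$. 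The only extra bookkeeping relative to the spin-$1/2$ case is that the three-state driving generates a slightly longer list of elementary images $\DD_{e^{13}}(e^{ij})$, which I would tabulate once.

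It then remains to descend from the grand state to the graded sectors: since the driving $e^{13},e^{31}$ only interchanges colours $1$ and $3$, the amplitude operator $\Omega_n$ commutes with the hole-number $N_0$, hence $[N_0,R_\infty]=0$ and each projection $\rho^{(\nu)}_\infty=\PP^{(\nu)}R_\infty$ is separately annihilated by $\LL$ by virtue of the strong $U(1)$ symmetry recorded before the theorem \cite{BP12}; this yields (\ref{universal}). For part (ii) I would verify by direct substitution that the bosonic/Verma realization (\ref{algebra_generators}) satisfies (\ref{algebra}) --- using $[\bb{b},\bb{b}^\dagger]=1$, the fact that the $\ua,\da$ oscillators and the spin act on different tensor factors (so all boson--spin cross-commutators vanish), and $[\bb{s}^{+},\bb{s}^{-}]=2\bb{s}^{\rm z}$ for the Verma module --- the only nontrivial check being $[\bb{v}^{+},\bb{v}^{-}]=\eta(\bb{l}^{\ua}+\bb{l}^{\da})$, which falls out once the oscillator and spin contributions are combined. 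Evaluating the generators on $\rvac=\ket{0,0,0}$ produces all annihilation conditions in (\ref{boundary_conditions}) automatically, while the diagonal condition $\bb{l}^{\ua}\rvac=\eta(\tfrac12-p)\rvac=\rvac$ forces precisely $p=\tfrac12-1/\eta$, i.e. (\ref{pfix}).

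The main obstacle is the first step: guessing and verifying the derivative Lax operator $\widetilde{\bb{L}}$ so that the quadratic algebra (\ref{algebra}) is recognised as a clean local operator divergence. Everything downstream --- the telescoping, the Cholesky/Lemma-1 matching of boundary terms with the dissipators, the $N_0$-grading, and the explicit representation check --- is then routine, if calculation-heavy.
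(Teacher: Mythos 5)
Your proposal is correct, and its backbone coincides with the paper's own proof: the recognition that the algebra (\ref{algebra}) is exactly the component form of a Sutherland relation with ``derivative'' Lax operator $-\eta\,s^3\otimes\one_{\rm a}$ is precisely the paper's Eqs.~(\ref{Sutherland},\ref{B}), and the telescoping to the defining relation (\ref{defining_relation}) is identical. Where you genuinely deviate is the boundary step. The paper expands $[H,\Omega_n\Omega_n^\dagger]$ and reduces the fixed-point condition to the two-leg boundary equations (\ref{boundary_system}) on ${\cal H}_{\rm a}\otimes\ol{{\cal H}}_{\rm a}$, which it then discharges using the representation-specific data (\ref{Lax_boundary}); its proof of part (i) is thus intertwined with the explicit construction of part (ii). You instead observe that the abstract conditions (\ref{boundary_conditions}) alone force $\lvac\bb{L}$ to be unit-diagonal upper-triangular and $\bb{L}\rvac$ unit-diagonal lower-triangular in the physical index, and then transcribe the Cholesky matching of Lemma~1; this indeed goes through, since $\DD_{e^{13}}(\one_3)=2s^3$, $\DD_{e^{13}}(e^{13})=-e^{13}$, $\DD_{e^{13}}(e^{23})=-e^{23}$, $\DD_{e^{13}}(e^{31})=-e^{31}$, $\DD_{e^{13}}(e^{32})=-e^{32}$, while the images of $e^{11},e^{12},e^{21},e^{22}$ vanish, so the six boundary terms generated by the Leibniz rule match the dissipator terms exactly (mirror-symmetrically for $\DD_{e^{31}}$ at the right end). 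What this buys is a cleaner logical structure, closer to the statement of the theorem: part (i) follows from the abstract hypotheses alone, and part (ii) becomes an independent verification, whose only nontrivial checks you correctly isolate ($[\bb{v}^+,\bb{v}^-]=\eta(\bb{l}^\ua+\bb{l}^\da)$ and the fixing of $p$ by $\bb{l}^{\ua}\rvac=\eta(\frac12-p)\rvac=\rvac$).

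One step you should tighten: the claim $[\Omega_n,N_0]=0$, which you need so that the MPA state has no off-diagonal blocks with respect to the $N_0$-grading and therefore equals the sum (\ref{universal}) of sector states, does not follow merely from ``the driving interchanges colours $1$ and $3$'', since $\Omega_n$ is defined by the auxiliary representation and not by the dissipators. It does follow from a grading argument: the algebra and boundary conditions give $[\bb{l}^\ua-\bb{l}^\da,\bb{L}^{ij}]=\eta(\delta_{i,2}-\delta_{j,2})\bb{L}^{ij}$ together with $(\bb{l}^\ua-\bb{l}^\da)\rvac=0$ and $\lvac(\bb{l}^\ua-\bb{l}^\da)=0$, so every nonvanishing amplitude in (\ref{MPO}) must carry equally many indices $i_x=2$ and $j_x=2$. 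With that supplied (and noting that each $\Omega^{(\nu)}_n$ has unit diagonal by the first two lines of (\ref{boundary_conditions}), hence $\rho^{(\nu)}_\infty\neq 0$), your argument is complete.
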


\begin{proof}
The proof is based on verifying that the Lie algebra $\mathfrak{g}$, given by \eref{algebra}, can be equivalently defined by means of an appropriate Sutherland relation  
\begin{equation}
[h_{x,x+1},\bb{L}_{x}\bb{L}_{x+1}]=B_{x}\bb{L}_{x+1} -\bb{L}_{x}B_{x+1},
\label{Sutherland}
\end{equation}
with the-so-called {\em boundary operator} $B_{x}(\varepsilon) \in \End{({\cal H}_n\otimes{\cal H}_\aa)}$ -- operating non-trivially only in the local quantum space
\begin{equation}
B_{x}=\eta \left(e_{x}^{33} \otimes \one_{\rm a} - e_{x}^{11}\otimes \one_{\rm a}\right)= b_x \otimes \one_{\rm a},  
\label{B}
\end{equation}
where $b_x(\varepsilon) = -\ii\varepsilon s^{3}_x \in \mathfrak{F}$.
Identification of \eref{algebra} with the Sutherland relation \eref{Sutherland} is straightforward, based solely on the permutation action of Hamiltonian density 
\begin{equation}
[h_{x,x+1},e^{ij}_x e^{kl}_{x+1}] = e^{kj}_x e^{i\,l}_{x+1} - e^{i\,l}_x e^{kj}_{x+1}.
\end{equation}
Multiplying the Sutherland relation by a string $\bb{L}_1\cdots \bb{L}_{x-1}$ from the left and a string $\bb{L}_{x+2}\cdots \bb{L}_n$ from the right, summing over $x$
and taking vacuum expectation value yields the defining relation for the amplitude operator
\begin{equation}
[H,\Omega_{n}]=-\ii\varepsilon\left(s^{3}\otimes \Omega_{n-1}-\Omega_{n-1}\otimes s^{3}\right),
\label{defining_relation}
\end{equation}
where $s^3 = e^{11}-e^{33}$. Consequently, by expanding the unitary part of Liouvillian $\LL_{0}$,
\begin{equation}
-\LL_{0}(\rho_{\infty})\equiv \ii[H,\rho_{\infty}]=\ii[H,\Omega_{n}]S^{\dagger}_{n}-\ii \Omega_{n}[H,\Omega_{n}]^{\dagger},
\end{equation}
in conjunction with \eref{defining_relation}, and employing the definition \eref{double_L}, the steady state condition \eref{fixed_point} yields
a decoupled system of \textit{boundary equations} 
\begin{eqnarray}
& \llvac\left( \DD_{A_1}(\vmbb{L}_{1}) - \ii(\vmbb{B}^{(1)}_{1}-\vmbb{B}^{(2)}_{1})\right) = 0,\nonumber\\
& \left(\DD_{A_2}(\vmbb{L}_{n}) + \ii(\vmbb{B}^{(1)}_{n}-\vmbb{B}^{(2)}_{n})\right)\rrvac = 0,
\label{boundary_system}
\end{eqnarray}
where {\em two-leg boundary operators} $\vmbb{B}^{(1)}_{x},\vmbb{B}^{(2)}_{x}\in \End({\cal H}_n\otimes{\cal H}_\aa\otimes \ol{{\cal H}}_{\rm a})$, reading 
\begin{equation}
\vmbb{B}^{(1)}_{x}= \sum_{i,j=1}^3 b_x e^{ij}_x \otimes \one_{\rm a}\otimes \ol{\bb{L}}^{j\,i},\qquad 
\vmbb{B}^{(2)}_{x}= \sum_{i,j=1}^3 e^{ij}_x \ol{b}_x \otimes \bb{L}^{ij} \otimes \one_{\rm a},
\end{equation}
have been defined. Note that, due to \eref{B}, $\ol{b}_x = \ii \varepsilon s^3_x = -b_x$ for $\varepsilon\in\RaR$.

The last two lines of  \eref{algebra} indicate that pairs of auxiliary operators $(\bb{t}^{+},\bb{t}^{-})$ and $(\bb{u}^{+},\bb{u}^{-})$ span the Weyl-Heisenberg algebra.
In conjunction with the highest weight conditions \eref{boundary_conditions} this fixes the representation of  $(\bb{t}^{+},\bb{t}^{-})$ and $(\bb{u}^{+},\bb{u}^{-})$ to be that of a
 Fock space of two oscillator modes (bosons), specified by creation/annihilation operators,
$[\bb{b}_{\sigma},\bb{b}^{\dagger}_{\sigma'}]=\delta_{\sigma,\sigma'}$, $[\bb{b}_{\sigma},\bb{b}_{\sigma'}]=0$, $\sigma,\sigma'\in \{\ua,\da\}$, suggesting that the auxiliary space ${\cal H}_{\rm a}$ is perhaps just a two-mode boson Fock space.
While realization for all the other generators consistent with the bulk algebra $\mathfrak{g}$ is not difficult to construct (e.g. $\bb{v}^\pm$, $\bb{l}^\ua+\bb{l}^\da$ can be just the Schwinger boson representation of $\mathfrak{su}_2$ -- see 5th line of \eref{algebra}),
it turns out not to be consistent with the boundary conditions \eref{boundary_conditions}.Therefore the auxiliary space ${\cal H}_{\rm a}$ has to contain (at least) one additional degree of freedom.

Ultimately, in order to fulfil \eref{boundary_system}, a straightforward calculation shows that it is enough to add a Verma module ${\cal V}_p$ of complex spin representation \eref{verma} of $\mathfrak{sl}_2$ and consider a triple-product space ${\cal H}_{\rm a}\cong {\cal B}\otimes {\cal B}\otimes {\cal V}_p={\rm lsp}\{ \ket{j,k,l}; j,k,l\in\ZZ^+\}$,
and find a representation of the algebra \eref{algebra} which is compliant with conditions
\begin{eqnarray}
\label{Lax_boundary}
\bb{L}\rvac =
\pmatrix{
\rvac & 0 & 0 \cr
\eta \ket{1,0,0} & \rvac & 0 \cr
\eta(\ket{1,1,0}-\ket{0,0,1})+2\ket{0,0,1} & \ket{0,1,0} & \rvac},\\
\lvac \bb{L} =
\pmatrix{
\lvac & \bra{1,0,0} & \eta(\bra{1,1,0}+\bra{0,0,1}) \cr
0 & \lvac & \eta \bra{0,1,0} \cr
0 & 0 & \lvac},
\end{eqnarray}
with vacuum being given by the ground state $\rvac \equiv \ket{0,0,0}$.
These requirements are all satisfied by choosing representation (\ref{algebra_generators},\ref{bosons},\ref{verma}) with $p$ being fixed (\ref{pfix}) as
required by the conditions in the first two lines of (\ref{boundary_conditions}). The last two lines of (\ref{boundary_conditions}) are satisfied due to highest-weight-property of $\rvac$. As such a representation is clearly irreducible, this concludes the proof .\end{proof} 

\subsection{Grand canonical NESS and observables}
 
The formulae (\ref{factorization},\ref{MPO},\ref{relabeling}-\ref{pfix}) yield explicit construction of a many-body density matrix of a family of degenerate NESSes $\rho^{(\nu)}_\infty = \PP^{(\nu)}\rho_\infty$ for any number of holes $\nu\in\{0,1\ldots n\}$.
The computational complexity of obtaining any local information about the state $\rho_\infty$, say to compute its matrix elements
of the type $\bra{i_1,\ldots,i_n}\rho_\infty\ket{j_1,\ldots,j_n}$ or local observables, is at most {\em polynomial} in $n$.
Since the eigenspaces ${\cal H}^{(\nu)}$ of number-of-holes operator $N_0$ are orthogonal, one can also split decompose the Cholesky factors
$\Omega_n^{(\nu)}(\varepsilon) = \PP^{(\nu)} \Omega_n(\varepsilon)$ 
\begin{equation}
\rho^{(\nu)}_\infty(\varepsilon) = \Omega^{(\nu)}_n(\varepsilon)\,\Omega^{(\nu)\dagger}_n(\varepsilon),
\end{equation}
since $\Omega^{(\nu)}\Omega^{(\nu')\dagger}= 0\; {\rm if}\; \nu\neq\nu'$.
Projected Cholesky factor satisfies a {\em projected} defining relation (\ref{defining_relation})
\begin{equation}
[H,\Omega^{(\nu)}_{n}]=-\ii\varepsilon \left(s^{3}\otimes \Omega^{(\nu)}_{n-1}-\Omega^{(\nu)}_{n-1}\otimes s^{3}\right),
\end{equation}
and can be expressed in terms of a constrained or {\em microcanonical} MPA
\begin{equation}
\!\!\!\!\!\!\!\!\!\!\Omega^{(\nu)}_n  = \sum_{i_1,j_1\ldots i_n,j_n} \delta_{\left(\sum_x \delta_{i_x,2}\right),\nu} \bra{\rm vac}\bb{L}^{i_1 j_1}\cdots \bb{L}^{i_n j_n}\ket{\rm vac} e^{i_1 j_1}\otimes \cdots \otimes e^{i_n j_n}.
\label{mMPO}
\end{equation}
Note that since $[\Omega^{(\nu)},N_0]=0$, the Kronecker-$\delta$ constraint can just as well be replaced by $\delta_{\left(\sum_x \delta_{j_x,2}\right),\nu}$ as only operators 
$e^{i_1 j_1}\otimes \cdots \otimes e^{i_n j_n}$ for which $\sum_x\delta_{i_x,2}=\sum_x\delta_{j_x,2}$ appear in MPA (\ref{MPO}).

We note two limiting cases of our NESS solution. For zero hole sector $\nu=0$ one obtains exactly the fully polarized boundary driven isotropic ($XXX$) Heisenberg spin-$1/2$ chain and reproduces the solution reported in subsect.~\ref{subsect:XXX}. 
The other extreme case ($\nu=n$) is the so-called \textit{dark state}, i.e. a pure state  $\rho^{(\nu=n)}_{\infty}=(e^{22})^{\otimes n} = \ket{2,2\ldots 2}\bra{2,2,\ldots 2}$ which
is unaffected by the dissipation, i.e. it simultaneously annihilated by $\LL_{0}$ and $\DD$, $\LL_{0}\rho^{(n)}_{\infty}=\DD\rho^{(n)}_{\infty}=0$.

Any convex mixture of states $\rho_\infty = \sum_\nu c_\nu \rho^{(\nu)}_\infty$, $c_\nu\in\RaR^+$, is a valid NESS density operator as well, which factorizes (\ref{factorization}) with a Cholesky factor $\Omega_n = \sum_\nu \sqrt{c_\nu} \Omega^{(\nu)}_n$.
Microcanonical constraint in (\ref{mMPO}) seems cumbersome as it prevents facilitating transfer matrices for computation of local observables.
There seems to be a particularly attractive option which overcomes this problem. Namely, one may define a {\em grand canonical nonequilbrium steady state} (gcNESS) ensemble by taking a {\em hole chemical potential} $\mu$ with $c_\nu = \exp(\mu \nu)$:
\begin{equation}
\rho_{\infty}(\varepsilon,\mu)=\sum_{\nu=0}^{n}\exp{(\mu \nu)}\,\rho^{(\nu)}_{\infty}(\varepsilon).
\label{grand_canonical_ens}
\end{equation}
Note that the grand state corresponds to gcNESS with zero chemical potential $\rho_\infty(\varepsilon) = \rho_\infty(\varepsilon,\mu=0)$.
Clearly, the addition theorem for exponential function erases the constraint in MPO expansions:
\begin{eqnarray}
\!\!\!\!\!\!\!\!\!\!\!\!\!\!\!\!\!\!\!\!\!\!\!\!
\Omega_n(\varepsilon,\mu) = \sum_{i_1,j_1\ldots i_n,j_n} \bra{\rm vac}\bb{L}^{i_1 j_1}(\varepsilon,\mu)\cdots \bb{L}^{i_n j_n}(\varepsilon,\mu)\ket{\rm vac} e^{i_1 j_1}\otimes \cdots \otimes e^{i_n j_n},\\
\!\!\!\!\!\!\!\!\!\!\!\!\!\!\!\!\!\!\!\!\!\!\!\!
\rho_{\infty}(\varepsilon,\mu) = \sum_{i_1,j_1\ldots i_n,j_n}\llvac\vmbb{L}^{i_1 j_1}(\varepsilon,\mu)\cdots\vmbb{L}^{i_n j_n}(\varepsilon,\mu)\rrvac e^{i_1 j_1}\otimes \cdots \otimes e^{i_n j_n},
\end{eqnarray}
where the chemical potential only modifies the components of the Lax operators as
\begin{equation}
\!\!\!\!\!\!\!\!\!\!\!\!
\bb{L}^{ij}(\varepsilon,\mu) = \exp\left(\frac{\mu}{2}\delta_{i,2}\right)\bb{L}^{ij}(\varepsilon),\quad
\vmbb{L}^{ij}(\varepsilon,\mu) = \exp{\left(\frac{\mu}{2}(\delta_{i,2}+\delta_{j,2})\right)}\vmbb{L}^{ij}(\varepsilon).
\end{equation}
Moreover, introducing a {\em transfer operator}
\begin{equation}
\vmbb{T}(\varepsilon,\mu) = \sum_{i}\vmbb{L}^{i\,i}(\varepsilon,\mu)=\sum_{i,j}\bb{L}^{ij}(\varepsilon,\mu)\otimes \ol{\bb{L}}^{ij}(\varepsilon,\mu),
\label{TVO}
\end{equation}
we define the \textit{grand canonical nonequilibrium partition function} and express it via the transfer matrix method
\begin{equation}
{\cal Z}_{n}(\varepsilon,\mu) = \tr\left(\rho_{\infty}(\varepsilon,\mu)\right) = \llvac \left(\vmbb{T}(\varepsilon,\mu)\right)^n \rrvac.
\label{NPF}
\end{equation}
The hole chemical potential $\mu$ can be connected to the ensemble averaged filling factor (doping) $r$ via logarithmic derivative of the partition function
\begin{equation}
r := \frac{\ave{\nu}}{n} = \frac{\sum_{\nu=0}^n \nu \exp(\nu \mu) \tr\rho^{(\nu)}_\infty}{n\sum_{\nu=0}^n \exp(\nu \mu) \tr\rho^{(\nu)}_\infty} = n^{-1}\partial_{\mu}\log {\cal Z}_{n}(\varepsilon,\mu).
\label{q}
\end{equation}
As usual, we expect the fluctuations $(\delta r)^2 = \ave{\frac{\nu}{n}}^2 - r^2$ to be thermodynamically small.

Expectation values of  general (local) observables can again  be extracted by facilitating the auxiliary vertex operators.
Let $X_{[x,y]}=\one_3^{\otimes (x-1)}\otimes X\otimes \one_3^{\otimes (n-y)}$ be a generic local observable supported on a sublattice between sites $x$ and $y$. Then, a formal expression
\begin{equation}
\expect{X_{[x,y]}}={\cal Z}_{n}^{-1}(\varepsilon,\mu)\;\tr{(X_{[x,y]}\rho_{\infty}(\varepsilon,\mu))},
\end{equation}
can be calculated from the MPA of $\rho_{\infty}(\varepsilon,\mu)$ by \textit{tracing out} the physical space ${\cal H}_n$ and associating
to each observable $X_{[x,y]}$ a corresponding vertex operator via a mapping $\Lambda_{\ell}:{\rm End}({\cal H}^{\otimes \ell}_{\rm 1}) \rightarrow {\rm End}({\cal H}_{\rm a}\otimes \ol{{\cal H}}_{\rm a})$, where $\ell = y-x + 1$,
using the prescription 
\begin{equation}
\Lambda_{\ell}(X)=\vmbb{X}:=\sum_{i_1,j_1\ldots i_\ell,j_\ell} \tr\left((e^{i_1 j_1}\otimes\cdots\otimes e^{i_\ell j_\ell})X\right) \vmbb{L}^{i_1 j_1}\cdots \vmbb{L}^{i_\ell j_\ell}.  
\end{equation}
For a complementary part of the lattice, i.e. where $X_{[x,y]}$ operates trivially, one has the transfer vertex operator $\vmbb{T}=\Lambda_{1}(\one_{3})$, eq. \eref{TVO},
so the final expectation value reads
\begin{equation}
\expect{X_{[x,y]}} = {\cal Z}_{n}^{-1}\llvac \vmbb{T}^{x-1}\;\vmbb{X}\;\vmbb{T}^{n-y}\rrvac.
\end{equation}
For example, for on-site observables we have auxiliary vertex operators $\Lambda_{1}(e^{ij}) = \vmbb{L}^{j\,i}$, e.g. for magnetization density 
$\Lambda_{1}(s^3) = \vmbb{L}^{11}-\vmbb{L}^{33}$.

As for two point observables, we consider an interesting example of the current density tensor
\begin{equation}
\Lambda_2(J^{ij})=\vmbb{J}^{ij}= \ii\left(\vmbb{L}^{j\,i}\vmbb{L}^{ij}\ - \vmbb{L}^{ij}\vmbb{L}^{j\,i}\right) \label{current_vertex}
\end{equation}
Stationarity (time-independence) of NESS and continuity equation \eref{continuity_equation} imply spatial-independence of current expectation values. In auxiliary transfer matrix formulation \eref{TVO} this implies commutation of transfer vertex operator with current vertex operators when projected onto the subspace of states created upon action of $\vmbb{T}$ on the vacua, namely
\begin{equation}
\bbra{\varphi^{\rm{L}}_{k}}[\vmbb{T},\vmbb{J}^{ij}]\kket{\varphi^{\rm{R}}_{l}}=0,\quad
\bbra{\varphi^{\rm{L}}_{k}}:=\llvac \vmbb{T}^{k},\quad \kket{\varphi^{\rm{R}}_{k}}:=\vmbb{T}^{k}\rrvac.
\end{equation}
Additionally, using the representation given in Theorem 2 and highest weight nature of the vacuum state, one can with some effort express the expectation values of total current operators \eref{totalJ} in terms of the nonequilibrium partition function (\ref{NPF})
\begin{equation}
\expect{J^{1}}=2\varepsilon\frac{{\cal Z}_{n-1}}{{\cal Z}_{n}},\quad\expect{J^{2}}=0,\quad\expect{J^{3}}=-2\varepsilon\frac{{\cal Z}_{n-1}}{{\cal Z}_{n}}.
\label{current_recurrence}
\end{equation}
It would be challenging to attempt an analytic asmptotic computation of the grand canonical nonequilibrium partition function ${\cal Z}_n(\varepsilon,\mu)$ along the lines described in subsection \ref{subsect:neZ}, however this has not been accomplished yet.
On the other hand, numerical computation of the expression (\ref{NPF}) strongly suggests \cite{Med}, again, the universal scaling ${\cal Z}_{n-1}/{\cal Z}_n  \sim n^{-2}$ for all (generic) values of $\varepsilon,\mu$.
 
\section{Quasilocal conservation laws and linear response physics}
\label{qlcl}
 
In this section we discuss the main `spin-off' application of the exact solutions of boundary driven nonequilibrium master equations discussed so far, namely deriving novel, so-called {\em quasilocal} conserved quantities, and consequently deepening our understanding of the linear response physics of the corresponding closed (coherent, non-dissipative) models. The main exposition is focusing on the paradigmatic $XXZ$ model,
following Refs.~\cite{P14c,PI13},  while some comments with regard to other integrable chains will be given at the end.

\subsection{Universal $R$-matrix and exterior integrability of NESS density operator}

Using the concept of a universal $R$-matrix (see e.g. Refs.~\cite{KT91,F94,K95,K01,K02,klumper1,klumper2}), one may find and explicitly construct the solution of YBE with $U_q(\mathfrak{sl}_2)$ symmetry over an arbitrary triple tensor product of 
highest-weight Verma modules (\ref{verma})
${\cal V}_{s_1}\otimes {\cal V}_{s_2}\otimes {\cal V}_{s_3}$, 
\begin{equation}
\mm{R}_{s_1,s_2}(\varphi-\vartheta) \mm{R}_{s_1,s_3}(\varphi)\mm{R}_{s_2,s_3}(\vartheta) = \mm{R}_{s_2,s_3}(\vartheta)\mm{R}_{s_1,s_3}(\varphi)\mm{R}_{s_1,s_2}(\varphi-\vartheta)
\label{UYBE}
\end{equation}
for arbitrary representation parameters $s_1,s_2,s_3\in\CC$ and {\em additive} spectral parameters $\varphi,\vartheta\in\CC$, where $\mm{R}_{s_i,s_j}$ acts nontrivially on the $i$-th and $j$-th module of the triple.
For example $\mm{R}_{\frac{1}{2},\frac{1}{2}}(\varphi) = \mm{P} \mm{R}(\varphi)$ is the six-vertex $R$-matrix of Eq.~(\ref{eq:Rff}) up to permutation $\mm{P}$ of the pair of auxiliary spaces, while $\mm{R}_{\frac{1}{2},s}(\varphi) = \mm{L}(\varphi,s)$ is a generic non-compact (non-Hermitian) Lax operator (\ref{eq:Lax}) so that YBE (\ref{UYBE}) for $s_1=s_2=\frac{1}{2},s_3=s$ becomes the $RLL$ relation (\ref{eq:RLL}).

Taking the first two spaces as auxiliary and the third one as physical, ${\cal V}_{s}\otimes {\cal V}_{t}\otimes{\cal V}_{\frac{1}{2}} \equiv {\cal H}_\aa\otimes{\cal H}_\bbb\otimes{\cal H}_{\rm p}$, i.e., $s_1=s,s_2=t,s_3=\frac{1}{2}$,
and writing an infinite-dimensional, so-called {\em exterior} $R$-matrix as $\mm{R}_{\aa,\bbb}(\varphi,s,t)\equiv \mm{P}_{\aa,\bbb} \mm{R}_{s,t}(\varphi)$, Eq.~(\ref{UYBE}) yields an alternative $RLL$ relation swapping 
spectral {\em and} representation parameters in auxiliary tensor product of Lax operators
\begin{equation}
\!\!\!\!\!\!\!\!\!\!\!\!\!\!\!\!\!\mm{R}_{\aa,\bbb}(\varphi-\vartheta,s,t) \mm{L}_{\aa,x}(\varphi,s) \mm{L}_{\bbb,x}(\vartheta,t) = \mm{L}_{\aa,x}(\vartheta,t) \mm{L}_{\bbb,x}(\varphi,s) \mm{R}_{\aa,\bbb}(\varphi-\vartheta,s,t).
\label{RLLa}
\end{equation}
Moreover, due to $U_q(\mathfrak{sl}_2)$ invariance, the tensor product of highest-weight states, spanning a scalar representation, should always be left and right invariant under the $R$-matrix
\begin{equation}
\mm{R}_{\aa,\bbb} (\varphi,s,t) \ket{0}_\aa\ket{0}_\bbb =  \ket{0}_\aa\ket{0}_\bbb,\quad
\bra{0}_\aa\bra{0}_\bbb\mm{R}_{\aa,\bbb}(\varphi,s,t) = \bra{0}_\aa\bra{0}_\bbb.
\label{Rfix}
\end{equation}

Remarkably, the relation (\ref{RLLa}) together with (\ref{Rfix}) immediately implies commutativity of a two-parametric, {\em highest-weight non-Hermitian transfer operator} (HNTO) $W_n(\varphi,s) \in{\rm End}({\cal H}^{\otimes n}_{\rm p})$
\begin{equation}
W_n(\varphi,s) := \bra{0}\mm{L}(\varphi,s)^{\otimes n}\ket{0},\quad [W_n(\varphi,s),W_n(\vartheta,t)]=0,
\label{HNTO}
\end{equation}
namely
\begin{eqnarray}
&&\!\!\!\!\!\!\!\!\!\!\!\!\!\!\!\!\!\!\!\!\!\!\!W_n(\varphi,s)W_n(\vartheta,t) = \bra{0}_\aa \bra{0}_\bbb\mm{R}_{\aa,\bbb}(\varphi-\vartheta,s,t)\left(\prod_{x=1}^n \mm{L}_{\aa,x}(\varphi,s)\mm{L}_{\bbb,x}(\vartheta,t)\right)\ket{0}_\aa\ket{0}_\bbb \nonumber\\
&&\!\!\!\!\!\!\!\!\!\!\!\!\!\!\!\!\!\!\!\!\!\!\!= \bra{0}_\aa \bra{0}_\bbb\left(\prod_{x=1}^n\mm{L}_{\aa,x}(\vartheta,t)\mm{L}_{\bbb,x}(\varphi,s)\right)\mm{R}_{\aa,\bbb}(\varphi-\vartheta,s,t) \ket{0}_\aa\ket{0}_\bbb = W_n(\vartheta,t)W_n(\varphi,s). \nonumber
\end{eqnarray}
Such a transfer matrix is essentially just the amplitude operator of NESS for a general asymmetric driving (\ref{eq:OmegaXXZ}), specifically $\Omega_n(\varphi,s,\chi) = W^T_n(\varphi,s)K(\sqrt{\chi})^{\otimes n}$ where the second, diagonal factor is inessential since it commutes with HNTO, 
$[W^T_n(\varphi,s),K(\sqrt{\chi})]\equiv 0$. 

One may thus define \cite{PIP} an {\em exterior integrability} of a nonequilibrium many-body density operator $R_\infty = \Omega_n(\varphi,s,\chi)[\Omega_n(\varphi,s,\chi)]^\dagger$ if Cholesky factor
$ \Omega_n(\varphi,s,\chi)$ forms a commuting family for any values of driving parameters. Note, however, that as a manifestation of {\em non-normality} of the transfer operator $W_n(\varphi,s)$, $[W_n(\varphi,s),W^T_n(\vartheta,t)]\neq 0$ in general, hence the density operator $R_\infty(\varphi,s,\chi)$ {\em does not} form a commuting family.

\subsection{Non-Hermitian transfer operators with broken spin reversal symmetry and quasilocal conservation laws}

HNTO (\ref{HNTO}) is neither a local operator, nor it is conserved in time as its time derivative  is a non-local object, namely using (\ref{HLT},\ref{Ltb}) we find:
\begin{eqnarray}
[H,W_n(\varphi,s)] &=& 2\sin\eta\bigl((\sigma^\z\sin\varphi\sin\eta s -\sigma^0  \cos\varphi\cos\eta s)\otimes W_{n-1}(\varphi,s) \nonumber \\
&-& W_{n-1}(\varphi,s) \otimes (\sigma^\z\sin\varphi\sin\eta s - \sigma^0\cos\varphi\cos\eta s)\bigr). \label{HNTOrel}
\end{eqnarray}
Yet, it can be used to generate a very interesting family of operators in terms of differentiation with respect to the spin representation parameter $s$ around the {\em scalar point} $s=0$
\begin{equation}
Z_n(\varphi) = \frac{1}{(\sin\varphi)^{n}}\partial_s W_n(\varphi,s)\vert_{s=0} - (\eta\cot\varphi) M^\z_n,
\label{Zdef}
\end{equation}
where $M^\z_n=\sum_{x=0}^{n-1}\one_{2^x}\otimes\sigma^\z\otimes\one_{2^{n-1-x}}$ is the conserved $\z-$component of magnetization.
The $s-$derivative can be implemented as MPA in terms of an additional `derivative anzilla' qubit ${\cal H}_{\rm c}=\CC^2$, 
\begin{equation}
Z_n(\varphi) = \bra{0}_{\rm a}\bra{0}_{\rm c}\mm{L}'_1(\varphi)\mm{L}'_2(\varphi)\cdots\mm{L}'_n(\varphi)
\ket{0}_{\rm a}\ket{1}_{\rm c} - (\eta\cot\varphi) M^\z_n,
\label{defZn}
\end{equation}
defining an {\em extended Lax operator} $\mm{L}'(\varphi) \in {\rm End}({\cal H}_{\rm a}\otimes {\cal H}_{\rm c}\otimes {\cal H}_{\rm p})$
\begin{equation}
\mm{{L}}'(\varphi) = \frac{1}{\sin\varphi}\pmatrix{
\mm{L}(\varphi,0) & \partial_s \mm{L}(\varphi,s)|_{s=0} \cr
0 & \mm{L}(\varphi,0)} =\mm{L}_0(\varphi) \one_{\rm c} + \mm{L}_1(\varphi)\sigma^+_{\rm c}, \label{defLp}
\end{equation}
where $\mm{L}_0(\varphi):= (\csc\varphi)\mm{L}(\varphi,0)$, $\mm{L}_1(\varphi):= (\csc\varphi)\partial_s\mm{L}(\varphi,s)|_{s=0}$.
We shall refer to the operator family $Z_n(\varphi)$ as the {\em modified} highest-weight non-Hermitian transfer operators (mHNTO). 
Note that the $Z$-operator (\ref{eq:Zs}) yielding the first-order perturbative expression of NESS is just $Z \equiv Z_n(\pi/2)^T$.
It can be shown  \cite{PI13} that in the massless regime (for a dense set of real $\eta$) $Z_n(\varphi)$ are quasilocal operators whose time-derivative is localized at the chain boundaries for a suitable domain $\varphi\in {\cal D}\subset\CC$.
Indeed, differentiating \eref{HNTOrel} w.r.t. $s$ at $s=0$ and using the definition (\ref{Zdef}) we immediately obtain a very insightful relation
\begin{eqnarray}
[H,Z_n(\varphi)] &=& 2\eta\sin\eta \left( \sigma^\z\otimes \one_{2^{n-1}} - \one_{2^{n-1}}\otimes\sigma^\z\right) \nonumber \\ 
&-& 2\sin\eta\cot\varphi\left( \sigma^0\otimes Z_{n-1}(\varphi)-Z_{n-1}(\varphi)\otimes\sigma^0\right). \label{HZc}
\end{eqnarray}
Writing the Lax operator components $\mm{L}^{'\alpha} \in {\rm End}({\cal H}_{\rm a}\otimes {\cal H}_{\rm c})$, $\mm{L}^\alpha \in {\rm End}({\cal H}_{\rm a})$, 
via $\mm{L}'(\varphi) = \sum_{\alpha\in{\cal J}} \mm{L}^{'\alpha}(\varphi) \otimes \sigma^\alpha$, 
$\mm{L}^{'\alpha}(\varphi) = \mm{L}^\alpha_0(\varphi) \one_{\rm c} + \mm{L}^\alpha_1(\varphi)\sigma^+_{\rm c} $
satisfying {\em boundary transition conditions}
\begin{eqnarray}
&&\bra{0}_{\rm a}\bra{0}_{\rm c} \mm{L}^{'0} = \bra{0}_{\rm a}\bra{0}_{\rm c}, \quad\quad \bra{0}_{\rm a}\bra{0}_{\rm c} \mm{L}^{'+} = 0, \nonumber \\
&&{\mm{L}}^{'0} \ket{0}_{\rm a}\ket{1}_{\rm c} =  \ket{0}_{\rm a}\ket{1}_{\rm c}, \quad\quad {\mm{L}}^{'-} \ket{0}_{\rm a}\ket{1}_{\rm c} =  0, \nonumber \\
&&{\mm{L}}^{'\z} \ket{0}_{\rm a}\ket{1}_{\rm c} = \eta\cot\varphi \ket{0}_{\rm a} \ket{0}_{\rm c}, \quad{\mm{L}}^{' \z,\pm} \ket{0}_{\rm a}\ket{0}_{\rm c} =  0, 
\label{bc}
\end{eqnarray}
one sees that mHNTOs allow for an expression in terms of open boundary translationally invariant sum of local operators
\begin{equation}
Z_n(\varphi) = \sum_{r=2}^n \sum_{x=0}^{n-r} \one_{2^{x}} \otimes q_r(\varphi) \otimes \one_{2^{n-r-x}},
\label{Zloc}
\label{qm}
\end{equation}
in terms of local $r-$point densities $q_r(\varphi) \in {\rm End}({\cal H}^{\otimes r}_{\rm p})$ with explicit MPA representation, which is obtained by careful inspection of the definitions (\ref{defZn},\ref{defLp})  
\begin{equation}
\!\!\!\!\!\!\!\!\!\!\!\!\!\!\!\!\!\!\!\!\!\!\!\!\!\!\!\!\!\!\!\!
q_r(\varphi) = \sum_{\alpha_2\ldots \alpha_{r-1}\in{\cal J}} \bra{0}\mm{L}^-_0(\varphi) \mm{L}^{\alpha_2}_0(\varphi)\cdots\mm{L}^{\alpha_{r-1}}_0(\varphi)\mm{L}^+_1(\varphi) \ket{0}  \sigma^- \otimes \sigma^{\alpha_2}\cdots \sigma^{\alpha_{r-1}}\otimes \sigma^+.\quad
\label{qr2}
\end{equation}
Using the local operator sum ansatz (\ref{Zloc}) one is able to rewrite the RHS of (\ref{HZc}) in a form of a sum of operators localized at the boundaries
\begin{eqnarray}
[H,Z_n(\varphi)] &=& 2\eta\sin\eta \left( \sigma^\z\otimes \one_{2^{n-1}} - \one_{2^{n-1}}\otimes\sigma^\z\right) \nonumber \\ 
&+&  2\sin\eta\cot\varphi \sum_{r=2}^n \left(q_r(\varphi)\otimes \one_{2^{n-r}} - \one_{2^{n-r}}\otimes q_r(\varphi) \right).
\end{eqnarray} 
We shall now demonstrate  (following \cite{PI13,P14c}) that there are important parameter regimes for which the operator sequence $\{ q_r(\varphi); r=2,3\ldots\}$ is quickly decreasing in a suitable operator norm, so the operator family (\ref{Zloc}) can be considered as {\em quasilocal} and {\em almost conserved}. 
Moreover, the operators $Z_n(\varphi)$ are represented as strictly {\em lower triangular matrices} with zero diagonal, $\bra{\ul{\nu}}Z_n(\varphi)\ket{\ul{\nu}'} = 0$ if ${\rm ord}(\ul{\nu}) < {\rm ord}(\ul{\nu}')$,
$\bra{\ul{\nu}}Z_n(\varphi)\ket{\ul{\nu}} \equiv 0$ for any $\varphi$. Hence they are {\em non-diaginalizable} for any $n>1$ as their spectrum contains only $0$.

However, in order to formulate our results precisely we first state two definitions of `essential' locality of translationally invariant operators:

\smallskip\noindent
 {\em Quasilocality:} An operator sequence $Z_n \in {\rm End}({\cal H}^{\otimes n}_{\rm p})$ which can be written as an open boundary translationally invariant sum of local operators $q_r$, like (\ref{Zloc}), for {\em any} $n$, is called {\em quasilocal} if there exist positive constants $\gamma,\xi > 0$, such that 
\begin{equation}
\| q_r \|_{\rm HS} < \gamma e^{-\xi r},
\end{equation} 
where, for any matrix $a$,
\begin{equation}
\| a\|^2_{\rm HS} := \frac{\tr(a^\dagger a)}{\tr\one}
\end{equation} is a normalized Hilbert-Schmidt norm which satisfies a nice extensivity property
\begin{equation}
\| a \|_{\rm HS} = \|\,a \otimes \one_d \|_{\rm HS},\quad \forall d,
\label{extensivity}
\end{equation}
as well as the normalized Cauchy-Schwartz inequality
\begin{equation}
\left |\frac{\tr (a b)}{\tr \one}\right| \le \| a\|_{\rm HS}\,\| b\|_{\rm HS}.
\label{CS}
\end{equation}
We remark that the Hilbert-Schmidt operator norm is the natural norm for high-temperature statistical mechanics as it is linked to an infinite temperature, tracial state $\omega_0(a) = \tr a/\tr\one$, specificallly $\|a\|^2_{\rm HS} = \omega_0(a^\dagger a)$.
Note also that it satisfies a useful inequality in relation to a $C^*$ operator norm $\| b \|^2 = \sup_{\omega} \omega(b^\dagger b)$, namely for any pair of bounded operators $a,b$ (say, elements of ${\rm End}({\cal H}^{\otimes n}_{\rm p})$), $\|a b\|_{\rm HS} \le \| a\|_{\rm HS} \| b \|$.

\smallskip\noindent
{\em Pseudolocality:} An operator sequence $Z_n \in {\rm End}({\cal H}^{\otimes n}_{\rm p})$ of the form (\ref{Zloc}) is called {\em pseudolocal} if there exists a positive constant $K > 0$, such that 
\begin{equation}
\| Z_n\|^2_{\rm HS} \le K n.
\end{equation}

Clearly, quasilocality implies pseudolocality as follows straightforwardly from the definitions. 
In order to demonstrate locality of mHNTO $Z_n(\varphi)$ for $XXZ$ one needs to study the sequence of Hilbert Schmitd norms $\| q_r(\varphi)\|_{\rm HS}$.
In fact,  for a set of commensurate anisotropies $\eta = \frac{\pi l}{m}$, $l,m\in\ZZ^+$, densely covering the easy-plane regime $|\Delta|<1$, one can explicitly study a general inner product \cite{P14c} 
\begin{equation}
\!\!\!\!\!\!\!\!\!\!\!\!\!\!\!\!\!\!\!\!\!\!\!\!\!\!\!\!\!\!
\kappa_r(\varphi,\varphi'):=\frac{1}{2^{r}}\tr\left( q^T_r(\varphi) q_r(\varphi')\right) = \left(\frac{2\eta s_1}{\sin\varphi \sin\varphi'}\right)^2 \bra{1} \mm{T}(\varphi,\varphi')^{r-2}\ket{1}, \quad {\rm r \ge 2}.
\label{rr}
\end{equation}
in terms of iterating the following finite, $m-$dimensional transfer matrix acting over a vector space ${\rm lsp}\{\ket{k},k=1,\ldots,m\}$,
\be
\!\!\!\!\!\!\!\!\!\!\!\!\!\!\!\!\!\!\!\!\!\!\!\!\!\!\!\!\!\!\!\!\!\mm{T}(\varphi,\varphi') \!\!=\!\!\!\sum_{k=1}^{m-1} (c^2_k +\cot\varphi\cot\varphi' s^2_k)\ket{k}\!\bra{k} + \sum_{k=1}^{m-2}\frac{|s_{k}s_{k+1}|}{2\sin\varphi\sin\varphi'}\left(\ket{k}\!\bra{k\!+\!1} + \ket{k\!+\!1}\!\bra{k}\right),  \label{TT}
\ee
where $c_k := \cos(\pi l k/m), s_k := \sin(\pi l k/m)$. Straightforward calculation \cite{PI13,P14c} shows that the leading eigenvalue $\tau=e^{-2\xi}$ of $\mm{T}(\varphi,\varphi')$ has modulus smaller than $1$, i.e. $\xi>0$, and hence $Z_n(\varphi)$ being quasilocal, exactly if $\varphi,\varphi'$ belong to the vertical strip ${\cal D}_m = \{\varphi; |{\rm Re}\varphi-\frac{\pi}{2}| < \frac{\pi}{2m}\}$.
Moreover, on ${\cal D}_m$ one can easily compute the full extensive normalized Hilbert-Schmidt inner-product 
\begin{equation}
(A,B) = \frac{\tr A^\dagger B}{\tr\one},
\label{eq:HSip}
\end{equation} which turns the space of observables ${\rm End}({\cal H}_{\rm p}^{\otimes n})$ into a Hilbert space, 
between all mHNTO.
Namely, for any pair $\varphi,\varphi'\in{\cal D}_m$
\begin{eqnarray}
\left(Z_n(\bar{\varphi}),Z_n(\varphi')\right)&=&\sum_{r=2}^n (n-r+1)\kappa_r(\varphi,\varphi')\nonumber \\
&=& n\sum_{r=2}^\infty \kappa_r(\varphi,\varphi') - \sum_{r=2}^\infty (r-1) \kappa_r(\varphi,\varphi') + {\cal O}(n e^{-2\xi n}) \nonumber \\
&=& n K(\varphi,\varphi') + {\cal O}(n^0), \label{innerZ}, \\
\left(Z^T_n(\bar{\varphi}),Z_n(\varphi')\right)&=& 0, \nonumber
\end{eqnarray}
where
\begin{eqnarray}
K(\varphi,\varphi') &=& \sum_{r=2}^\infty \kappa_r(\varphi,\varphi') = \left(\frac{2\eta s_1}{\sin\varphi \sin\varphi'}\right)^2 \bra{1}(\one-\mm{T}(\varphi,\varphi'))^{-1}\ket{1} \nonumber \\
&=& -\frac{8\eta^2}{\sin\varphi \sin\varphi'} \frac{\sin((m-1)(\varphi+\varphi'))}{\sin(m(\varphi+\varphi'))}.
\label{gm}
\end{eqnarray}
With a simple technical trick \cite{P14c,PPSA14} one can, again for commensurate anisotropies $\eta = \pi l/m$ densely covering the gapless regime $|\Delta|<1$, define a set of quasilocal conserved operators which are {\em exactly conserved} for the $XXZ$ Hamiltonian with periodic (or even twisted \cite{P14c}) boundary conditions,
\be
H_{\rm pbc} = H + 2 \sigma^+ \otimes \one_{2^{n-2}} \otimes \sigma^- +2 \sigma^-\otimes \one_{2^{n-2}} \otimes \sigma^+ + \Delta \sigma^\z \otimes \one_{2^{n-2}} \otimes \sigma^\z,\quad
\ee
namely 
\begin{eqnarray}
Y_n(\varphi) &=&  \frac{1}{(\sin\varphi)^{n}}\partial_s V_n(\varphi,s)\vert_{s=0}-(\eta\cot\varphi) M^\z_n \nonumber \\
&=& \tr_{\rm a} \bra{0}_{\rm c}\mm{L}'_1(\varphi) \mm{L}'_2(\varphi)\cdots \mm{L}'_n(\varphi) \ket{1}_{\rm c}-(\eta\cot\varphi) M^\z_n,
\label{Ydef1}
\end{eqnarray}
where
\begin{equation}
V_n(\varphi,s) = \tr_{\rm a}\left\{\mm{L}_1(\varphi,s)\mm{L}_2(\varphi,s) \cdots \mm{L}_n(\varphi,s)  \right\},
\label{Vdef}
\end{equation}
with the auxiliary space being naturally truncated to ${\cal H}'_\aa = {\rm lsp}\{ \ket{k}, k = 0,1\ldots,m-1\}$ since $\bra{m-1}\mm{L}^{-}(\varphi,s)\equiv 0$.
The Sutherland condition (\ref{eq:LOD}) then immediately implies $[H_{\rm pbc},V_n(\varphi,s)]\equiv 0$, and consequently [via (\ref{Ydef1})], exact conservation 
\begin{equation} 
[H_{\rm pbc},Y_n(\varphi)]=0,\quad\forall \varphi \in \CC,
\end{equation}
while YBE (\ref{RLLa}) implies
\begin{equation}
[Y_n(\varphi),Y_n(\varphi')] = 0,\quad\forall \varphi,\varphi'\in\CC.
\end{equation}
Furthermore, one can easily show \cite{P14c} that the difference between mHNTO $Z_n(\varphi)$ and the so-called {\em modified periodic non-Hermitian transfer operator} (mPNTO) $Y_n(\varphi)$ is exponentially small away from the boundaries.
More precisely, mPNTO is again quasilocal in the sense that it can be written as a translationally invariant sum of local operators (with the same densities as $Z_n(\varphi)$)
\begin{equation}
Y_n(\varphi) = \sum_{r=2}^n \sum_{x=0}^{n-1} \hat{\cal S}^x ( \one_{2^{n-r}}\otimes q_r(\varphi)) + y_n(\varphi)
\label{lem}
\end{equation}
where $\hat{\cal S} : {\rm End}({\cal H}^{\otimes n}_{\rm p}) \to {\rm End}({\cal H}^{\otimes n}_{\rm p})$ is a left-shift rotation map which is completely specified by the action on the Pauli basis
\begin{equation}
\hat{\cal S}(\sigma^{\alpha_0}\otimes \sigma^{\alpha_1} \otimes\cdots\sigma^{\alpha_{n-2}}\otimes\sigma^{\alpha_{n-1}}) = \sigma^{\alpha_{1}}\otimes \sigma^{\alpha_2}\otimes\cdots\sigma^{\alpha_{n-1}}\otimes\sigma^{\alpha_{0}},
\end{equation}
and the remainder is exponentially small in Hilbert-Schmidt norm $\| y_n(\varphi)\|_{\rm HS} = {\cal O}(e^{-\xi n})$. Similarly to HNTO (\ref{innerZ}), the family of mPNTO has asymptotically the same kernel of inner products
\begin{equation}
\!\!\!\!\!\!\!\!\!\!\!\!
\left(Y_n(\bar{\varphi}),Y_n(\varphi')\right)= n K(\varphi,\varphi') + {\cal O}(e^{-2\xi n}),\quad
\left(Y^T_n(\bar{\varphi}),Y_n(\varphi')\right) = {\cal O}(\e^{-2\xi n}).
\end{equation}

The standard algebraic Bethe ansatz (ABA) machinery \cite{KBI93,F94,GM95} allows one to derive a sequence of strictly local translationally invariant and exactly conserved operators $Q_r$, $r=2,3\ldots$,
 $[Q_r,Q_{r'}]=0$,  
\be
Q_r = \partial^{r-1}_\varphi \log V_n(\varphi,\half)|_{\varphi=\frac{\eta}{2}} =  
\sum_{x=0}^{n-1} \hat{\cal S}^x ( \one_{2^{n-r}}\otimes q^{(r)}),
\label{LCL}
\ee
with the first term in the series being proportional to the hamiltonian  $Q_2\propto H_{\rm pbc}$,
and where $q^{(r)} \in {\rm End}({\cal H}_{\rm p}^{\otimes r})$ are the corresponding local densities.
Importantly, the ABA transfer operator $V_n(\varphi,\half)$ is spin reversal (\ref{eq:parity}) invariant, and hence are all local conserved operators
\begin{equation}
P V_n(\varphi,\half) P^{-1} = V_n(\varphi,\half),\quad P Q_r P^{-1} = Q_r.
\end{equation}
On the other hand, the non-Hermitian transfer operators, and the corresponding quasilocal conserved operators, satisfy a more specific PT-like symmetry (analogous to the one discussed in \cite{P12b}), following from (\ref{eq:spinrevL})
\begin{equation}
P Z_n(\varphi) P^{-1} = Z^T_n(\pi-\varphi),\quad
P Y_n(\varphi) P^{-1} = Y^T_n(\pi-\varphi).
\end{equation}
This means that the modified (and quasilocal for $\eta=\pi l/m$) transfer operators can be decomposed into even and odd w.r.t. spin reversal
\begin{equation}
Z^\pm_n(\varphi) = Z_n(\varphi) \pm P Z_n(\varphi) P^{-1},\quad Y^\pm_n(\varphi) = Y_n(\varphi) \pm P Y_n(\varphi) P^{-1},
\end{equation}
such that
\begin{eqnarray}
&&[H_{\rm pbc},Y^\alpha_n(\varphi)]=0,\quad P Y^\alpha_n(\varphi) P^{-1} = \alpha Y^\alpha_n(\varphi),\\
&&(Y^{\alpha}_n(\bar{\varphi}),Y^{\alpha'}_n(\varphi)) = n \delta_{\alpha,\alpha'} K(\varphi,\varphi') + {\cal O}(e^{-2\xi n}),
\end{eqnarray}
$\alpha,\alpha'\in\{\pm\}$, and similar relations for $Z^\alpha_n(\varphi)$ with open boundaries.

\subsection{Lower bounds on high temperature ballistic transport coefficients}

Existence of quasilocal conserved operators is extremely interesting for deriving bounds on linear-response transport coefficients.
For example, considering the {\em extensive} current 
\be
J = \sum_{x=0}^{n-1} \hat{\cal S}^x(\one_{2^{n-2}}\otimes j)
\ee
(with periodic boundary conditions), the famous Green-Kubo formula expresses the corresponding conductivity $\sigma'(\omega)$ (in our case spin-conductivity if $j$ is a local spin current (\ref{eq:j}), but
for a general discussion $J$ can be any current linked to an appropriate conservation law) in terms of the current auto-correlation function. In particular,
\be
\sigma'(\omega) = \lim_{t\to\infty}\lim_{n\to\infty}\frac{\beta}{n} \int_0^t \dd t' e^{\ii\omega t'} (J(t'),J(0))_\beta
\ee
where $J(t) := e^{\ii H_{\rm pbc} t} J e^{-\ii H_{\rm pbc}t}$ is the Heisenberg dynamics of the corresponding current operator, and
\be
(A,B)_\beta= \frac{1}{\beta} \int_0^\beta \dd \lambda \frac{\tr\left( A^\dagger e^{-\lambda H_{\rm pbc} }B e^{-(\beta-\lambda) H_{\rm pbc}}\right)}{\tr\,e^{-\beta H_{\rm pbc}}}
\ee
is the Kubo-Mori inner product, which reduces to Hilber-Schmidt inner product (\ref{eq:HSip}) in the limit of infinite temperature, $\lim_{\beta\to 0}(A,B)_{\beta} \equiv (A,B)$.
When d.c. ($\omega=0$) conductivity diverges, one defines the (spin) Drude weight $D$
\be
\sigma'(\omega) = 2\pi D \delta(\omega) + \sigma_{\rm reg}(\omega)
\ee
which can be again expressed with a Green-Kubo-like formula
\be
D = \lim_{t\to\infty}\lim_{n\to\infty}\frac{\beta}{2t n} \int_0^t \dd t' (J(t'), J(0))_\beta.
\ee
Drude weight can be considered as a ballistic transport coefficient, and $D>0$ signals an ideal, ballistic spin transport in an extended system at finite temperatures \cite{ZP03,HHB07}.
At high temperature $\beta\to 0$, the leading order Drude weight can be expressed as
\be
D = \beta D_\infty + {\cal O}(\beta^2),\quad D_\infty = \lim_{t\to\infty}\lim_{n\to\infty}\frac{1}{2t n} \int_0^t \dd t' (J(t'),J).
\ee
For integrable quantum systems, having a (countable) set of local extensive conserved operators $\{ Q_r \}$ (\ref{LCL}), 
Zotos, Naef and Prelov\v sek \cite{ZNP} suggested to use Mazur bound \cite{M69}, to rigorously estimate the
high-temperature Drude weight from below,  
\be
D_\infty \ge \lim_{n\to\infty} \frac{1}{2n} \sum_{r,r'} (J,Q_r) (K^{-1})_{r,r'} (Q_{r'},J)
\label{eq:MBC}
\ee
where $K_{r,r'} := (Q_r,Q_{r'})$ is a positive-definite matrix of inner-products of independent conserved operators. Usually, one picks such linear combinations of $Q_r$ which are mutually orthogonal, say, using a Gram-Schmidt procedure, so the above formula simplifies
with $(K^{-1})_{r,r'} = \delta_{r,r'} \|Q_r\|^{-2}_{\rm HS}$.

However, the situation becomes interesting and quite intricate for systems with $\ZZ^2$ symmetries, like spin-reversal, parity-hole, etc, such that the corresponding spin or charge current is {\em odd} under the transformation 
\be
P J P^{-1} = -J,
\ee 
and the symmetry in the corresponding equilibrium state is un-broken, $\tr(X e^{-\beta H}) = \tr (P X P^{-1} e^{-\beta H}), \forall X$, i.e. in the absence of external magnetic fields, chemical potentials, etc, such that
$P H P^{-1} = H$. In the case of $XXZ$ model this immediately implies that the RHS of Mazur bound (\ref{eq:MBC}) has to vanish, since $(J,Q_m) = - (P J P^{-1},P Q_m P^{-1})  = -(J,Q_m) = 0$, and one has to rely on effective theories
and approximations \cite{S12,SPA11}.
However, the situation drastically changes due the presence of quasilocal conserved operators with odd spin reversal symmetry $\{ Y^-_n(\varphi) \}$.
Even replacing a single operator from this set $Y^{-}_n(\pi/2)$ for commensurate anisotropy $\eta=\pi l/m$ (or, equivalently, $Z_n(\varphi/2)$ for open boundaries \cite{IP13}) into RHS of (\ref{eq:MBC}) one obtains
a non-vanishing lower bound \cite{P11a,IP13,PPSA14}
\begin{equation}
D_\infty \ge D_Z = \sin^2(\pi l/m) \frac{m}{m-1}.
\end{equation}
However, one can do much better than that! Considering the full continuous family of non-Hermitian quasilocal conserved operators $\{Y_n(\varphi),\varphi\in{\cal D}_m \}$ replacing a countable family $\{Q_m,m=1,2\ldots\}$ 
in (\ref{eq:MBC}) on can write the Mazur bound
\begin{equation}
D_\infty \ge D_K \ge D_Z,\quad D_K = \frac{1}{2}{\rm Re} \int_{{\cal D}_m}\!\!\!\dd^2\varphi\,a(\varphi) f(\varphi),
\end{equation}
where $a(\varphi) := (j_{x,x+1},Y^-_n(\varphi))$, in terms of a solution $f(\varphi)$ of a complex Fredholm equation of the first kind, involving the kernel (\ref{gm}) (see Ref.~\cite{P14c} for details)
\begin{equation}
\frac{1}{2}\int_{{\cal D}_m}\!\!\!K(\varphi,\varphi') f(\varphi')\dd^2\varphi' = \overline{a(\bar{\varphi})}.
\end{equation}
In our case $a(\varphi) = \ii/4$, and the Fredholm equation has a simple explicit solution $f(\varphi) = -\frac{\ii}{\pi} m s_1^2 |\sin\varphi|^{-4}$, yielding an explicit expression for the {\em optimised} Mazur bound
\be
D_K = \frac{\sin^2(\pi l/m)}{\sin^2(\pi/m)} \left(1 - \frac{m}{2\pi}\sin\left(\frac{2\pi}{m}\right)\right).
\ee
It could be tempting to speculate that this bound is in fact saturating the high-temperature spin-Drude weight (see Fig.~\ref{fig:DK}), in particular since it agrees very well with the state-of-the-art time dependent DMRG simulations \cite{Karrasch}.
\begin{figure}
\centering
\includegraphics[scale=0.75]{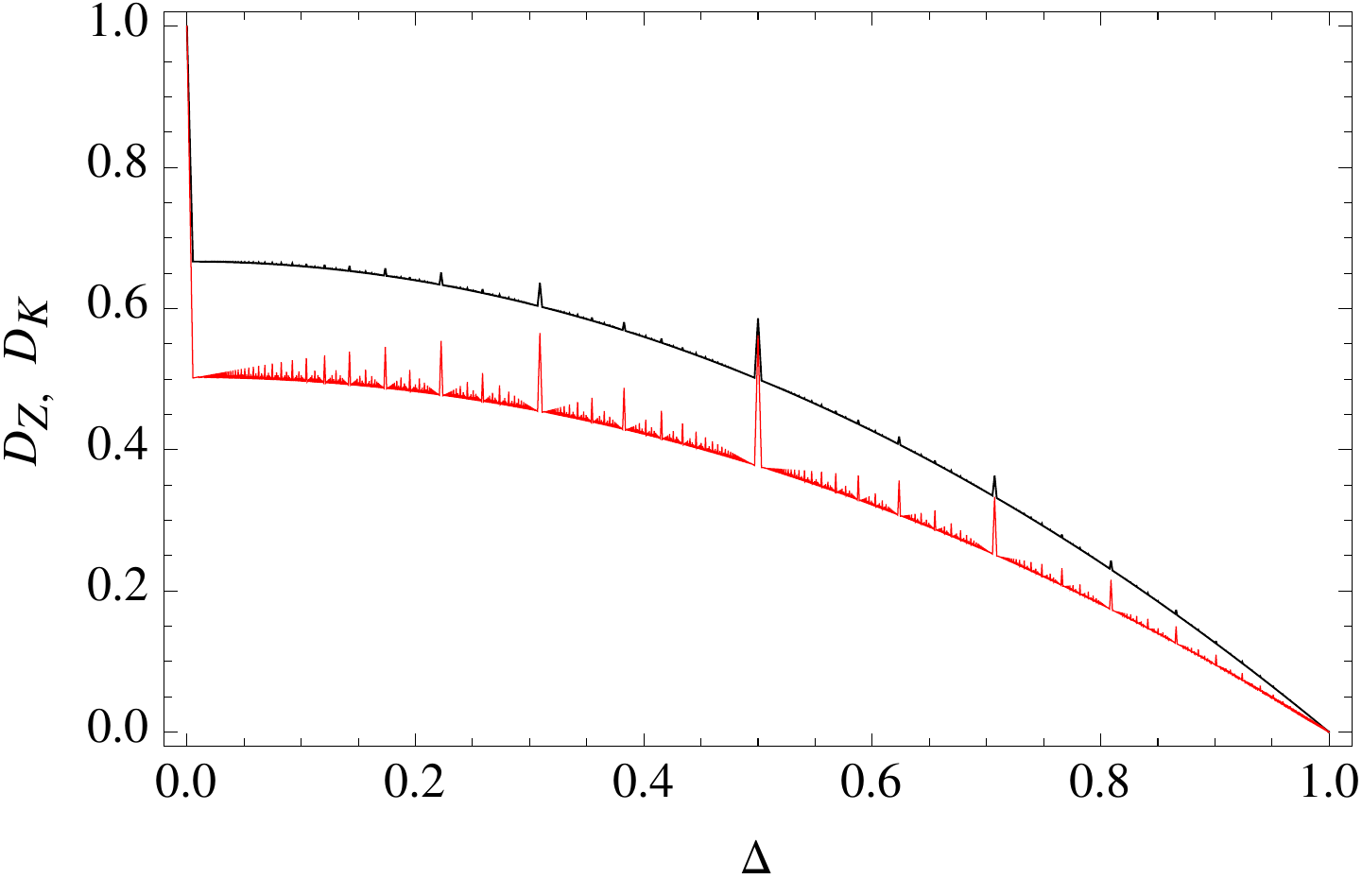}
\caption{(from \cite{PI13})
Optimized Mazur lower bound on high-temperature spin Drude weight $C_K$ (black) \cite{PI13} versus the less optimal bound $C_1$ of Ref.~\cite{P11a} (red)
which is based on
a single quasilocal almost-conserved operator $Z_n(\varphi=\pi/2)$ (or $Y_n(\pi/2)$), for $XXZ$ model as a function of anisotropy parameter $\Delta$.
}
\label{fig:DK}
\end{figure}
Note, however, that both, the non-optimal and optimised lower bounds $D_Z,D_K$ are no-where continuous (`fractal') functions of the anisotropy parameter $\Delta = \cos\eta$. This suggests an interesting option that 
thermodynamic properties such as transport coefficients of strongly interacting systems might be no-where continuous functions for a finite range of parameters. Similar analysis could be attempted for finite inverse temperatures $\beta > 0$, but then the evaluation of the $K(\varphi,\varphi')$ and the function $a(\varphi)$ should be evaluated either numerically or approximately (perturbatively).

\subsection{Models with non-deformed symmetries and lower bounds on high temperature diffusion constants}

\label{subsect:undeformed}

Analysis of the previous section showed how new quasilocal operators can be distilled from NESS density operator in boundary dissipatively driven model with trigonometric $R$-matrix (corresponding to $U_q(\mathfrak{sl}_2)$ quantum symmetry with uni-modular $q$). We expect that similar results could be obtained for other integrable models sharing the same $R$-matrix, say Sine-Gordon field theory in $1+1$ dimensions, or its discretisation, the quantum Hirota equation (see e.g. \cite{F94}).
However, our conserved operators $Y_n(\varphi)$ or $Z_n(\varphi)$ are no-longer quasilocal when $|q| > 1$, when the amplitudes of the MPA diverge super-exponentially, or even in the un-deformed $\mathfrak{sl}_2$ symmetric limit $q\to1$.

In such a case, for the $XXX$ spin 1/2 model, the central quasilocal operator $Z^T(\varphi=\pi/2)$ goes to 
\be 
Z = -\ii\partial_\varepsilon\Omega_n|_{\varepsilon=0}= \sum_{x=1}^{n-1}\sum_{y=x+1}^{n} \sigma^+_x \sigma^-_y,
\label{eq:Yang1}
\ee 
so it becomes quadratically extensive in the sense that 
$\| Z\|^2_{\rm HS} \simeq \half q n^2$ with $q$ being some real constant, while still satisfying almost conservation, or conservation law property $[H,Z] = -\sigma^\z_1 + \sigma^\z_n$. 
One finds very similar behaviour for other models with Lie (underformed) symmetries presented in this overview, namely Fermi-Hubbard and spin-1 Lai-Sutherland models.

For example, for symmetrically boundary driven Hubbard model, one can define an analogous operator as 
 \begin{eqnarray}
\!\!\!\!\!\!\!\!\!\!\!\!\!\!\!\!\!\!
&& Z = -\ii \partial_\varepsilon \Omega_n(\lambda=0,w = \ihalf\varepsilon)  = \label{eq:Yang2}\\
\!\!\!\!\!\!\!\!\!\!\!\!\!\!\!\!\!\!
&& \sum_{x=1}^{n-1}(\sigma^+_x\sigma^-_{x+1}+\tau^+_x\tau^-_{x+1})  -  2u\sum_{x,y}^{x<y} (-1)^{x-y}\sigma^+_x\left(\prod_{z=x+1}^{y-1} \sigma^\z_z \right) \sigma^-_y \tau^+_x \left(\prod_{z=x+1}^{y-1}\tau^\z_z\right)\tau^-_y, \nonumber
\end{eqnarray}
which is again quadratically superextensive $\| Z\|_{\rm HS}^2 \simeq \half q n^2$ and satisfies almost conservation property 
\be
[H,Z] = -\sigma^\z_1 - \tau^\z_1 + \sigma^\z_n + \tau^\z_n.
\ee 
Note that both operators, (\ref{eq:Yang1}) and (\ref{eq:Yang2}), can be viewed as level-1 generators of the Yangian symmetry of the respective models, truncated to a finite size $n$ \cite{bernard,uglov}.
Similarly as in the case of $XXZ$ model, the entire two-parameter set of operators $\Omega_n(\lambda,w)$ (\ref{eq:OmHub}) can be considered as an HNTO, namely one finds by explicit computation that
\be
[\Omega_n(\lambda,w),\Omega_n(\lambda',w')] = 0,\quad \forall \lambda,\lambda',w,w'\in\CC.
\ee
However, proving the existence or even explicitly constructing the corresponding intertwiner, the (infinite-dimensional) exterior $R$-matrix, remains a problem for the future.

In the absence of local and quasilocal operators which are odd under spin reversal, or particle-hole transformation
\footnote{Reader should remember that the ABA transfer operator and all the derived local conserved operators are {\em even} under $P$.}, 
$P$, one may attempt use the almost-conserved Hermitian operator $Q=\ii (Z-Z^\dagger)$, $P Q P^{-1} = -Q$ to estimate d.c. spin transport coefficients. However, due to super-extensivity, the contribution of $Q$ to Drude weight is vanishing in TL. Nevertheless, using 
a careful estimation of the effects of time evolution of the derivative $[H,Q]$ (which is by assumption of almost-conservation localised near the boundaries) by means of the Lieb-Robinson bound \cite{LR72}, one can find  a rigorous lower bound on the Green-Kubo expression for the high-temperature diffusion constant \cite{P14b}
\begin{equation}
D_{\rm diff} = \lim_{t\to\infty}\lim_{n\to\infty}\frac{1}{n}\int_{-t}^t \dd t' (J(t'),J) = \lim_{\beta,\omega\to 0} \frac{\sigma'(\omega)}{\beta}.
\end{equation}
For the locally interacting hamiltonian $H = \sum_x h_{x,x+1}$, local current observable $J =\sum_x j_{x,x+1}$ and general quadratically extensive almost conserved operator $Q$,
with $q = \lim_{n\to\infty} \frac{1}{n^2} \|Q\|^2_{\rm HS} > 0$, one finds a general theorem \cite{P14b}, stating that
\begin{equation}
D_{\rm diff} \ge \frac{ |(j_{x,x+1},Q)|^2 }{8 v q},
\end{equation}
where $v$ is the Lieb-Robinson velocity \cite{hastings,bruno}, which can be estimated in terms of the norm of local Hamiltonian density, $v \le 6 \| h \|$, hence $D_{\rm diff}\ge |(j_{x,x+1},Q)|^2/(48 \| h\| q)$.
Applying to $XXX$ and Fermi-Hubbard models this result implies strict bounds on the respective spin and charge diffusion constants, $D^{XXX}_{\rm diff} \ge 1/6$, $D^{\rm Hubbard}_{\rm diff} \ge 1/(3 u^2)$, which, from a rigorous point of view, prove that the high-temperature spin/charge transport in these models cannot be sub-diffusive or even insulating.
Whereas in the Heisenberg model this bound may be superfluous, as DMRG numerical simulations suggest that the high-temperature spin transport in the isotropic point seems to be anomalous (super-diffusive but sub-ballistic, and hence $D_{\rm diff}=\infty$) \cite{Z11b}, 
in the Hubbard model the bound seem to be less trivial as the numerics suggests diffusive transport \cite{PZ12}.

\section{Discussion}
\label{disc}

\subsection{Open problems}

We shall close the presentation of this growing subject with a list of, to author's taste, most urgent open problems.

\begin{itemize}

\item We see currently no analytical technique to compute the Liouvillian spectrum, its gap and decay modes, i.e. to solve the full Liouvillian eigenproblem $\LL v_j = \lambda_j v_j$ in the models where NESS (fixed point) is an exactly solvable MPA.
Even in the simplest non-trivial (strongly-interacting) case of integrable NESS, e.g. in the boundary driven $XXX$ spin $1/2$ chain, we currently do not understand how to build higher decay modes $v_j$, with decay rates $\lambda_j, \Re \lambda_j < 0$ (see e.g. Refs.~\cite{P08,P10,medvedyeva} for such results on non-interacting systems). It is not even clear at present if the integrability of NESS implies that the problem of diagonalizing the full Liouvillian $\LL$ needs to be integrable.
One should  perhaps note that non-trivial statements can be made about the structure of decay mode spectrum based on rather general properties of the Liouvillian such as an analogue of the topical PT-symmetry \cite{P12b}.

\item
Alternatively, one may try to construct exact solutions for relaxation dynamics $\rho(t)$ directly in the time domain for specific non-trivial initial states $\rho(0)$, in analogy with SEP \cite{S01}. 
For example, one may devise a {\em quench protocol}, where $\rho(0)$ is an exact MPA NESS of an integrable model. Then, at $t=0$, one suddenly changes the parameter of the Hamiltonian or of the dissipator, such that the NESS of the after-quench problem remains integrable.
It is perhaps reasonable to expect that then the full dynamics $\rho(t)$ remains integrable, i.e. exactly solvable, as well.

\item So far, one is able to write exact MPA solutions of NESS only for integrable quantum chains with very specific dissipative boundary conditions, which can be phrased as a pure source and a pure sink.
More general boundary conditions can be treated only perturbively in the system-bath coupling constant.
This situation is quite different than in SEP \cite{BE07,S01} where one can typically exactly treat the most general local boundary conditions.
One perhaps needs to extend the Skylanin's concept of reflection algebra \cite{SkRA} to quantum Liouville space formalism.

\item The NESS density operator $\rho_\infty$ with exact MPA structure could be compared with its equilibrium integrable counterpart, which is the Gibbs operator $e^{-\beta H}$ where $H$ is a Hamiltonian of an integrable quantum chain.
$\log \rho_\infty$ can thus be considered as a kind of nonequilibrium integrable Hamiltonian and the eigenvalues of $\rho_\infty$ can be considered as probabilities.
More generally, $\rho_\infty$ determines the nonequilibrium thermodynamic state of the system and one may be interested in computing its observable properties.
Thus it would be desirable to have a Bethe-ansatz for diagonalization of $\rho_\infty$. In Ref.~\cite{PIP} the first step of such protocol has been outlined, i.e. the single quasi-particle spectrum of $\rho_\infty$ for boundary driven $XXX$ spin $1/2$ chain has been calculated, but problems with higher quasi-particle excitations have been identified.

\item The NESS density operator $\rho_\infty$ only entails average steady-state properties of the system. In order to access fluctuation properties, such as e.g. cumulants of the current, one needs to go beyond the master equation and consider the so-called {\em full-counting-statistics} \cite{EHM} or analogous {\em large-deviation-theory} formalism \cite{T}. It has been shown in Ref.~\cite{Mallick}, that the $k$-th cumulant problem is exactly solvable by MPA for classical ASEP for any $k$. Also, for a boundary driven quantum $XXZ$ spin $1/2$ chain we were able to compute all cumulants of the current perturbatively in the system-bath coupling strength. However, an analog quantum problem to Ref.~\cite{Mallick}, i.e. computing exact current cummulants in boundary driven $XXZ$ spin-1/2 chain, remains open.
A partial, perturbative, result in this direction has been achieved in Ref.~\cite{BP14}, where all cumulants of the current have been calculated in the lowest two orders of system-bath coupling.

\item 
All exact NESS solutions presented in this topical review refer to one-dimensional chains with ultra local dissipation acting only on the first and the last site of the chain.
It would be tempting and physically desirable to extend our nonequilibrium integrability techniques to quantum field theories with incoherent particle sources and sinks at or near the boundaries. Prime candidates are Sine-Gordon and Lieb-Liniger models, highly relevant for low-energy condensed matter or cold atom physics. The main difficulty is entailed in regularisation of the dissipator for a field theory, namely it should correspond to a source/sink of a particle with a well defined single-particle wave-function \cite{spohn} which cannot be located strictly at a point since this would correspond to infinite energy.

\item 
Motivated by numerical results of Ref.~\cite{PZ13}, showing a phase transition from ballistic to diffusive spin-transport in a classical lattice Landau-Lifshitz spin chain (see e.g. \cite{FT}) which can be considered as an integrable classical limit of the
$XXZ$ chain, one may be tempted to formulate a consistent
 classical limit for integrable nonequilibrium boundary driven models (i.e. `classical exterior integrability'). For example, one may write a boundary driven lattice Landau-Lifshitz model with Langevin noise processes attached to the end sites and attempt to construct an exact MPA for the steady state (classical NESS).
 In full analogy with the story on $XXZ$ spin $1/2$ chain, one again has a spin reversal symmetry, with respect to which all classical local conserved quantities are even, but one expects to derive new quasilocal conserved quantities with broken spin-reversal symmetry which could explain the
 ballistic classical spin transport.
 
 \item
We note that an alternative path to integrability in open nonequilibrium systems in terms of scattering state formalism, which is specially suited for {\em quantum impurity problems}, has been outlined in Ref.~\cite{MA06}.
It is not clear whether and how a link to the boundary dissipation approach discussed in this review can be made, and whether the latter could be implemented to treat integrable impurity problems.
\end{itemize}

\subsection{Conclusion}

This topical review presented a state-of-the-art (or better to say, a snap-shot in developing the theory of) exact MPA solutions of steady states of dissipatively boundary driven quantum integrable chains. An attempt to make a coherent presentation covering a variety of different integrable models under the same footing has been made. The key constructive (algebraic) methods have been identified and related to general methods of quantum integrability, such as the Lax structure and Yang-Baxter equation. However, important distinction to integrable equilibrium problems should be underlined, namely in dissipatively driven quantum chains one should consider non-unitary irreducible representations of the quantum (deformed), or Lie symmetries of the model, where the representation parameter of these algebraic structures is connected to the noise strength at the chain ends.

Apart from reviewing previously published material in a coherent and self-contained manner, this article contains also several original scientific results, the most notable being: (i) MPA for NESS in asymmetrically driven $XXZ$ chain and with arbitrary (asymmetric) transverse fields at the boundary, and (ii) exact asymptotic computation of the nonequilibrium partition function for the $XXX$ model (but also extending to other models with un-deformed Lie symmetries) which is the basis for computing nonequilibrium  thermodynamics and observables.

\section*{Acknowledgements}

I wish to thank Enej Ilievski and Vladislav Popkov, for fruitful collaboration on several essential parts of the work reviewed in this article. I also thank Berislav Bu\v ca, Hosho Katsura, Ugo Marzolino, Marko Medenjak, Keiji Saito, Gunter Sch\" utz, Frank Verstraete, and Marko \v Znidari\v c for enlightening discussions and/or collaborations on closely related topics. Finally, I acknowledge financial support by the grants
P1-0044, J1-5439 and N1-0025 of Slovenian Research Agency.

\section*{References}

\end{document}